\renewcommand{\arraystretch}{1.2}
\newdimen\normalarrayskip              
\newdimen\minarrayskip                 
\newif\ifold             \oldtrue            \def\new{\oldfalse}
\def\arraymode{\ifold\relax\else\displaystyle\fi} 
\def\eqnumphantom{\phantom{(\theequation)}}     
\def\@arrayskip{\ifold\baselineskip\z@\lineskip\z@
     \else
     \baselineskip\minarrayskip\lineskip2\minarrayskip\fi}
\def\@arrayclassz{\ifcase \@lastchclass \@acolampacol \or
\@ampacol \or \or \or \@addamp \or
   \@acolampacol \or \@firstampfalse \@acol \fi
\edef\@preamble{\@preamble
  \ifcase \@chnum
     \hfil$\relax\arraymode\@sharp$\hfil
     \or $\relax\arraymode\@sharp$\hfil
     \or \hfil$\relax\arraymode\@sharp$\fi}}
\def\@array[#1]#2{\setbox\@arstrutbox=\hbox{\vrule
     height\arraystretch \ht\strutbox
     depth\arraystretch \dp\strutbox
     width\z@}\@mkpream{#2}\edef\@preamble{\halign
\noexpand\@halignto
\bgroup \tabskip\z@ \@arstrut \@preamble \tabskip\z@ \cr}%
\let\@startpbox\@@startpbox \let\@endpbox\@@endpbox
  \if #1t\vtop \else \if#1b\vbox \else \vcenter \fi\fi
  \bgroup \let\par\relax
  \let\@sharp##\let\protect\relax
  \@arrayskip\@preamble}
\def\eqnarray{\stepcounter{equation}%
              \let\@currentlabel=\theequation
              \global\@eqnswtrue
              \global\@eqcnt\z@
              \tabskip\@centering
              \let\\=\@eqncr
 \halign to \displaywidth\bgroup
    \eqnumphantom\@eqnsel\hskip\@centering
    $\displaystyle \tabskip\z@ {##}$%
    \global\@eqcnt\@ne \hskip 2\arraycolsep
         $\displaystyle\arraymode{##}$\hfil
    \global\@eqcnt\tw@ \hskip 2\arraycolsep
         $\displaystyle\tabskip\z@{##}$\hfil
         \tabskip\@centering
    &{##}\tabskip\z@\cr}
\def\input#1 {\endgroup}\fi
\newcounter{app}
\def\app{\setcounter{equation}{0}
\def\theequation{A\Roman{app}.\arabic{equation}}\par
   \addvspace{4ex}
   \@afterindentfalse
  \secdef\@app\@dapp}
\newcommand\@app{\@startsection {app}{1}{0ex}%
                                   {-3.5ex \@plus -1ex \@minus -.2ex}%
                                   {2.3ex \@plus.2ex}%
                                   {\normalfont\Large\bf}}
\def\@dapp#1{%
{\parindent \z@ \raggedright  \bf #1}\par\nobreak}
\def\l@app#1#2{\ifnum \c@tocdepth >\z@
    \addpenalty\@secpenalty
    \addvspace{1.0em \@plus\p@}%
    \setlength\@tempdima{8.5em}%
    \begingroup
      \parindent \z@ \rightskip \@pnumwidth
      \parfillskip -\@pnumwidth
      \leavevmode \bfseries
      \advance\leftskip\@tempdima
      \hskip -\leftskip
      #1\nobreak\hfil \nobreak\hb@xt@\@pnumwidth{\hss #2}\par
    \endgroup\fi}
\newcounter{sapp}[app]
\def\sapp{\def\theequation{A\arabic{app}.\arabic{equation}}\par
   \@afterindentfalse
  \secdef\@sapp\@dsapp}
\newcommand\@sapp{\@startsection{sapp}{2}{\z@}%
                                     {-3.25ex\@plus -1ex \@minus -.2ex}%
                                     {1.5ex \@plus .2ex}%
                                     {\normalfont\large\bfseries}}
\def\@dsapp#1{%
{\parindent \z@ \raggedright  \bf #1}\par\nobreak}
\newcommand{\l@sapp}{\@dottedtocline{2}{1.5em}{3em}}
\def\draft{\oddsidemargin -.5truein
        \def\@oddfoot{\sl preliminary draft \hfil
        \rm\thepage\hfil\sl\today\quad\militarytime}
        \let\@evenfoot\@oddfoot \overfullrule 3pt
        \let\label=\draftlabel
        \let\marginnote=\draftmarginnote
   \def\@eqnnum{(\theequation)\rlap{\kern\marginparsep\tt\@eqnlabel}%
\global\let\@eqnlabel\@vacuum}  }
\def\numberbysection{\@addtoreset{equation}{section}
        \def\theequation{\thesection.\arabic{equation}}}
\newtheorem{conj}{Conjecture}[section]
\def\be{\begin{eqnarray}}
\def\ee{\end{eqnarray}}
\def\p{\partial}
\def\beq{\begin{equation}}
\def\eeq{\end{equation}}
\def\ba{\beq\new\begin{array}{c}}
\def\ea{\end{array}\eeq}
\def\be{\ba}
\def\ee{\ea}
\def\tr{{\rm tr}\,}
\def\Tr{{\rm Tr}\,}
\def\dim{{\rm dim}\,}
\def\diag{{\rm diag}\,}
\def\psistar{\psi^{*}}
\newfont{\Bbbb}{msbm7 scaled 1\@ptsize00}
\newcommand{\z}{\raise-1pt\hbox{$\mbox{\Bbbb Z}$}}
\def\normordboson{ {\scriptstyle {{*}\atop{*}}} }
\def\lbr{\left <}
\def\rbr{\right >}
\def\lvac{\left <0\right |}
\def\rvac{\left |0\right >}
\def\lvacn{\left <n\right |}
\def\rvacn{\left |n\right >}
\def\normord{ {\scriptstyle {{\bullet}\atop{\bullet}}} }
\newfont{\alef}{msbm10 at 11pt}
\newfont {\goth}{eufm10 at 11pt}
\def\mathbb#1{\hbox{{\alef #1}}}
\let\@@savethanks\thanks
\def\thanks#1{\gdef\thefootnote{\alph{footnote}}\@@savethanks{#1}}
\newtheorem{theorem}{Theorem}[section]
\newtheorem{lemma}[theorem]{Lemma}
\newenvironment{proof}[1][Proof.]{\begin{trivlist}
\item[\hskip \labelsep {\bfseries #1}]}{\end{trivlist}}
\g@addto@macro \normalsize {%
 \setlength\abovedisplayskip{14pt plus 3pt minus 3pt}%
 \setlength\belowdisplayskip{14pt plus 3pt minus 3pt}%
  \setlength\abovedisplayshortskip{11pt plus 3pt minus 3pt}%
 \setlength\belowdisplayshortskip{11pt plus 3pt minus 3pt}%
}
\title{
\bigskip
{\bf
Enumerative geometry, tau-functions and Heisenberg--Virasoro algebra} \vspace{.5cm}}
\author{{\bf A. Alexandrov}\thanks{E-mail:  {\tt alexandrovsash at gmail.com}}
\date{ } \\
{\small {\it Freiburg Institute for Advanced Studies (FRIAS), University of Freiburg, Germany \&}}\\
{\small {\it  Mathematics Institute, University of Freiburg, Germany \&}}\\
{\small {\it ITEP, Moscow, Russia}}\\
}
\begin{document}

\setcounter{footnote}{0}

\setcounter{tocdepth}{3}

\maketitle

\vspace{-8.0cm}

\begin{center}
\hfill ITEP/TH-03/14
\end{center}

\vspace{6.5cm}
\begin{abstract} 
In this paper we establish relations between three enumerative geometry tau-functions, namely the Kontsevich--Witten, Hurwitz and Hodge tau-functions. The relations allow us to describe the tau-functions in terms of matrix integrals, Virasoro constraints and Kac--Schwarz operators. All constructed operators belong to the algebra (or group) of symmetries of the KP hierarchy.
\end{abstract}

\bigskip

{Keywords: matrix models, tau-functions, enumerative geometry, KP hierarchy, Virasoro algebra}\\

{\small \bf MSC 2010 Primary: 37K10, 81R12, 81R10, 14N10; Secondary: 81T30.}

\newpage 

\tableofcontents

\def\thefootnote{\arabic{footnote}}

\section*{Introduction}
\addcontentsline{toc}{section}{Introduction}
\setcounter{equation}{0}

The generating functions of enumerative geometry constitute an important and very interesting class of tau-functions of integrable hierarchies. This intriguing paradigm remains a subject of considerable interest and a number of examples of its manifestation continuously grows. In this paper we construct new relations connecting three tau-functions from this family and derive linear constraints for all of them.

Namely, we consider the Kontsevich--Witten tau-function, the generating function of linear Hodge integrals and the generating function of the simple Hurwitz numbers. The first of them is known to be a tau-function of the KdV integrable hierarchy, while the other two are tau-functions of the KP hierarchy. In this work we prove that
\be\label{intM}
\tau_{KW}=\widehat{G}_{+}\, \tau_{Hodge}.
\ee
Here the operator $\widehat{G}_{+}$ belongs to the symmetry group $GL(\infty)$ of the KP hierarchy. Moreover, we derive an explicit expression for this operator, up to a constant factor (which is checked to be equal to unity).
Since a relation of this type, which connects the Hodge tau-function with the Hurwitz tau-function, is known \cite{Kazarian},
formula (\ref{intM}) allows us describe all three tau-functions in terms of symmetry group operators.

The approach we use is based on the description of integrable hierarchies in terms of the Sato Grassmannian, and more specifically, in terms of the Kac--Schwarz operators. This construction allows us to derive linear constraints 
\begin{gather}\label{CONST}
\widehat{J}_m^{\alpha} \,\tau_\alpha =0, \,\,\,\, \mathrm{for}\,\,\,\, m\geq 1,\\
\widehat{L}_{m}^{\alpha}\, \tau_\alpha=0, \,\,\,\, \mathrm{for}\,\,\,\, m\geq -1, \label{CONST1}
\end{gather}
where the index $\alpha$ indicates one of three considered tau-functions. All operators $\widehat{J}_m^{\alpha}$ and $\widehat{L}_{m}^{\alpha}$ belong to the $\widehat{gl(\infty)}$ algebra (more specifically, to $W_{1+\infty}$), and
satisfy the commutation relations:
\be
\left[\widehat{J}_k^{\alpha},\widehat{J}_m^{\alpha}\right]_-=0,\,\,\,\, \mathrm{for}\,\,\,\, k,m\geq 1,\\
\left[\widehat{L}_{k}^{\alpha},\widehat{J}_m^{\alpha}\right]_-=-m\widehat{J}_{k+m}^{\alpha},\,\,\,\, \mathrm{for} \,\,\,\, k\geq -1 \,\,\,\, \mathrm{and} \,\,\,\, m\geq 1, \\
\left[\widehat{L}_{k}^{\alpha},\widehat{L}_{m}^{\alpha}\right]_-=(k-m)\widehat{L}_{k+m}^{\alpha}, \,\,\,\, \mathrm{for}\,\,\,\, k,m\geq -1.
\ee
The constraints for the Kontsevich--Witten tau-function are well known, namely, in this case equations (\ref{CONST}) describe a reduction from KP to KdV, and equations (\ref{CONST1}) are the Virasoro constraints. 
Two other families of constraints (for the Hurwitz and Hodge tau-functions) are obtained explicitly for the first time (see, however, \cite{MMKH,BM}). 

Although constraints for all three tau-functions satisfy the same commutation relations, they are quite different in their form. Namely, the constraints for the Kontsevich--Witten and Hodge tau-functions are given by the first and second order differential operators, while the constraints for the Hurwitz tau-function contain all higher derivative terms.

The present paper is organized as follows. Section \ref{S1} contains material on the classical KP hierarchy, tau-functions and Kac--Schwarz operators. In Section \ref{MAIN} we prove the relation between tau-functions and derive the constraints (\ref{CONST}) and (\ref{CONST1}). Section \ref{CONC} is devoted to concluding remarks.






\newpage
\section{KP hierarchy and its symmetries}\label{S1}
In this section we give a brief introduction to the subject, for more details, see \cite{JMbook,AZ,BBT,MorUFN,Kyoto,Fukuma} and references therein.

\subsection{Free fermions}

From the works of the Kyoto school \cite{Kyoto} it is known that infinite-dimensional groups act on the spaces of solutions of the integrable hierarchies. In particular, a central extension of the group $GL(\infty)$ acts on the space of solutions of the KP hierarchy. This action is very natural in the formalism of free fermions, which we are going to sketch in this section. In particular, the elements of the algebra $\widehat{gl(\infty)}$ (a central extension of a version of the $gl(\infty)$ algebra) is naturally given by bilinear normally ordered combinations of free fermions. 

Let us introduce the free fermions  $\psi_n , \psistar_{n}$, $n\in \mathbb{Z}$, which satisfy the canonical anticommutation relations
\be\label{anti}
[\psi_n , \psi_m ]_+ = [\psistar_n, \psistar_m]_+=0, \quad
[\psi_n , \psistar_m]_+=\delta_{mn}.
\ee
They generate
an infinite dimensional Clifford algebra. We use their
generating series
\be\label{ferm0}
\psi (z)=\sum_{k\in \z}\psi_k z^k, \quad \quad
\psistar (z)=\sum_{k\in \z}\psistar_k z^{-k}.
\ee

Next, we introduce a vacuum state $\left |0\rbr$, which is
a ``Dirac sea'' where all negative mode states are empty
and all positive ones are occupied:
$$
\psi_n \rvac =0, \quad n< 0; \quad \quad \quad
\psistar_n \rvac =0, \quad n\geq 0.
$$
(For brevity, we call indices $n\geq 0$ {\it positive}.)
Similarly, the dual vacuum state has the properties
$$
\lvac \psistar_n  =0, \quad n< 0; \quad \quad \quad
\lvac \psi_n  =0, \quad n\geq 0.
$$
With respect to the vacuum $\rvac$, the operators $\psi_n$ with
$n<0$ and $\psistar_n$ with $n\geq 0$ are annihilation operators
while the operators $\psistar_n$ with $n<0$ and
$\psi_n$ with $n\geq 0$ are creation operators. 

The normal ordering $\normord (\ldots )\normord $ with respect
to the Dirac vacuum $\rvac$ is defined as 
follows: all annihilation operators
are moved to the right and all creation operators are moved to
the left, taking into account that the factor $(-1)$ appears 
each time two neighboring 
fermionic operators exchange their positions. 
For example:
$\normord  \psistar_{1}\psi_{1}\normord =
-\psi_{1} \psistar_{1}$, 
$\normord  \psi_{-1}\psi_{0}\normord =
-\psi_{0} \psi_{-1}$, 
$\normord  \psi_{2}\psistar_{1}\psi_1 \psistar_{-2}\normord =
\psi_{2}\psi_1 \psistar_{-2}\psistar_{1}$,
etc.
Under the sign of normal ordering, all fermionic 
operators $\psi_j$ and $\psistar_j$ anticommute.
In other words, it is wrong to use the commutation relations
of the Clifford algebra under the sign of normal ordering,
i.e., for example, $\normord \psistar_{1}\psi_{1}\normord 
\neq \normord (1-\psi_{1} \psistar_{1})\normord$.

We also introduce ``shifted'' Dirac vacua $\rvacn$ and $\lvacn$
defined as   
\begin{align}\label{vacdefr}
\rvacn = \left \{
\begin{array}{l}
\psi_{n-1}\ldots \psi_1 \psi_0 \rvac , \,\,\,\,\, n> 0,
\\ \\
\psistar_n \ldots \psistar_{-2}\psistar_{-1}\rvac , \,\,\,\,\, n<0,
\end{array} \right.
\end{align}
\begin{align}\label{vacdefl}
\lvacn = \left \{
\begin{array}{l}
\lvac \psistar_{0}\psistar_{1}\ldots \psistar_{n-1} , \,\,\,\,\, n> 0,
\\ \\
\lvac \psi_{-1}\psi_{-2}\ldots \psi_{n} , \,\,\,\,\, n<0.
\end{array} \right.
\end{align}
For them we have 
\begin{equation}
\begin{split}
 \psi_m \rvacn &=0, \quad m < n; 
\qquad 
\psistar_m \rvacn =0, \quad m \ge n, \\
\lvacn  \psi_{m}&=0 , \quad m \ge n; 
\qquad 
\lvacn  \psistar_{m}=0 , \quad m < n.
\end{split}
\end{equation}
and
\begin{equation}
\begin{split}
\psi_n \rvacn &= \left|n+1 \rbr,
\qquad  \psistar_n \left|n+1 \rbr = \left|n \rbr,
\\
 \lbr n+1 \right|\psi_n &= \lbr n \right|,
\qquad  
\lbr n \right|\psistar_n = \lbr n+1 \right| .
\end{split}
\end{equation}

It is useful to introduce the bare vacuum 
$\left |\infty \right >$ and totally occupied space $\left |-\infty \right >$:
\be\left
|\infty \right >=\dots\psi_{2}\psi_{1}\psi_{0}\ \rvac, \\
\left|-\infty \right >=\dots\psistar_{-3}\psistar_{-2}\psistar_{-1}\rvac.
\ee
For these states we have:
\be\label{infvac}
\rvac=\psistar_{0}\psistar_{1}\psistar_{2}\dots \left |\infty \right >,\\
\rvac=\psi_{-1}\psi_{-2}\psi_{-3}\dots \left |-\infty \right >.
\ee
With respect to the totally occupied state, all $\psistar_j$'s are the annihilation
operators while all $\psi_j$'s are the creation operators. With respect to the bare vacuum it is vice versa:
\be
\psistar_k\left|-\infty \right >=\psi_k \left|\infty \right >= 0, \,\,\,\,\, k \in \mathbb{Z}.
\ee

Bilinear combinations $\sum_{mn} B_{mn}\psistar_m \psi_n$
of the fermions, with certain conditions
on the matrix $B = (B_{mn})$, generate an
infinite-dimensional Lie algebra.\footnote{To obtain a well-defined algebra and group one have to impose some restrictions on the matrix $B$ (for example, it can have only finite number of nonzero diagonals, or only a finite number of nonzero elements below (above) the principal diagonal \cite{Bombay}). We do not impose any restrictions of this type, thus, generally speaking,  ``algebra elements'' and  ``group elements'' we consider do not belong to a well-defined algebra or group. Products or commutators of such elements can be divergent and, in principle, should be regularized. The simplest example of such regularization corresponds to the commutation relation between the current components (\ref{Heis}) (see, e.g., \cite{AZ}).  No divergences appear in our calculations, thus we ignore this subtlety in what follows.}  Exponentiating these expressions one obtains
an infinite dimensional group (a version
of $GL(\infty )$).
Sometimes 
it is more convenient to consider the algebra elements 
with normally ordered 
bilinear fermionic combinations
\be\label{algel}
X_B=\sum_{mn} B_{mn}\normord\psistar_m \psi_n\normord.
\ee
Corresponding group elements
\begin{equation}\label{gl}
G=\exp \Bigl (\sum_{i, k \in {\z }}B_{ik}\normord\psistar_i \psi_k\normord\Bigr )
\end{equation}
act on the bare vacuum and totally occupied space in a particularly simple way. Namely, these states are the eigenstates for all group elements:
\be\label{grextr}
G\left|-\infty \right >= \exp\left(\sum_{i<0} B_{ii}\right)\left|-\infty \right >,\\
G\left|\infty \right >=  \exp\left(-\sum_{i\geq 0} B_{ii}\right) \left|\infty \right >.
\ee

Excited states (over the vacuum 
$\rvac$) are obtained by filling some empty states
(acting by the operators $\psistar_j$)
and creating some holes (that is acting by the $\psi _j$'s). 
A particle 
carries the charge $-1$ while a hole 
carries the charge $+1$, so any state with a definite number 
of particles and holes has the definite charge.
Let us introduce a convenient basis
of states with definite charge in the fermionic Fock space ${\cal H}_{F}$.
The basis states $\left |\lambda , n\rbr$ are
parametrized by $n$ (the charge $n$
with respect to the vacuum state $\rvac$) and Young diagrams $\lambda$ in the following way.
Given a Young diagram $\lambda =
(\lambda_1 , \ldots , \lambda_{\ell})$ with $\ell =\ell (\lambda )$
nonzero rows, let
$(\vec \alpha |\vec \beta )=(\alpha_1, \ldots , \alpha_{d(\lambda )}|
\beta_1 , \ldots , \beta_{d(\lambda )})$ be the Frobenius notation
for the diagram $\lambda$. 
Here $d(\lambda )$ is the number of
boxes on the main diagonal and $\alpha_i =\lambda_i -i$,
$\beta_i =\lambda'_i -i$, where $\lambda'$ is the transposed
(reflected about the main diagonal) diagram $\lambda$. Then
\beq\label{lambda1}
\begin{array}{l}
\left |\lambda , n\rbr \equiv
\psistar_{n-\beta_1 -1}\ldots \psistar_{n-\beta_{d(\lambda )}\! -1}\,
\psi_{n+\alpha_{d(\lambda )}}\ldots \psi_{n+\alpha_1}\rvacn ,
\\ \\
\lbr \lambda , n \right |\equiv
\lvacn \psistar_{n+\alpha_1}\ldots \psistar_{n+\alpha_{d(\lambda )}}\,
\psi_{n-\beta_{d(\lambda )}\! -1}\ldots \psi_{n-\beta_1 -1} .
\end{array}
\eeq
For the empty diagram
$\left < \emptyset , n\right |=\lvacn$, 
$\left | \emptyset , n\right >=\rvacn$.

An operator 
\be\label{energy}
E=\sum_{k \in \z}  k \normord\psi_k \psistar_k \normord
\ee 
is the energy operator. It defines a gradation on the Clifford algebra:
\be\label{fermen}
\left[E,\psi_k\right]_-=k\psi_k,\\
\left[E,\psistar_k\right]_-=-k\psistar_k,
\ee
so that the energy of the operators $\psi_k$ and $\psistar_k$ are $k$ and $-k$ respectively. We say that an element of the Clifford algebra has positive (negative) energy, if it can be represented as a sum of monomials $a_{\vec{k}\vec{m}}\psi_{k_1}\dots \psi_{k_n}\psistar_{m_1}\dots\psistar_{n_1}$ with $\sum_{i}{k_i}-\sum_{j}m_j>0$ (respectively $<0$). Operators (\ref{algel}) with a strictly upper triangular matrix $B$ have positive energy
and annihilate the left vacuum 
\be
\lvac X_B =0.
\ee
Operators with a strictly lower triangular matrix $B$ have negative energy and annihilate the right vacuum
\be
X_B \rvac=0.
\ee

From the commutation relations
\begin{equation}\label{ferm3}
\left[X_B,\psi_n\right]=\sum_i B_{in} \psi_i,\,\,\,\,\,\, \left[X_B,\psistar_n\right]=-\sum_i B_{ni} \psistar_i,
\end{equation}
we see that
\beq
e^{X_B}\psi_n e^{-X_B}=\sum_i \psi _i \, R_{in}, \quad \quad
e^{X_B}\psistar_n e^{-X_B}=\sum_i (R^{-1})_{ni}\psistar _i,
\eeq
where $R=\exp(B)$.

Expectation values of 
group elements are the $\tau$-functions
of integrable hierarchies of nonlinear differential equations. This means that they obey an
infinite set of the Hirota bilinear equations. The tau-function of the KP hierarchy labeled by a group element (\ref{gl}) is a ratio of two correlation functions:
\begin{equation}\label{tau}
\tau_G ({\bf t})=\frac{\lvac e^{J_+ ({\bf t})}G\rvac}{\lvac G\rvac} .
\end{equation}
It depends on
the variables ${\bf t}=\{t_1, t_2, \ldots \}$, usually called times, through the
 linear combination $J_+({\bf t})=\sum_{k>0}t_k J_k$ of the operators
\beq\label{Jk}
J_k =\sum_{j\in \z}\normord \psi_j \psistar_{j+k}\normord
=\mbox{res}_z \Bigl ( z^{-1} \normord \psi (z)
z^{k}\psistar (z)\normord \Bigr ).
\eeq
They are the Fourier modes of the ``current operator''
$J(z)\equiv z^{-1}\normord \psi (z)\psistar (z)\normord $ and span the Heisenberg algebra
\beq\label{Heis}
\left[J_k, J_l\right]= k \delta_{k+l,0}.
\eeq
Operators $J_k$ with positive and negative $k$ have negative and positive energy respectively, so that
\beq
J_k\rvac=\lvac J_{-k}=0\quad \quad \quad \mbox{ for $k\geq0$}.
\eeq

Normalization (\ref{tau}) guarantees that $\tau({\bf 0})=1$. The bilinear identity
\begin{equation}\label{Hir}
\oint_{\infty} e^{\xi({\bf t-t'},z)}
\tau ({\bf t}-[z^{-1}])\tau ({\bf t'}+[z^{-1}])dz =0
\end{equation}
encode all nonlinear equations of the integrable KP hierarchy.
Here we use the standard
short-hand notations
$$
{\bf t}\pm [z]\equiv \bigl \{ t_1\pm   
z, t_2\pm \frac{1}{2}z^2, 
t_3 \pm \frac{1}{3}z^3, \ldots \bigr \}
$$
and
\be
\xi({\bf t},z)=\sum_{k>0} t_k z^k.
\ee
The first nontrivial term in the expansion of the l.h.s. of (\ref{Hir}) gives the KP equation
\be
\tau\tau_{1111}-4\tau_1\tau_{111}+3\left(\tau_{11}\right)^2+
3\tau\tau_{22}-3\left(\tau_2\right)^2
-4\tau\tau_{13}+4\tau_1\tau_3=0,
\ee
where $\tau_{i_1i_2\ldots}\equiv\frac{\p}{\p{t_{i_1}}}\frac{\p}{\p{t_{i_2}}}\ldots\tau$.
The second derivative of this equation with respect to $t_1$ gives the 
KP equation in its standard form
$$
3u_{22}=\left(4u_3-12 u u_1- u_{111}\right)_1,
$$
where $u=\frac{\p^2}{\p{t_1}^2} \log (\tau)$.
A KP tau-function, independent\footnote{Actually, a prefactor which is exponential of a linear combination of even times is allowed, as it does not affect the Hirota equations.} on even times $t_{2k}, k=1,2,\ldots$, is a tau-function of the KdV hierarchy.

The coherent
states $e^{J_-({\bf t})}\rvacn$ (where $J_-({\bf t})=\sum_{k<0} t_k J_k$) and $\lvacn e^{J_+({\bf t})}$
can be expanded
as linear combinations of the basis states $\left |\lambda , n\rbr$. 
The coefficients are the famous Schur polynomials.
This expansion is important since it provides a link 
between hierarchies of integrable equations and the theory
of symmetric functions. Given a Young diagram $\lambda = (\vec \alpha |\vec \beta )$,
one can introduce the Schur polynomials (or Schur functions) 
via the
Jacobi-Trudi formula:
\beq\label{schur2}
s_{\lambda}({\bf t})=\det_{i,j=1, \ldots , \ell (\lambda )}
h_{\lambda_i -i +j}({\bf t}),
\eeq
where $h_k({\bf t})$  are
the Schur polynomials for one-row diagrams
\be\label{elSchur}
\exp \left(\xi({\bf t},z)\right)= \sum_{k=1}^\infty h_k({\bf t}) z^k.
\ee
It holds
\beq\label{lambda2}
\displaystyle{
e^{J_-({\bf t})}\rvacn = \sum_{\lambda} (-1)^{b(\lambda )}
s_{\lambda}({\bf t})\left |\lambda , n\rbr},
\eeq
\beq\label{lambda3}
\displaystyle{
\lvacn e^{J_+({\bf t})}=\sum_{\lambda} (-1)^{b(\lambda )}
s_{\lambda}({\bf t})\lbr \lambda , n \right |},
\eeq
where 
the sums run over all Young diagrams $\lambda$ including the
empty one, and we have introduced $b(\lambda )=\sum_{i=1}^{d(\lambda )}(\beta_i +1)$. For the empty diagram
$s_{\emptyset}({\bf t})=1$, $b(\emptyset )=0$.

The Baker--Akhiezer (BA)
function and its adjoint are given by the Sato formulas
\begin{equation}\label{BA1}
\psi ({\bf t},z)= e^{\xi ({\bf t},z)}
\frac{\tau ({\bf t}-[z^{-1}])}{\tau ({\bf t})}
=\frac{\left<1\right| e^{J_{+}({\bf t})}
\psi (z) G\rvac}{\lvac e^{J_{+}({\bf t})}G\rvac},
\end{equation}
\begin{equation}\label{BA1a}
\psi^* ({\bf t},z)= e^{-\xi ({\bf t},z)}
\frac{\tau ({\bf t}+[z^{-1}])}{\tau ({\bf t})}
=\frac{\left<-1\right|  e^{J_{+}({\bf t})}
\psistar (z) G\rvac}{z\lvac e^{J_{+}({\bf t})}G\rvac}.
\end{equation}
In terms of the BA function and its adjoint, the
bilinear identity (\ref{Hir}) acquires the form
\begin{equation}\label{bi1b}
\oint _{{\infty}} \psi({\bf t},z)\psi^{*}({\bf t'},z)\, dz =0.
\end{equation}

\subsection{$W_{1+\infty}$ algebra}

The boson-fermion correspondence allows us to represent a bilinear combination of the fermions in terms of the bosonic operators $J_k$:
\be\label{ferm}
\normord \psi (y ) \psistar (z)\normord=\frac{z}{y-z}\Bigl (\normordboson 
e^{\phi (y )-\phi (z)}\normordboson \, - 1\Bigr ),
\ee
where
\be\label{field}
\phi(z)=\sum_{k=1}^\infty k^{-1}\left({J_{-k}} z^k -{J_k}z^{-k} \right)+J_0\log z+P.
\ee
Operators $J_0$ and $P$ commute with all other components $J_k$ and satisfy the commutation relation
$$
\left[J_0,P\right]_-=1.
$$
Since the operator $P$ drops out of (\ref{ferm}), both $J_0$ and $P$ play no role for our consideration of the KP hierarchy and can be ignored in what follows (but they are important for MKP or Toda hierarchies). The normal ordering for bosonic operators $\normordboson\dots\normordboson$ puts all $J_k$ with positive $k$ to the right of all $J_k$ with negative $k$. 

An expansion of (\ref{ferm}) for $y=z+\epsilon$ with small $\epsilon$ yields the algebra $W_{1+\infty}$, which is an important subalgebra in $\widehat{gl(\infty)}$: 
\be
\normordboson e^{\phi (z+\epsilon )-\phi (z)}\normordboson =1+\sum_{m=1}^\infty \frac{\epsilon^{m}}{(m-1)!}\,
\widetilde{W}^{(m)}(z)
\ee
so that the operators $\widetilde{W}^{(m)}(z)=\sum_{n=-\infty}^\infty z^{-n-m} \widetilde{W}^{(m)}_n $ are just normal ordered elementary Schur functions (\ref{elSchur})
of the variables $p_k=\frac{1}{k!} \p^k_z \phi(z)$, where $\p_z\equiv \frac{\p}{\p z}$
\be
\widetilde{W}^{(m)}(z)=(m-1)! \normordboson h_m({\bf p})\normordboson.
\ee
The same expansion of (\ref{ferm}) allows us to identify
\beq
\widetilde{W}^{(m+1)}(z)=- z^{-1}\normord  \psistar (z) \p_z^{m} \psi (z )\normord,
\eeq
or, in terms of the bosonic current,
\be
\widetilde{W}^{(m+1)}(z)=\frac{1}{m+1}\normordboson 
\left(J(z)+\p_z \right)^m J(z)\normordboson. 
\ee
From (\ref{fermen}) follows that operator
\be
\widetilde{W}^{(m+1)}_n=-\mbox{res}_z \Bigl ( z^{-1} \normord \psistar (z)
z^{m+n}\p^{m}\psi (z)\normord \Bigr )
\ee
has energy $-n$:
\be
\left[E,\widetilde{W}^{(m)}_n\right]_-=-n\, \widetilde{W}^{(m)}_n.
\ee

Commutation relations of the operators $\widetilde{W}^{(m)}_n$ with the fermionic fields (\ref{ferm0}) follow form the more general relations (\ref{ferm3}):
\be\label{commu}
\left[\widetilde{W}^{(m+1)}_n,\psi (z )\right]_- =z^{m+n} \p_z^m \psi(z),\\ 
\left[\widetilde{W}^{(m+1)}_n,\psistar (z )\right]_- =- z (-\p_z )^m z^{m+n-1}\psistar(z).
\ee

Sometimes it is more convenient to use another basis for the same $W_{1+\infty}$ algebra \cite{Fukuma}. Namely, we put 
\be\label{Wcor}
W^{(1)}(z)=\widetilde{W}^{(1)}(z)=\p_z \phi(z),\\
W^{(2)}(z)=\widetilde{W}^{(2)}(z)-\frac{1}{2}\p_z \widetilde{W}^{(1)}(z)=\frac{1}{2}\normordboson\left(\p_z \phi(z)\right)^2\normordboson,\\
W^{(3)}(z)=\widetilde{W}^{(3)}(z)-\p_z \widetilde{W}^{(2)}(z)+\frac{1}{6}\p_z^2\widetilde{W}^{(1)}(z)=\frac{1}{3}\normordboson\left(\p_z \phi(z)\right)^3\normordboson,\\
\dots
\ee

Algebra $W_{1+\infty}$ is a central extension of the algebra $w_{1+\infty}$ of diffeomorphisms on the circle. There are many different ways to identify elements
\be\label{ordd}
a=\sum_{i\in\z,j\geq0} a_{i,j} z^i \p_z^j
\ee 
of $w_{1+\infty}$ with operators from
$W_{1+\infty}$.  We identify an operator (\ref{ordd}) with
\be\label{aoper}
W_a\equiv \mbox{res}_z \left(z^{-1} \normord\psistar(z) a\, \psi(z)  \normord\right)
\ee
so that 
\be
\widetilde{W}^{(m)}_n=W_{-z^{m+n-1} \p_z^{m-1}}.
\ee
From (\ref{ferm3}) it follows that
\be\label{wvsa}
\left[W_a,\psi(z)\right]_-=-a\,\psi(z),\\
\left[W_a,\psistar(z)\right]_-=a^*\psi(z),
\ee
where $a^*\in w_{1+\infty}$ is the adjoint operator for which an identity
$$ 
\mbox{res}_z \left(z^{-1} f(z)\, a\, g(z)  \right)
=\mbox{res}_z \left(z^{-1} g(z)\, a^*\, f(z) \right)
$$
holds for any commuting $f(z)$ and $g(z)$ (in particular,  $(z^k\p_z^m)^*=z(-\p_z)^m z^{k-1}$).

The algebra $W_{1+\infty}$ is a central extension of $w_{1+\infty}$, thus
\be
\left[W_{a_1},W_{a_2}\right]_-=W_{[a_1,a_2]_-} +C_{a_1,a_2},
\ee
where $C_{a_1,a_2}$ is a central element commuting with all other operators in $W_{1+\infty}$. This term can be effectively described by the commutation relations for the generating functions (for example, functions of this type were considered in \cite{OP}):
\be\label{Rdif}
R_n(q)\equiv W_{z^n q^{z\p_z}} =\mbox{res}_z \left(z^{-1} \normord\psistar(z) z^n q^{z \p_z} \psi(z)  \normord\right).
\ee
These generating functions can be represented in terms of the bosonic field (\ref{field}) as follows:
\be
R_n(q)=\frac{1}{1-q} \left(\mbox{res}_z\left(z^{n-1}\normordboson 
e^{\phi (qz )-\phi (z)}\normordboson\right) - \delta_{n,0}\right).
\ee
An operator $R_n$ is an infinite linear combination of operators $\widetilde{W}^{(m)}_n$:
\be
R_n(e^\epsilon)=-\left(\widetilde{W}^{(1)}_n+ \epsilon\,\widetilde{W}^{(2)}_n+\frac{\epsilon^2}{2}(\widetilde{W}^{(3)}_n+\widetilde{W}^{(2)}_n)+\cdots\right).
\ee
For $pq\neq1$ the commutation relation between operators $R_n$ can be found by a direct calculation:
\be\label{Phicom}
\left[R_n(q),R_m(p)\right]_-=\left(q^m-p^n\right)\left(R_{n+m}(qp)+ \delta_{m+n,0}\frac{1}{1-qp}\right),
\ee
which, for $p=q^{-1}$ reduces to
\be\label{comcoin}
\left[R_n(q),R_m(q^{-1})\right]_-=\left(q^{-n}-q^m\right)J_{n+m}+n q^{-n}\delta_{m+n,0}.
\ee
Operators $R_n$ are important for the description of the Virasoro constraints for the Hurwitz tau-function in Section \ref{KSHH}.

\subsection{Heisenberg, Virasoro, and ${W}^{(3)}$ algebras}

The operators $W^{(k)}(z)$ for $k=1,2,3$ are particularly important for our construction. Let us consider them in more detail. Operator $W^{(1)}(z)$ coincides with the current $J(z)$:
\be
{W}^{(1)}(z)=\p_z {\phi} (z)=J(z)= \sum_{m \in \z} \frac{J_m}{z^{m+1}}
\ee
and its components 
\be\label{OP1}
J_k=W_{-z^k}
\ee
span the Heisenberg algebra (\ref{Heis}).

The operator
\begin{equation}
W^{(2)}(z)=\frac{1}{2}\normordboson \left(\p_z {\phi} (z)\right)^2\normordboson = \sum_{m\in \z} \frac{{L}_m}{z^{m+2}}
\end{equation}
generates the Virasoro algebra $Vir \subset W_{1+\infty}$ with central charge $c=1$.
This algebra is spanned by the operators
\be\label{OP2}
L_n=W_{-z^n\left(z\p_z+\frac{n+1}{2}\right)}
\ee
satisfying the commutation relations
\be\label{Vircom}
[L_k,L_m]_-=(k-m){L}_{k+m}+\frac{1}{12}\delta_{k,-m}(k^3-k).
\ee
The operator $L_0$ coincides with the energy operator (\ref{energy}). 

We introduce two subalgebras $Vir_\pm$ of the Virasoro algebra, spanned by $L_{k}$ with strictly positive and strictly negative indices. We will say that an operator belongs to the group ${\bf Vir}_{+}$ (${\bf Vir}_{-}$), if it is of the form $\exp \left(V \right)$ with $V\in Vir_{+}$ ($V\in Vir_{-}$).

Operators $L_k$ together with $J_k$ span the so-called Heisenberg--Virasoro algebra ${\mathcal V}$. Commutation relations of ${\mathcal V}$ are given by (\ref{Heis}), (\ref{Vircom}) and
\be\label{Vircur}
[L_k,J_m]_-=-m J_{k+m}.
\ee
In what follows, we also consider the Heisenberg--Virasoro group with elements of the form
\be
C \exp\left(\sum_{k \in \z} \left(a_k L_k +b_k J_k\right)\right),
\ee
where $C$ is a constant. 

The subalgebra ${\mathcal V}_+ \subset{\mathcal V}$, spanned by $J_k$ with $k\geq 1$ and $L_m$ with $m\geq -1$, has no central terms in the commutation relations (\ref{Heis}), (\ref{Vircom}) and (\ref{Vircur}). This subalgebra describes the sets of constraints for the tau-functions considered in Section \ref{MAIN}.

The operator  
\be
W^{(3)}(z)=\frac{1}{3}\normordboson \left(\p_z\phi(z)\right)^3\normordboson=\sum_{n\in \z}\frac{{M}_n}{z^{n+3}},
\ee
where
\be\label{OP3}
M_n=W_{-z^n\left(z^2 \p_z^2 +(n+2)z\p_z+\frac{1}{6}(n+1)(n+2)\right)},
\ee
together with $W^{(2)}(z)$ generates the $W^{(3)}$-algebra introduced in \cite{W3}.

From the commutation relations (\ref{Heis}) we have
\be
\lvac e^{J_+({\bf t})} J_k =
\begin{cases}
\displaystyle{\frac{\p}{\p t_k} \lvac e^{J_+({\bf t})}}\,\,\,\,\,\,\,\,\,\,\, \mathrm{for} \quad k>0,\\[15pt]
\displaystyle{0}\,\,\,\,\,\,\,\,\,\,\,\,\,\,\,\,\,\,\,\,\,\,\,\,\,\,\,\,\,\,\,\,\,\,\,\,\,\,\,\,\,\,\mathrm{for} \quad k=0,\\[15pt]
\displaystyle{-kt_{-k} \lvac e^{J_+({\bf t})}}\,\,\,\,\,\mathrm{for} \quad k<0.
\end{cases}
\ee
This implies that for any operator $W$, which is a combination of the bosonic current modes (\ref{Jk}) (in particular, for any operator from $W_{1+\infty}$) there exists an operator $\widehat{W}$, which acts in the space of functions of the times $t_k$, such that:
\beq
\widehat{W} \, \lvac e^{J_+({\bf t})} = \lvac e^{J_+({\bf t})} W. 
\eeq
Namely, we put
\be
\widehat{J}_k =
\begin{cases}
\displaystyle{\frac{\p}{\p t_k} \,\,\,\,\,\,\,\,\,\,\,\, \mathrm{for} \quad k>0},\\[15pt]
\displaystyle{0}\,\,\,\,\,\,\,\,\,\,\,\,\,\,\,\,\,\,\, \mathrm{for} \quad k=0,\\[15pt]
\displaystyle{-kt_{-k} \,\,\,\,\,\mathrm{for} \quad k<0,}
\end{cases}
\ee
so that
\be\label{boscur}
\widehat{J}(z)=\sum_{k=1}^\infty \left(k t_k z^{k-1}+\frac{1}{z^{k+1}}\frac{\p}{\p t_k}\right).
\ee

This identification allows us to represent the transformation given by the group multiplication
\be\label{groupmpl}
\tau_H ({\bf t}) \mapsto \tau_{GH} ({\bf t}),
\ee
in terms of the operators, acting on the functions of the times $t_k$ (at least for $G$, which is a group element of the $W_{1+\infty}$ algebra). From the definition (\ref{tau}) it follows that  
\be
\tau_{e^{W}H}({\bf t})=C e^{\widehat{W}}\tau_H({\bf t}),
\ee
where $C=\frac{\lvac H\rvac}{\lvac e^{W}H\rvac}$ is a constant. This constant is equal to unity if the operator $W$ has positive energy.

For example, for $W=\sum_{k<0}  a_k J_k$ and an arbitrary group element $H$ we have
\be
\tau_{e^WH}({\bf t})= \exp\left(\sum_{k>0}k a_{-k} t_k\right)\tau_{H}({\bf t}). 
\ee
For $W=\sum_{k>0}  a_k J_k$ we have
\be
\tau_{e^WH}({\bf t})=\frac{\lvac H\rvac}{\lvac e^{W}H\rvac}  \times \tau_{H}({\bf t+a}). 
\ee

The Virasoro subalgebra of $W_{1+\infty}$ is generated by the operators
\begin{equation}
\label{virfull}
\widehat{L}_m=\frac{1}{2} \sum_{a+b=-m}a b t_a t_b+ \sum_{k=1}^\infty k t_k \frac{\p}{\p t_{k+m}}+\frac{1}{2} \sum_{a+b=m} \frac{\p^2}{\p t_a \p t_b}, 
\end{equation}
which are counterparts of (\ref{OP2}).
For the $W^{(3)}$ algebra we have
\begin{multline}
\widehat{M}_k=\frac{1}{3}\sum_{a+b+c=-k}a\, b\, c\, t_a\, t_b\, t_c+\sum_{c-a-b=k}a\, b\, t_a\, t_b\, \frac{\p}{\p t_{c}}\\
+\sum_{b+c-a=k}a\, t_{a}\frac{\p^2}{\p t_b\p t_c}+\frac{1}{3}\sum_{a+b+c=k}\frac{\p^3}{\p t_a \p t_b \p t_c}.
\end{multline}
In particular,
\be
\widehat{{M}}_0=\sum_{i,j\geq 1} \left(i\, j\, t_i\,t_j\frac{\p}{\p t_{i+j}} +(i+j)t_{i+j}\frac{\p^2}{\p t_i \p t_j}\right)
\label{CAJ}
\ee
is the cut-and-join operator of the Hurwitz tau-function \cite{HHK,HHK1}.


\subsection{Miwa parametrization and Grassmannian}\label{MiwaGrass}

The Miwa parametrization is very convenient for various problems, in particular for matrix models. There are different ways to introduce the Miwa parametrization. All of them are combinations of four basic possibilities, corresponding to the sign combinations in
\be
t_k=\pm\frac{1}{k}  \Tr Z^{\pm k}.
\ee
Here $Z$ is a diagonal matrix $\diag(z_1,z_2,\ldots,z_N)$ for some finite $N$ and the signs are independent.
Let us start with the Miwa parametrization
\be
t_k=-\frac{1}{k}  \Tr Z^{-k}.
\ee
A tau-function in this parametrization is
\be
\tau\left(-\left[Z^{-1}\right]\right)\equiv\tau\left(t_k=-\frac{1}{k}\tr Z^{-k}\right).
\ee

From the boson-fermion correspondence it follows that a tau-function in this parametrization is given by the correlation functions
\be
\tau\left(-\left[Z^{-1}\right]\right)= \frac{\lvac e^{-J_+ ([z_1^{-1}])-\ldots-J_+ ([z_N^{-1}])}G\rvac}{\lvac G\rvac} \\
= \frac{\lvac\psistar_{0}\ldots\psistar_{N-1}\psi(z_1)\ldots\psi(z_N)  G\rvac}{\lvac G\rvac \prod_{i<j}(z_i-z_j)}. 
\ee
The Wick theorem (see, e.g., \cite{AZ}) allows us to rewrite this fermionic correlation function as a ratio of determinants:
\be
\tau(-\left[Z^{-1}\right])=\frac{\det_{i,j=1}^N \varphi_i(z_j)}{\Delta(z)}, 
\ee
where
\be\label{varphit}
\varphi_i(z)=\frac{\lvac \psistar_{i-1} \psi(z)  G\rvac}{\lvac G\rvac},
\ee
and
\be
\Delta(z)=\prod_{i<j}(z_j-z_i)
\ee
is the Vandermonde determinant.

From the anticommutation relations (\ref{anti}) it follows that
\be
\left [ \psistar_{i-1},  \psi(z)\right ]_+= z^{i-1},  
\ee
so that the functions $\varphi_i(z)$, which are usually called {\it basis vectors}, have the following expansion
\be\label{basis}
\varphi_i(z)=z^{i-1} - \frac{\lvac  \psi(z)  \psistar_{i-1} G\rvac}{\lvac G\rvac} =z^{i-1}+\sum_{k=1}^\infty \varphi_{i,k} z^{-k},
\ee
where
\be\label{basis1}
\varphi_{i,k}=\frac{\lvac \psistar_{i-1} \psi_{-k}  G\rvac}{\lvac G\rvac}. 
\ee

The set $\{\varphi_i(z)\}$ defines a subspace $\mathcal{W}$ of an infinite dimensional Grassmannian. Corresponding theory was introduced by M. Sato in \cite{Sato} and further developed by G. Segal and G. Wilson in \cite{Segal}. Convergence of a tau-function as a function of times is not important for us (we consider a tau-function as a formal series in times), thus,
we will focus on Sato's version of the construction. Let us consider the space $H=H_+\oplus H_-$, where the subspaces $H_-$ and $H_+$ are generated by negative and nonnegative powers of $z$ respectively. Then the Sato Grassmannian $\rm{Gr}$ consists of all closed linear spaces $\mathcal{W}\in H $, which are compatible with $H_+$. Namely, an orthogonal projection $\pi_+ : \mathcal{W} \to H_+ $ should be a Fredholm operator, i.e. both the kernel ${\rm ker}\, \pi_+ \in \mathcal{W}$ and the 
cokernel ${\rm coker}\, \pi_+ \in H_+$ should be finite-dimensional vector spaces. 
The Grassmannian $\rm{Gr}$ consists of components $\rm{Gr}^{(k)}$, parametrized by an index of the operator $\pi_+$. We need only the component $\rm{Gr}^{(0)}$, which corresponds to the Dirac vacuum $\rvac$. Moreover, we will consider only the big cell $\rm{Gr}^{(0)}_+$ of $\rm{Gr}^{(0)}$, which is defined by the constraint ${\rm ker}\, \pi_+ = {\rm coker}\, \pi_+=0$.

For any $\mathcal{W}$ from $\rm{Gr}^{(0)}_+$ there exists a basis $\{\varphi_i(z)\}$ such that the matrix relating $\{\pi_+(\varphi_i(z))\}$ with $z^i$ has a well-defined determinant. Any such basis we call {\it admissible}. It can be always transformed to the basis of the form 
\be\label{GB}
\varphi_i(z)=z^{i-1} + \sum_{k>1-i}\varphi_{i,k} z^{-k}.
\ee
Obviously, for each $\mathcal{W}$ there exists a unique basis for which $\varphi_{i,k}=0$ for $k<1$ (that is the basis of the form (\ref{basis})). This basis is called {\it canonical}. 

Let us denote a point of $\rm{Gr}^{(0)}_+$, corresponding to a group element $G$, by $\mathcal{W}_G$. Then, the Baker--Akhiezer function (\ref{BA1}) belongs to the space $\mathcal{W}_G$ for all values of ${\bf t}$, for which the corresponding tau-function is not equal to zero \cite{Segal}. Then, from the expansion (\ref{lambda3}) it follows that
\be
\left<\lambda, 1\right|\psi(z) G \rvac \in \mathcal{W}_G 
\ee
for arbitrary $\lambda$. Thus
 \be\label{arbX}
 \lvac X \psi(z) G \rvac \in \mathcal{W}_G 
 \ee
for any $X$ from the Clifford algebra such that the correlation function does not vanish.

For any group element $G=\exp \Bigl (\sum_{i, k \in {\z }}B_{ik}\normord\psistar_i \psi_k\normord\Bigr )$ the matrix (\ref{basis1}) defines a group element, which is equivalent to $G$ in the following sense. Consider
\be\label{cangr}
\widetilde{G} =\exp\left(\sum_{i,k=1}^\infty \varphi_{i,k}  \psistar_{-k} \psi_{i-1} \right),
\ee
where $\varphi_{i,k}$ are the coefficients of the canonical basis (\ref{basis}).
It can be shown (see, e.g., (3.37) in\cite{AZ}) that
\be\label{ntrvac}
G\rvac = \lvac G\rvac \times \widetilde{G} \rvac.
\ee
Relation (\ref{grextr}) implies that $\widetilde{G}\left|\infty\right >=\left|\infty\right >$, thus
\beq
\widetilde{G} \rvac = \widetilde{\psistar_{0}}\,\widetilde{\psistar_{1}}\,\widetilde{\psistar_{2}} \dots \left |\infty\right >,
\eeq
where
\beq\label{fermtr}
\widetilde{\psistar_{k}}\equiv\widetilde{G} \psistar_{k} \widetilde{G}^{-1} =  \psistar_{k} +\sum_{i=1}^\infty \varphi_{k+1,i}\psistar_{-i}=\mbox{res}_z \left(z^{-1}\varphi_{k+1}(z) \psistar(z)\right).
\eeq
We see that the fermionic operators $\widetilde{\psistar_{k}}$, which describe the state $G\rvac$, are defined by the canonical basis vectors (\ref{varphit}).

Let us consider an operator $W_a\in W_{1+\infty}$, related to some differential operator $a\in w_{1+\infty}$ by (\ref{aoper}). Then, an action of the operators from the algebra $W_{1+\infty}$ as well as the corresponding group elements on the space of tau-functions can be translated to the action of algebra elements $w_{1+\infty}$ and corresponding group elements on the Grassmannian \cite{Fukuma}. Indeed, relation (\ref{ntrvac}) yields
\be
e^{W_a} G\rvac = \lvac G\rvac\times e^{W_a}\widetilde{G}\rvac, 
\ee
where
\be
e^{W_a}\widetilde{G} \rvac =\widetilde{\psistar_{0}}[a]\,\widetilde{\psistar_{1}}[a]\,\widetilde{\psistar_{2}}[a]  \dots e^{W_a}\left |\infty\right >,
\ee
and, for any operator $\Psi$ we define
\be
\Psi[a]\equiv e^{W_a}\Psi e^{-W_a}.
\ee
From (\ref{grextr}) it follows that $e^{W_a}\left |\infty\right >$ is proportional to $\left |\infty\right >$. Taking into account (\ref{fermtr}) and (\ref{wvsa}) we have
\be
\widetilde{\psistar_{k}}[a]\equiv e^{W_a}\,\widetilde{\psistar_{k}}\,e^{-W_a} = \mbox{res}_z \left(z^{-1}\varphi_{k+1}(z) e^{W_a}\psistar(z)e^{-W_a}\right)\\
=\mbox{res}_z \left(z^{-1}\varphi_{k+1}(z) e^{a^*}\psistar(z)\right)\\
=\mbox{res}_z \left(z^{-1}  \psistar(z) e^a \varphi_{k+1}(z)\right).
\ee
This observation justifies the identification (\ref{aoper}) between two types of operators.

We see that the action of the group element $e^{W_a}$ is equivalent to the action of the operator 
$e^a$ on the set of basis vectors. The problem is that, in general, for operators $W_a$, which have components with non-positive energy, the vectors $e^a \varphi_n$ are not of the form (\ref{GB}), but are Laurent series infinite in both directions. In spite of this difficulty, these vectors can sometimes constitute an admissible basis.

This can also be shown in a slightly different way. Namely, let us denote by $\varphi^n_i(z)$ the canonical basis vectors corresponding to the group element $e^{W_a}G$. Then, by definition,
\be
\varphi^{n}_i(z)\equiv\frac{\lvac \psistar_{i-1} \psi(z)  e^{W_a} G\rvac}{\lvac e^{W_{a}} G\rvac}
=\frac{\lvac  \psistar_{i-1} e^{W_{a}}  \psi(z)[-a]\,  G\rvac}{\lvac e^{W_{a}} G\rvac},
\ee
where, from (\ref{wvsa}) we have
$$
\psi(z)[-a]\equiv e^{-W_a}\psi(z)e^{W_a}= e^a \psi(z). 
$$
Thus,  
\be
\varphi^{n}_i(z)= e^a \lvac X \psi(z) G \rvac
\ee
for 
$$
X=\frac{  \psistar_{i-1}e^{W_{a}}}{\lvac e^{W_{a}} G\rvac}. 
$$
From (\ref{arbX}) it follows that $\mathcal{W}_{e^{W_a}G} \subset e^a \mathcal{W}_G$. On the other hand
\be
e^a \varphi_i(z)=\frac{\lvac \psistar_{i-1} e^{-W_a} \psi(z)  e^{W_a} G\rvac}{\lvac G\rvac} =\lvac \widetilde{X} \psi(z)  e^{W_a} G \rvac
\ee
for 
$$
\widetilde{X}=\frac{  \psistar_{i-1}e^{-W_{a}}}{\lvac  G\rvac},
$$
so that $e^a \mathcal{W}_G \subset   \mathcal{W}_{e^{W_a}G}$. This means that $e^a \mathcal{W}_G$ coincides with $\mathcal{W}_{e^{W_a}G}$.

We have seen that the vectors $e^a \varphi_i(z)$ belong to $\mathcal{W}_{e^{W_a}G}$. Do they constitute an admissible basis? The answer depends on the energy of the operator $W_a$. If it has positive energy, then 
\be\label{newvec}
e^a \varphi_i(z)=\frac{\lvac \psistar_{i-1} e^{-W_a} \psi(z)  e^{W_a} G\rvac}{\lvac G\rvac}=\frac{\lvac \psistar_{i-1}[a] \psi(z)  e^{W_a} G\rvac}{\lvac e^{W_a} G\rvac}=\sum_{k=1}^{i} \gamma_k^i \varphi^n_k(z),
\ee
where 
\be
\gamma_k^i=\mbox{res}_z \left(z^{-k} e^a z^{i-1}\right).
\ee 
We see that $\gamma_i^i=1$, so (\ref{newvec}) are basis vectors, but not necessary the canonical ones. 
For $W_a$ with energy equal to zero we have the same expression as in (\ref{newvec}), but only diagonal elements of the matrix $\gamma_k^i$ are not equal to zero. Thus, $e^a \varphi_i(z)$ in this case is proportional to the basis vector $\varphi_i^n(z)$ with the coefficient of proportionality 
\be
\gamma_i^i=\mbox{res}_z \left(z^{-i} e^a z^{i-1}\right).
\ee
If the operator $W_a$ has components of negative energy then, in general, the vectors $e^a \varphi_i(z)$ do not constitute an admissible basis.


For another sign convention in the Miwa parametrization: 
\be
t_k=\frac{1}{k} \Tr Z^{-k}
\ee
 a tau-function is again given by the ratio of determinants
\be
\tau\left(\left[Z^{-1}\right]\right)= \frac{\lvac e^{J_+ ([z_1^{-1}])+\cdots+J_+ ([z_N^{-1}])}G\rvac}{\lvac G\rvac} \\
=\frac{\lvac\psi_{-1}\ldots\psi_{-N}\psistar(z_N)\ldots\psistar(z_1)  G\rvac}{ \det (Z) \lvac G\rvac \Delta(z)}\\
= \frac{\det_{i,j=1}^N \varphi_i^*(z_j)}{\Delta(z_k)}, 
\ee
where
\be\label{varphitcon}
\varphi_i^*(z)=z^{-1}\times\frac{\lvac \psi_{-i} \psistar(z)  G\rvac}{\lvac G\rvac}. 
\ee
These basis vectors define the orthocomplement $\mathcal{W}^\perp$ of the subspace $\mathcal{W}$. Obviously, they can also be expressed in terms of the matrix $\varphi_{i,k}$
\be
\varphi_i^*(z)=z^{i-1}- \sum_{k=1} \varphi_{k,i} z^{-k} 
\ee
and the adjoint BA function (\ref{BA1a}) belongs to $\mathcal{W}^\perp$. Repeating the argument for $\varphi_k(z)$ one can show that the action of the operator $\exp(W_a)$ on the space of tau-functions is equivalent to the action of the operator $\exp(-z^{-1}a^*z)$ on the sets of the adjoint basis vectors $\varphi_{k}^*(z)$.

In the next chapter we will work with the Miwa parametrization, which uses an inverse matrix variable
\be\label{MainMiwa}
t_k= \frac{1}{k}\Tr Z^k. 
\ee
We denote by $\Phi_i(z)$ the basis vectors in this parametrization, namely
\be
\Phi_i(z)\equiv \varphi_i^*(z^{-1})= z^{1-i}+\dots
\ee
For this choice of the sign convention we have the following expression for a tau-function:
\be
\tau\left(\left[Z\right]\right)=\frac{\det_{i,j} \Phi_i(z_j)}{\Delta(z^{-1})}.
\ee
Corresponding identification between operators from $W_{1+\infty}$ and operators from $w_{1+\infty}$, acting on families of the basis vectors $\Phi_i(z)$, is
\be\label{basop}
W_{z^k \p_z^m} \mapsto - (z^2\p_z)^{m} z^{-k}.
\ee
This allows us to introduce $W_{1+\infty}$ operators, which are the counterparts of the $w_{1+\infty}$ operators, acting in this Miwa parametrization. For operators (\ref{basop}) we define
\be\label{Yquant}
{Y}_{- (z^2\p_z)^{m} z^{-k}}\equiv {W}_{z^k \p_z^m}.
\ee
Thus, action of the operator $a\in w_{1+\infty}$ on the set of basis vectors $\Phi_k(z)$  is equivalent to the action of the operator
\be\label{Ydif}
Y_a=\mbox{res}_z \left(z^{-1} \normord\psi\left(z^{-1}\right)\, z^{-1}\, a\, z\,\psistar(z^{-1})  \normord\right)
\ee
from the algebra $W_{1+\infty}$ on the group element. The corresponding bosonic operators are defined by
\be\label{obdifgen}
\widehat{Y}_{\left(z^2\p_z \right)^m z^{k}}=\mbox{res}_z\left(z^{-k}\normordboson \frac{(\widehat{J}(z)+\p_z)^{m}}{m+1} \widehat{J}(z)\normordboson \right).
\ee


In particular, in this parametrization the operators (\ref{OP1}), (\ref{OP2}) and (\ref{OP3}) correspond to the following operators from $w_{1+\infty}$\footnote{The operators used in \cite{Kazarian} are related to our operators by a conjugation $w= z\,w^{Kaz} z^{-1}$, just because M. Kazarian uses another normalization of the basis vectors, $\Phi_k= z\,\Phi_k^{Kaz}$.}
\be\label{obdif}
J_k=Y_{j_k} \leftrightarrow j_k=z^{-k},\,\,\,\,\,\,\,\, k\neq 1,\\ 
L_k=Y_{l_k} \leftrightarrow l_k=z^{-k}\left(z \p_z-\frac{k+1}{2}\right),\\
M_k=Y_{m_k} \leftrightarrow m_k=z^{-k}\left(z^2 \p_z^2-k z \p_z +\frac{(1+k)(2+k)}{6}\right).
\ee
A constant from $w_{1+\infty}$ corresponds to the operator $J_0$, which we can identify with zero as far as we consider the KP hierarchy.

\subsection{Kac--Schwarz operators}\label{KSsec}

Let $a\in w_{1+\infty}$ be an operator such that
\be\label{KS1}
a \, {\mathcal W} \subset {\mathcal W}
\ee
for some point of the Grassmannian $\rm{Gr}^{(0)}_+$ . Then, for the corresponding tau-function it holds that
\be\label{KStau}
\widehat{W}_{a}\tau= C\, \tau
\ee
for some constant $C$. Indeed,
\be\label{KS}
\widehat{W}_{a}\tau=\frac{\lvac e^{J_+(\bf t)} W_a G \rvac}{\lvac  G \rvac}
\ee
and, from (\ref{ntrvac}) we have
\begin{multline}
W_a G \rvac = \lvac G\rvac  \times W_a \widetilde{G}\rvac = \lvac G\rvac \,
\Big(\sum_{i=0}^\infty \widetilde{\psistar_{0}}\widetilde{\psistar_{1}}\dots\widetilde{\psistar_{i-1}} \left[W_a,\widetilde{\psistar_{i}}\right]_- \widetilde{\psistar_{i+1}}  \dots \left |\infty\right >\\
+ \widetilde{\psistar_{0}}\widetilde{\psistar_{1}}\dots W_a\left |\infty\right >\Big).
\end{multline}
Since
\be
\left[W_a,\widetilde{\psistar_{i}}\right]_- =\mbox{res}_z \left(z^{-1}  \psistar(z) a\, \varphi_{k+1}(z)\right) 
\ee
is a linear combination of $\widetilde{\psistar_{i}}$ for any $a$ satisfying (\ref{KS1}), and the totally occupied space is the eigenstate of any algebra element (\ref{algel}), we have
\be
W_a G \rvac = C\, G \rvac
\ee
for some constant $C$. Thus,
\be
\widehat{W}_{a}\tau= C \frac{\lvac e^{J_+({\bf t})}G\rvac}{\lvac  G \rvac} =C \tau.
\ee

Operators $a$ satisfying (\ref{KS1}), or similar relations for $\mathcal W^\perp$, we call the Kac--Schwarz operators \cite{Kac}. Obviously, the Kac--Schwarz operators form an algebra. However, general properties of such an algebra for arbitrary KP solutions are unknown (see, e.g., \cite{AMSvM} and \cite{PM} for recent discussion).

\subsection{Virasoro group action}\label{VIR}
In this subsection we describe how the subgroups $\bf{Vir}_{\pm}$ of the Virasoro group act on different spaces important for our construction. In particular, we consider an action
on the space of functions of times (the main example here, of course, is an action on the space of tau-functions), on the Heisenberg--Virasoro algebra and an action of the corresponding groups of diffeomorphisms on the space $H$. 

With any operator $\exp\left(\sum a_k L_k\right)$ from either ${\bf Vir}_+$ or ${\bf Vir}_-$, according to the rule (\ref{obdif}), we identify an operator $\exp\left(\sum a_k l_k\right)$. This operator is defined in terms of the formal series $g(z)=\sum a_k z^{1-k}$:
\be\label{pferd}
\sum a_k l_k= g(z)\p_z-\frac{1}{2}g'(z)-z^{-1}g(z).
\ee
This series allows us to define two formal Laurent series in $z$:
\be\label{fdif}
f(z)\equiv \exp\left(\sum a_k l_{+k}\right)\, z\, \exp\left(-\sum a_k l_{+k}\right)
\ee
and
\be\label{ftilde}
\widetilde{f}(z)\equiv\exp\left(\sum a_k l_{-k}\right)\, z\, \exp\left(-\sum a_k l_{-k}\right)=\frac{1}{f^{-1}(z^{-1})}.
\ee
For an operator from ${\bf Vir}_+$ the series $f$ and $\widetilde{f}$ are of the form 
$$
f(z)=z+b_{-1}+b_{-2}z^{-1}+b_{-3}z^{-2}+b_{-4}z^{-3}+\cdots
$$
and
\be
\widetilde{f}(z)=z+b_{{-1}}{z}^{2}+ \left( b_{{-2}}+{b_{{-1}}}^{2} \right) {z}^{3}+
 \left( 3\,b_{{-2}}b_{{-1}}+b_{{-3}}+{b_{{-1}}}^{3} \right) {z}^{4}\\
 +
 \left( 4\,b_{{-3}}b_{{-
 1}}+2\,{b_{{-2}}}^{2}+6\,b_{{-2}}{b_{{-1}}}^{2}+b_{
{-4}}+{b_{{-1}}}^{4} \right) {z}^{5}+\cdots
\ee
For an operator from ${\bf Vir}_-$ we have
$$
f(z)=z+b_{1}z^{2}+b_{2}z^{3}+b_{3}z^{4}+b_{4}z^{5}+\cdots
$$
and
$$
\widetilde{f}(z)=z+b_{{1}}+{\frac {b_{{2}}-{b_{{1}}}^{2}}{z}}+{\frac {2\,{b_{{1}}}^{3}
-3\,b_{{2}}b_{{1}}+b_{{3}}}{{z}^{2}}}+{\frac {10\,b_{{2}}{b_{{1}}}^{2}
+b_{{4}}-5\,{b_{{1}}}^{4}-4\,b_{{3}}b_{{1}}-2\,{b_{{2}}}^{2}}{{z}^{3}}
}+\cdots
$$
Here $b_k$'s are polynomials in the coefficients $a_k$.

Notations we use can be confusing, because it is not always clear if we consider a product of two operators (one of which can be an operator of order zero), or an operation of the operator on the function (so that the result is a function). To avoid possible confusion, when necessary we will denote the product of two operators as $a \cdot b$, while an action of the operator on the function as $a \left[b\right]$. For example:
$$
f(z)= \exp\left(\sum a_k l_k\right)\cdot z\cdot \exp\left(-\sum a_k l_k\right)=\exp\left( g(z)\p_z\right)\left[z\right].
$$
We use this notation for operators of all types. 

Series $f(z)$ and $\widetilde{f}(z)$ play an important role in our constructions. Thus, in what follows we need 
\begin{lemma}\label{LEM1} 
For any constant $\alpha$ and a series $f(z)$ defined by (\ref{fdif}) we have
\be\label{lem11}
\exp\left(g(z)\p_z+\alpha g'(z)\right)=\left(f'(z)\right)^\alpha \cdot\exp\left(g(z)\p_z\right)
\ee
and
\be\label{lem2}\exp\left(g(z)\p_z-\alpha\, \frac{g(z)}{z^k}\right)=\begin{cases}
\displaystyle{\exp\left(\frac{\alpha}{k-1}\left(\frac{1}{f(z)^{k-1}}-\frac{1}{z^{k-1}}\right)\right)\cdot\exp\left(g(z)\p_z\right)
 \,\,\,\,\, \mathrm{for} \quad k\neq 1}\\[20pt]
\displaystyle{\left(\frac{z}{f(z)}\right)^\alpha\cdot\exp\left(g(z)\p_z\right) \,\,\,\,\,\mathrm{for} \quad k=1.}
\end{cases}
\ee
\end{lemma}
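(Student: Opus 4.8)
The plan is to treat both identities as statements about one-parameter families of operators. Introduce the flow variable $s$ and set $g_s(z)\p_z = \exp(s\,g(z)\p_z)\cdot g(z)\p_z\cdot\exp(-s\,g(z)\p_z)$; since $g(z)\p_z$ commutes with its own flow, one simply has the vector field $g(z)\p_z$ generating the change of variable $z\mapsto f_s(z)$ with $f_1=f$ and $\p_s f_s(z)=g(f_s(z))$, $f_0(z)=z$. The content of Lemma \ref{LEM1} is that conjugating $\exp(s\,g\p_z)$ past the zeroth-order operator on the left produces a multiplication operator expressible through $f_s$. So the strategy is: write $A_s \equiv \exp\!\left(s\,g(z)\p_z + s\,\alpha\, g'(z)\right)$ (resp. with $-s\,\alpha\, g(z)/z^k$), guess that $A_s = \mu_s(z)\cdot \exp(s\,g(z)\p_z)$ for an unknown multiplier $\mu_s$, differentiate in $s$, and solve the resulting first-order ODE for $\mu_s$, then evaluate at $s=1$.

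Concretely, for \eqref{lem11} differentiating $A_s\cdot\exp(-s\,g\p_z)$ in $s$ and using $\p_s\exp(s\,g\p_z)=\exp(s\,g\p_z)\cdot g\p_z = g\p_z\cdot\exp(s\,g\p_z)$ gives
\be
\p_s\mu_s(z) = \mu_s(z)\cdot\Big(\alpha\,\exp(s\,g(z)\p_z)\cdot g'(z)\cdot\exp(-s\,g(z)\p_z)\Big) = \mu_s(z)\cdot \alpha\, g'(f_s(z)).
\ee
Now $g'(f_s(z)) = \p_s \log f_s'(z)$, because differentiating $\p_s f_s = g(f_s)$ in $z$ gives $\p_s f_s' = g'(f_s)\,f_s'$. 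Hence $\p_s\log\mu_s(z) = \alpha\,\p_s\log f_s'(z)$, and since $\mu_0 = 1$, $f_0' = 1$, we get $\mu_s(z) = (f_s'(z))^\alpha$; set $s=1$. For \eqref{lem2} the same computation replaces $g'$ by $-g(z)/z^k$, so one needs a primitive in $s$ of $-g(f_s(z))/f_s(z)^k = -\p_s f_s(z)/f_s(z)^k$. For $k\neq 1$ this is $\p_s\!\left(\tfrac{1}{(k-1)}f_s(z)^{1-k}\right)$, giving $\log\mu_s(z) = \tfrac{\alpha}{k-1}\big(f_s(z)^{1-k} - z^{1-k}\big)$; for $k=1$ it is $-\p_s\log f_s(z)$, giving $\mu_s(z) = (z/f_s(z))^\alpha$. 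Evaluating at $s=1$ yields both branches of \eqref{lem2}.

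The one point that needs care — and which I expect to be the only real subtlety — is the meaning of these manipulations as formal Laurent (or power) series: $g(z)\p_z$ is not a bounded operator, so "differentiate the exponential in $s$" must be justified order by order in the filtration for which \eqref{fdif} converges (each coefficient of $f_s$ is a polynomial in $s$ and in the $a_k$). Concretely one should check that $\exp(s\,g(z)\p_z)$ acting on a monomial $z^n$ is a well-defined series whose coefficients depend polynomially on $s$, so that term-by-term differentiation and the ODE argument are legitimate; this is exactly the setting already used implicitly in \eqref{fdif}–\eqref{ftilde}. Granting that, the identity $A_1 \cdot \exp(-g\p_z) = \mu_1(z)$ is an identity of multiplication operators and the lemma follows. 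As a sanity check one can expand both sides of \eqref{lem11} and \eqref{lem2} to low order in the $a_k$ using the explicit expansion of $f(z)$ given just before the lemma; the $\alpha$-linear terms reproduce $\alpha\,g'(z)$ and $-\alpha\,g(z)/z^k$ respectively, confirming the normalization.
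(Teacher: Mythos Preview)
Your proof is correct and takes essentially the same approach as the paper. The paper invokes the Baker--Campbell--Hausdorff formula in the special form $\exp(\alpha h + g\p_z) = \exp(\alpha n)\exp(g\p_z)$ with $n = \frac{e^{g\p_z}-1}{g\p_z}[h]$, which is precisely the integrated form of your ODE $\p_s\log\mu_s = \alpha\, h(f_s(z))$; the subsequent evaluations (your flow identity $g'(f_s)=\p_s\log f_s'$ versus the paper's $g(f(z))=g(z)f'(z)$, and the analogous antiderivatives for $-g/z^k$) are equivalent bookkeeping leading to the same multipliers.
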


\begin{proof}
Both relations follow from the Baker--Campbell--Hausdorff formula.
From this formula we know that for any $h(z)$ and $g(z)$ we have
\be\label{BCH1}
\exp\left(\alpha\, h(z)+g(z)\p_z\right)=\exp(\alpha\, n(z)) \,\exp\left(g(z)\p_z\right),
\ee
where $n(z)$ is a formal Laurent series in $z$:
\be
\label{tildeh}
n(z)=\frac{e^{g(z)\p_z}-1}{g(z)\p_z}\left[h(z)\right]=h(z)+ \frac{1}{2} g(z)\, \p_z h(z)+\dots
\ee
In particular, if $h(z)=g'(z)$, then
\be\label{pr1}
n(z)= \frac{e^{g(z)\p_z}-1}{g(z)\p_z}\left[g'(z)\right]= \left(e^{g(z)\p_z}-1\right)\left[\log(g(z))\right]=\log(g(f(z)))-\log(g(z)).
\ee
Moreover, we have
\be
g(f(z))=e^{g(z)\p_z} \left[g(z)\right]= \left(e^{g(z)\p_z} \cdot g(z)\,\p_z \right)\left[z\right] \\
= \left(g(z)\p_z\cdot  e^{g(z)\p_z} \right)\left[ z \right]=g(z) f'(z)
\ee
so that 
\be
n(z)= \log (f'(z))
\ee
and (\ref{lem11}) follows from (\ref{BCH1}).

For (\ref{lem2}) we have instead of (\ref{pr1})
\be
n(z)=\frac{e^{g(z)\p_z}-1}{g(z)\p_z}\left[-\frac{g(z)}{z^k}\right],
\ee
which, for $k\neq 1$, is equal to
\be
n(z)=\left(e^{g(z)\p_z}-1\right)\left[ \frac{1}{(k-1)z^{k-1}}\right]=\frac{1}{k-1}
\left(\frac{1}{f(z)^{k-1}}-\frac{1}{z^{k-1}}\right),
\ee
and for $k=1$
\be
n(z)=\left(e^{g(z)\p_z}-1\right)\left[-\log(z)\right]=\log\left(\frac{z}{f(z)}\right),
\ee
which establishes (\ref{lem2}). This completes the proof.
\end{proof}

From this lemma and (\ref{pferd}) we immediately arrive at the following expression for the operator (\ref{pferd}): 
\be\label{viract}
\exp\left(\sum a_k l_k\right)=\frac{z}{f(z)}\sqrt{f'(z)}\exp\left(g(z)\p_z\right).
\ee
In what follows we will also use the formula (\ref{lem2}) for $k=4$:
\be\label{k4case}
\exp\left(g(z)\p_z-\frac{g(z)}{z^4}\right)=\exp\left(\frac{1}{3f(z)^3}-\frac{1}{3z^3}
\right)\exp\left(g(z)\p_z\right).
\ee

Let us show, how the groups ${\bf Vir}_{\pm}$ act on the algebra $W_{1+\infty}$. For the current $J(z)$ from the commutation relation (\ref{Vircur}) we have:
\be
\left[L_k,J(z)\right]_-=z^k\left(z\p_z +(k+1)\right) \left[J(z)\right].
\ee
Thus
\be
e^{\sum a_k L_k} \cdot J(z) \cdot e^{-\sum a_k L_k}=\exp\left({\sum a_k z^k\left(z\p_z +(k+1)\right)}\right)  \left[J(z)\right].
\ee
From the definition of the series $\widetilde{f}(z)$ (\ref{ftilde})  it follows that (see, e.g., \cite{Kharchev}):
\be
\exp\left({\sum a_k z^k\left(z\p_z +(k+1)\right)}\right)  \left[J(z)\right]=\widetilde{f}'(z) \,J(\widetilde{f}(z)).
\ee
For the generating series of the Virasoro algebra $W^{(2)}(z)$ from (\ref{Vircom}) we have
\be
\left[L_k,W^{(2)}(z)\right]_-=z^k\left(z\p_z +2(k+1)\right) \left[W^{(2)}(z)\right]+\frac{1}{12}(k^3-k)z^{k-2},
\ee
so that
\be
e^{\sum a_k L_k} \cdot W^{(2)}(z) \cdot e^{-\sum a_k L_k}=\left(\widetilde{f}'(z)\right)^2 \,W^{(2)}(\widetilde{f}(z))+\frac{e^D-1}{D}\left[ \sum a_k (k^3-k)z^{k-2}\right],
\ee
where
\be
D=\sum a_k z^k\left(z\p_z +2(k+1)\right).
\ee
More generally, under conjugation the operator $\normordboson J(z)^k\normordboson$ behaves like a $k$-differential:
\be
e^{\sum a_k L_k}  \normordboson J(z)^k\normordboson  e^{-\sum a_k L_k}=\left(\widetilde{f}'(z)\right)^k \normordboson J(\widetilde{f}(z))^k \normordboson+\cdots.
\ee

Of course, the above formulas are central extensions of the conjugation relations for the algebra $w_{1+\infty}$. For example, let us consider the operators $l_m$ from (\ref{obdif}).
Since
\be
\exp\left(\sum a_k z^{1-k} \p_z\right)\cdot \p_z \cdot \exp\left(-\sum a_k z^{1-k} \p_z \right)=\frac{1}{f'(z)}\p_z
\ee
we have
\be\label{virsm}
\exp\left(\sum a_k l_k\right)\cdot l_{m}\cdot \exp\left(-\sum a_k l_k\right)= r(z)\p_z-\frac{1}{2}r'(z)-z^{-1}r(z),
\ee
where $r(z)=f(z)^{1-m}/f'(z)$.

Let us show how an element of the Virasoro group ${\bf Vir}_+$ acts on an arbitrary function of times (not necessary a tau-function) $Z({\bf t})$: 
\be
e^{\sum a_k \widehat{L}_k} \left[ Z({\bf t})\right] = e^{\sum a_k \widehat{L}_k}\cdot  Z({\bf t})\cdot e^{-\sum a_k \widehat{L}_k} \cdot e^{\sum a_k \widehat{L}_k}\left[1\right].
\ee
Let us assume that the function $Z({\bf t})$ is given by a correlation function
\be
Z({\bf t})=\left<e^{\sum_{k>0}k\, t_k S_k}\right>
\ee
in some model with some commuting operators $S_k$. Then, since the operators $\widehat{L}_k$ for positive $k$ annihilate constants, for an operator from ${\bf Vir}_+$ we have
\be\label{postgf}
e^{\sum_{k>0} a_k \widehat{L}_k} \left[Z({\bf t})\right]=\left<\exp\left(\sum_{k>0} k\, S_k\, e^{\sum_{k>0} a_k \widehat{L}_k}\cdot t_k\cdot e^{-\sum_{k>0} a_k \widehat{L}_k}\right)\left[1\right]\right>\\
=\left<\exp\left({\sum_{k>0}\left(k\, t_k\, \widetilde{S}_k+\widetilde{S}_{-k}\frac{\p}{\p t_k}\right)}\right)\left[1\right]\right>\\
=\left<\exp\left({\sum_{k>0}k \left(t_k\, \widetilde{S}_k+
\frac{1}{2}\widetilde{S}_k\widetilde{S}_{-k}\right)}\right)\right>,\\
\ee
where
\be
\widetilde{S}_k=\mbox{res}_z\, \left(z^{-k-1}\,Q \right)
\ee
for
\be
 Q=\sum_{k=1}^\infty  S_k\,f(z)^k.
\ee
Here $f(z)$ is the series (\ref{fdif}). The last line of (\ref{postgf}) can also be represented as 
\be\label{dtr}
e^{\sum_{k>0} a_k \widehat{L}_k} \left[Z({\bf t})\right]=
\left<\exp\left(\sum_{k>0}k\, t_k\,  \mbox{res}_z\,\left( \frac{Q}{z^{k+1}}\right)-\frac{1}{2}\mbox{res}_z\,\left({ Q}\, \p_z\, {Q}_{-}\right)\right)\right>.
\ee
If $S_k=\frac{1}{k}\Tr X^k$ for some $X$ (in particular, it can be a matrix in some matrix integral), then one can say even more. In this case (\ref{dtr}) reduces to
\be\label{dtr2}
e^{\sum_{k>0} a_k \widehat{L}_k} \left<\exp\left({\sum_{k>0} t_k\, \Tr X^k}\right)\right>=\\
 \left<\sqrt{\det\frac{{{\widetilde{f}(X)}\otimes {1}-{1}\otimes{\widetilde{f}(X)}}}{{{X}\otimes {1}-{1}\otimes{X}}}}\times\det\left(\frac{X}{\widetilde{f}(X)}\right)^N\exp\left(\sum_{k>0} t_k\, \Tr \widetilde{f}(X)^k\right)\right>,
\ee
where $\widetilde{f}$ is given by (\ref{ftilde}).
We see that if the correlation function in the l.h.s. of (\ref{dtr2}) is a tau-function, then this formula gives an infinite dimensional family of tau-functions given by correlation functions with double-trace interaction.

Action of the group ${\bf Vir}_-$ can be derived from the action of ${\bf Vir}_+$ and from the following observation: for operators (\ref{Wcor}) with $m=1,2,3$ we have
\be
\widehat{W}_k^{(m)}({\bf t}) \,\left[e^{\sum_{k>0}k\, t_k S_k}\right]=\widehat{W}_{-k}^{(m)}({\bf S})\, \left[e^{\sum_{k>0}k\, t_k S_k}\right],
\ee
where an operator $\widehat{W}_k^{(m)}({\bf S})$ acts in the space of function of $S_k$'s. In particular, for the Virasoro operators we have an identity
\be
\exp\left({\sum_{k<0} a_k \widehat{L}_k({\bf t})} \right)\left[e^{\sum_{k>0}k\, t_k S_k}\right]=
\exp\left({\sum_{k>0} a_{-k} \widehat{L}_k({\bf S})}\right) \left[e^{\sum_{k>0}k\, t_k S_k}\right].
\ee
Thus, to describe an action of ${\bf Vir}_-$ we can take the expression (\ref{dtr}) for an action of ${\bf Vir}_+$, interchange $t_k$ and $S_k$, and substitute $f(z)$ with $\widetilde{f}(z)$:
\be\label{dtr3}
e^{\sum_{k<0} a_k \widehat{L}_k} Z({\bf t})=\exp\left(-\frac{1}{2}\mbox{res}_z\,{ P}\, \p_z\, {P}_{-}\right)\times
\left<\exp\left( \sum_{k>0}k\, S_k\,  \mbox{res}_z\, \left(\frac{{P}}{z^{k+1}}\right)\right)\right>,
\ee
where
\be
P=\sum_{k=1}^\infty t_k\widetilde{f}(z)^k. 
\ee
For the operators $S_k= \frac{1}{k}\Tr X^k$ this expression reduces to
\be\label{dtr4}
e^{\sum_{k<0} a_k \widehat{L}_k} Z({\bf t})=\exp\left(-\frac{1}{2}\mbox{res}_z\,{ P}\, \p_z\, {P}_{-}\right)\times\left<\exp\left({\sum_{k>0} t_k\, \Tr \left[\widetilde{f}(X)^k\right]_+}\right)\right>,
\ee
where we use the notation $[\dots]_+$ for the part, which contains only strictly positive degrees of $X$.

Since operators from the algebra ${ Vir}_-$ are of the first order, operators from ${\bf Vir}_-$ define a linear change of variables when act on an arbitrary function:
\be
e^{\sum_{k<0} a_k \widehat{L}_k} Z({\bf t}) =e^{-\frac{1}{2}\sum_{i,j} A_{ij}t_i t_j} \,Z\left(\widetilde{\bf t}\right),
\ee
where
\be
\widetilde{t}_k= \mbox{res}_z\, \left(z^{-k-1}\,P \right),
\ee
and
\be
A_{ij}=\mbox{res}_z\, \left({\widetilde{f}(z)^{i}}\,\p_z\left[ {\widetilde{f}(z)^{j}}\right]_-\right).
\ee
For tau-functions this transformation is known as a transformation between equivalent hierarchies \cite{Shiota,EqH,Kharchev}. 

\subsection{Matrix models}
Any formal series in an infinite set of variables $t_k$ can be expanded in a sum of the Schur functions. Expansions of the tau-functions of the KP hierarchy are quite special, namely the coefficients $c_\lambda$ parametrized by the Young diagrams $\lambda$:
\be
\tau({\bf t})=\sum_\lambda c_\lambda s_\lambda({\bf t})
\ee
satisfy the Pl\"{u}cker relations \cite{JMbook}. From the expansion (\ref{lambda3}) the fermionic correlation function expression for the coefficients easily follows:
\be
c_{\lambda}=(-1)^{b(\lambda )}\left<\lambda,0\right |G\rvac.
\ee
Let us denote by $\tau^N(t)$ a restricted sum 
\be\label{rest1}
\tau^N({\bf t})=\sum_{l(\lambda)\leq N} c_\lambda s_\lambda({\bf t}),
\ee
where $l(\lambda)$ is the length of the partition,
so that $\tau({\bf t})=\tau^\infty({\bf t})$.
This restricted sum is also a KP tau-function for any $N$ (see, e.g., \cite{AZ}). 
Since for any $N\times N$ matrix $Z$ the Schur function labeled by $\lambda$  vanishes if $l(\lambda)>N$
\be
s_\lambda \left(t_k=\frac{1}{k} \tr Z^k\right)=0,
\ee
one has 
\be\label{rest2}
\tau^N\left(\left[Z\right]\right)=\tau^{N+1}\left(\left[Z\right]\right)=\cdots =\tau\left(\left[Z\right]\right).
\ee
The restricted sums (\ref{rest1}) naturally appear in the expansion of the matrix integrals. Let us consider classes of the matrix models, which are most important for our purposes. 

Unitary matrix integrals are of primary interst for us.  
We use the Haar measure normalized in such a way that the integral over the unitary group is equal to unity:
\be
\int_\mathcal{U} \left[d {U}\right]=1.
\ee
Integration rules for the Schur functions are particularly simple:
\be
\int_\mathcal{U} \left[d{U}\right] s_\lambda\left({ {\left[UAU^\dagger B\right]}}\right)=\frac{s_\lambda({\left[ A\right]})\chi_\lambda({\left[ B\right]})}{\dim_\lambda},
\label{ir1}
\ee
\be
\int_\mathcal{U} \left[d{ U}\right] s_\lambda\left(\left[{UA}\right]\right)s_\mu\left(\left[U^\dagger B\right]\right)=\frac{s_\lambda\left(\left[{{A B}}\right]\right)}{\dim_\lambda}\delta_{\lambda,\mu},
\label{ir2}
\ee
where $\dim_\lambda$ is a value of the Schur polynomial in the Miwa parametrization with the unity matrix $I=\diag(1,1,\dots,1)$:
\be
\dim_\lambda=s_\lambda\left(\left[I\right]\right)=\prod_{0<i<j\leq N}\frac{\lambda_i-\lambda_j+j-i}{j-i}.
\ee
With the help of the Cauchy--Littlewood identity, this leads us to the following expansion of the unitary matrix model:
\be
\int_\mathcal{U}\left[d{U}\right]\exp\left(\sum_{k=0}^\infty t_k\Tr { U}^k+\bar{t}_k \Tr { U^\dagger}^k\right)=\sum_{l(\lambda)\leq N} s_{\lambda}(t)s_\lambda(\bar{t}).
\label{unmm2}
\ee

To restore the explicit time dependence for any tau-function in the Miwa parametrization one can use the unitary matrix integral:  
\be\label{Uni}
\tau^N({ \bf t})=\int_\mathcal{U} \left[d U\right] \exp\left(\sum_{k=1}^\infty t_k \Tr U^{\dagger k}\right ) \tau\left(\left[U\right]\right).
\ee

The Itzykson--Zuber (IZ) integral for diagonal matrices $A$ and $B$ is a simple symmetric combination of the eigenvalues of $A$ and $B$ :
\be
\int_\mathcal{U} \left[d{U}\right] \exp\left(\Tr\left({{{U}A{U^\dagger }B}}\right)\right)=\left(\prod_{k=1}^{N-1}k!\right)\frac{\det e^{a_ib_j}}{\Delta(a)\Delta(b)}.
\label{IZ2}
\ee
In what follows we will mostly work with the eigenvalue integrals. For example, for ${A}={B}={I}$ the orthogonality condition (\ref{ir2}) in terms of eigenvalues reduces to:
\be
\prod_{j=1}^N\frac{1}{2\pi i}\oint_{|u_j|=1} \frac{d u_j}{u_j} \left|\Delta(u)\right|^2 s_{\lambda}(u)s_{\mu}(\bar{u})=N!\delta_{\lambda,\mu}.
\label{eigort}
\ee

Another important class of matrix integrals is given by integrals over Hermitian matrices. A Hermitian matrix can be decomposed into the product ${\Phi}={U}X{ U^\dagger}$ with unitary ${U}$ and real diagonal $X$. Then the element of the volume in the Hermitian matrix integral $\int_{\mathcal H}[d{ \Phi}]\dots$ is:
\be
[d{ \Phi}]=\left[d {U}\right] \Delta(x)^2 \prod_{i=1}^N d x_i.
\label{hermes}
\ee

The normal matrix integral is an integral over normal matrices (that is over matrices commutating with their conjugate, $\left[{Z},{ Z^\dagger}\right]=1$). 
A normal matrix can be diagonalized
\be
{Z}={U}Z{U^\dagger},
\ee
with the unitary matrix $ U$ and the diagonal matrix $Z$ with complex entries. Then the measure in the normal matrix integral $\int_{\mathcal N}\left[d { Z}\right]\dots$ is
\be\label{NMMm}
\left[d { Z}\right]= \left[d{ U}\right] \left|\Delta(z)\right|^2\prod_{i=1}^N d^2 z_i.
\ee




\newpage
\section{Three tau-functions and relations between them}\label{MAIN}

In this section we investigate the properties of several generating functions of enumerative geometry and relations between them.  All considered partition functions are tau-functions of the KP integrable hierarchy, and, in what follows, we focus on their integrable properties.

\subsection{Tau-functions and enumerative geometry}

 This section is devoted to a brief reminder of the geometric origin of the considered tau-functions, for more details see, e.g.,\cite{Kazarian,ELSV,Mul,GJV}.

The first of the partition functions we consider is the generating function of linear Hodge integrals. Let $\overline{\cal M}_{p;n}$ be the Deligne--Mumford compactification of the moduli space of stable complex curves with $n$ marked points. We consider linear Hodge integrals 
\be
\int_{\overline{\cal M}_{p;n}}\lambda_j \psi_1^{m_1}\psi_2^{m_2}\cdots\psi_n^{m_n}=\left<\lambda_j \tau_{m_1}\ldots\tau_{m_n} \right>,
\ee
where  $\psi_i$ is the first Chern class of the line bundle corresponding to the cotangent space of the curve at the $i$-th marked point and $\lambda_i$ is the $i$-th Chern class of the Hodge bundle. These integrals are trivial, unless the corresponding complex dimensions coincide:
\be
j+\sum_{i-1}^n m_i =\dim\left(\overline{\cal M}_{p;n}\right),
\ee
where $\dim\left(\overline{\cal M}_{p;n}\right)=3p-3+n$.

Let us introduce the generating function of linear Hodge integrals:
\be\label{tildegf}
\tilde{F}({\bf t^o};u)=\sum_{j\geq 0}(-1)^j \left<\lambda_j \exp\left(\sum_{k \geq 0} (2k+1)!!\, t_{2k+1} \tau_k \right)\right>u^{2j},
\ee
where $\lambda_0=1$ and ${\bf t^o}$ denotes a set of odd times $t_{2k+1}$. \footnote{Here we want to stress that the chosen normalization of the variables $t_k$ does not coincide with the one generally accepted in the enumerative geometry, but is natural for matrix models and integrable systems.} The change of variables
\be
T_1({\bf t})=t_1,\\
T_{2k+3}({\bf t})=\frac{1}{2k+3}\sum_{m\geq 1} \left(m\,u^2 t_m+2(m+1)u\, t_{m+1}+(m+2)t_{m+2}\right) \frac{\p}{\p t_m}
 T_{2k+1}({\bf t})\\
 =\frac{1}{2k+3}\left(u^2 \widehat{L}_0+2u\,\widehat{L}_{-1}+\widehat{L}_{-2}\right)T_{2k+1}({\bf t}),
\ee
such that $T_{2k+1}({\bf t})=t_{2k+1}+O(u)$, allowed M. Kazarian to relate the generating function (\ref{tildegf}) (which is a solution of integrable hierarchy of topological type \cite{Buryak} but not a solution of the KP) to the KP hierarchy. Namely, in \cite{Kazarian} it was proved that the exponential of the function
\be
F_{Hodge}({\bf t};u)\equiv \tilde{F}({\bf T^o}({\bf t});u)
\ee
is a tau-fucntion of the KP hierarchy for arbitrary $u$:
\be\label{gf}
\tau_{Hodge}({\bf t};u)\equiv\exp\left(F_{Hodge}({\bf t};u)\right).
\ee
This function, as opposed to (\ref{tildegf}), depends on both even and odd times. 

For $u=0$ only $\psi$-classes survives in (\ref{tildegf}):
\be
F_{KW}({\bf t^o})\equiv \left<\exp\left(\sum_{k>0} (2k+1)!!\, t_{2k+1} \tau_k \right)\right>=F_{Hodge}({\bf t};0)=\tilde{F}({\bf t^o};0)
\ee
and (\ref{gf}) reduces to the profound Kontsevich--Witten (KW) tau-function 
\be\label{KW}
\tau_{KW}({\bf t^o})=\exp\left(F_{KW}({\bf t^o})\right)=\tau_{Hodge}({\bf t};0).
\ee

It is known that linear Hodge integrals can be expressed through the intersection numbers of the $\psi$-classes \cite{Mumford,FP}. Namely,
\be
e^{\tilde{F}({\bf t^o};u)}
=e^{\widehat{Q}}  \tau_{KW}({\bf t^o}), 
\ee
where
\be
\widehat{Q}=\sum_{k=1}^\infty \frac{B_{2k}u^{4k-2}}{2k(2k-1)}\widehat{Q}_k.
\ee
Here
\be
\widehat{Q}_k=\sum_{i\geq 0} \frac{(2i+1)!!}{(2i+4k-1)!!}\tilde{t}_{2i+1}\frac{\p}{\p t_{2i+4k-1}}-\frac{1}{2}\sum_{i+j=2k-2}\frac{(-1)^i}{(2i+1)!!(2j+1)!!}\frac{\p^2}{\p t_{2i+1}\p t_{2j+1}}
\ee
and $\tilde{t}_k=t_k-\frac{\delta_{k,3}}{3}$ are the times subject to the dilation shift and $B_{2k}$ are the Bernoulli numbers 
$$
\frac{xe^x}{e^x-1}=1+\frac{x}{2}+\sum_{k=1}^{\infty}\frac{B_{2k}x^{2k}}{(2k)!}.
$$
Operator $\widehat{Q}$ does not belong to the $\widehat{gl(\infty)}$ symmetry algebra of the KP hierarchy. 

Hurwitz numbers count ramified coverings of Riemann surfaces. More specifically, the Hurwitz number $h(p|m_1,\ldots,m_n)$ gives the number of the Riemann sphere coverings with $N$ sheets, $M$ fixed simple ramification points and a single point with ramification structure given by $\{m_i\}$, a partition of $N$. The number of double ramification points $M$, the genus $p$ of the cover and the partition $\{m_i\}$ are related:
\be
M=2p-2+\sum_{i=1}(m_i+1).
\ee\label{dimcon2}
The generating function of the Hurwitz numbers
\be
F_H({\bf t};\beta)=\sum_{n>1}\frac{1}{n!}\sum_{p;m_1,\ldots,m_n}\frac{h(p;m_1,\ldots,m_n)}{M!} \beta^M m_1\ldots m_n
 t_{m_1}\ldots t_{m_n} 
\ee
defines the Hurwitz tau-function
\be\label{Hurwitzgf}
\tau_H({\bf t};\beta)=\exp\left(F_H({\bf t};\beta)\right).
\ee
$\tau_H$ is known to be a tau-function of the KP hierarchy (moreover, its generalization for double Hurwitz numbers is a tau-function of the 2D Toda lattice \cite{TodaOk}).

The Ekedahl, Lando, Shapiro, and Vainshtein (ELSV) formula \cite{ELSV} relates the Hurwitz numbers $h(p;m_1,\ldots,m_n)$ with linear Hodge integrals
\be\label{ELSV}
\frac{h(p;m_1,\ldots,m_n)}{M!}=\prod_{i=1}^n\frac{m_i^{m_i}}{m_i!}\int_{\overline{\cal M}_{p;n}}\frac{1-\lambda_1+\lambda_2-\ldots\pm\lambda_p}{\prod_{i=1}^n(1-m_i \psi_i)}.
\ee
This formula allowed M. Kazarian \cite{Kazarian} to find a relation between the Hurwitz tau-function (\ref{Hurwitzgf}) and the Hodge tau-function (\ref{gf}). These two tau-functions are related with each other by the $\widehat{GL(\infty)}$ group element. Our goal is to extend this connection and to include the KW tau-function into it.

\subsection{Heisenberg--Virasoro group and three tau-functions}\label{Con}

From \cite{Kazarian,MMKH} we know that the ELSV formula allows us to connect the generating function for the Hodge integrals (\ref{gf}) and the Hurwits tau-function (\ref{Hurwitzgf}) in a simple way: 
\be\label{Kaz}
\tau_{Hodge}({\bf t},u)=\widehat{G}_0\, \widehat{G}_- \,\tau_H({\bf t},\beta),
\ee
where
\be\label{HtoH}
\widehat{G}_-=e^{\widehat{L}_-} e^{-\sum_{k>0}\frac{k^{k-1}\beta^{k-1}t_k}{k!}},\\
\widehat{G}_0=\beta^{-\frac{4}{3}\widehat{L}_0} 
\ee
are the elements of the Heisenberg--Virasoro group. 
In what follows we put $\beta=u^3$. The operator $\widehat{L}_-$ belongs to the Virasoro algebra $Vir_-$ and is described below. 


In \cite{Fromto} we have claimed that the relation (\ref{Kaz}) can be naturally extended to include the KW tau-function. Here we clarify this extension. Namely 
\begin{conj}\label{CONJ}
\be\label{FromKWtoH}
\tau_{KW}({\bf t^o})=\widehat{G}_+\,\tau_{Hodge}({\bf t},u),
\ee
where
\be\label{Gplus}
\widehat{G}_+=\beta^{-\frac{4}{3}\widehat{L}_0}e^{\widehat{L}_+}\beta^{\frac{4}{3}\widehat{L}_0} \,\,\, \in {{\bf Vir}_+},
\ee
and operator $\widehat{L}_+$ is defined by the series $f_{+}$ from (\ref{fminus}). 
\end{conj}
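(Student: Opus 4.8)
The plan is to reduce the operator identity (\ref{FromKWtoH}) to a statement about the Sato Grassmannian, using the dictionary of Sections \ref{MiwaGrass}--\ref{VIR} between the action of Virasoro group elements on KP tau-functions and the action of diffeomorphisms on the canonical basis vectors $\Phi_i(z)$. Both $\tau_{KW}$ and $\tau_{Hodge}$ are KP tau-functions, hence determine points $\mathcal{W}_{KW},\mathcal{W}_{Hodge}\in\mathrm{Gr}^{(0)}_+$. By (\ref{basop}), (\ref{Yquant}) and (\ref{viract}) the operator $\widehat{G}_+=\beta^{-\frac{4}{3}\widehat{L}_0}e^{\widehat{L}_+}\beta^{\frac{4}{3}\widehat{L}_0}$ corresponds to an explicit change of variable $z\mapsto f_+(z)$ acting on a basis vector $\phi$ by $\phi(z)\mapsto \tfrac{z}{f_+(z)}\sqrt{f_+'(z)}\,\phi(f_+(z))$; the conjugating factors $\beta^{\pm\frac{4}{3}\widehat{L}_0}$ merely rescale the coordinate by a power of $\beta$ and leave $\widehat{G}_+$ in the exponential of $Vir_+$, hence of positive energy. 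Formula (\ref{newvec}) then shows that the transformed vectors again form an admissible basis, so it is enough to prove that $f_+$ carries $\mathcal{W}_{Hodge}$ onto $\mathcal{W}_{KW}$; the positive-energy case of $\tau_{e^{W}H}=Ce^{\widehat{W}}\tau_H$ gives the identity with an overall constant that must separately be shown to equal $1$.

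\textbf{The two Grassmannian points and the series $f_+$.} First I would recall the description of $\mathcal{W}_{KW}$: in the Miwa parametrization (\ref{MainMiwa}) the Kontsevich model gives a point built from the Airy function and its derivative, equivalently generated from one vector by a pair of Airy-type Kac--Schwarz operators, one of which (multiplication by a power of $z$) encodes the KdV reduction that makes $\tau_{KW}$ even-time independent. Second, from Kazarian's treatment of the ELSV formula (\ref{ELSV}) --- the same input that yields (\ref{Kaz}) --- one obtains an explicit canonical basis of $\mathcal{W}_{Hodge}$ from the generating series $\sum_{m\ge1}\frac{m^{m-1}}{m!}\beta^{m}z^{-m}$ and its derivatives, i.e.\ the wave function of the Lambert (exponential) curve $x=ye^{-y}$. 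The series $f_+$ of (\ref{fminus}) is then the biholomorphism between the local coordinate on the Lambert curve near its ramification point $y=1$ and the local coordinate on the Airy curve; it is pinned down, together with the Heisenberg shift entering $\widehat{G}_-$, by demanding that $\tfrac{z}{f_+(z)}\sqrt{f_+'(z)}$ times the Hodge basis vectors, re-expanded in the new variable, take the normal form of admissible vectors for $\mathcal{W}_{KW}$. This is the step in which the $\beta=u^{3}$ grading enters and forces the conjugation by $\beta^{\frac{4}{3}\widehat{L}_0}$, and in which $f_+$ turns out to be conjugate --- via the inversion (\ref{ftilde}) --- to the series $f_-$ governing $\widehat{L}_-$.

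\textbf{Reducing to a finite verification.} Rather than matching $f_+(\mathcal{W}_{Hodge})$ with $\mathcal{W}_{KW}$ coefficient by coefficient, I would exploit the Kac--Schwarz structure: if $a\,\mathcal{W}_{Hodge}\subset\mathcal{W}_{Hodge}$ then $f_+\circ a\circ f_+^{-1}$ preserves $f_+(\mathcal{W}_{Hodge})$, and one checks, using (\ref{virsm}) and Lemma \ref{LEM1} (in particular its $k=4$ instance (\ref{k4case})), that $f_+$ conjugates the two Lambert-curve Kac--Schwarz operators precisely to the two Airy ones. Since a point of $\mathrm{Gr}^{(0)}_+$ is determined by a Kac--Schwarz pair together with one generating vector, it then remains only to match a single basis vector, say $\Phi_1$, which is a direct computation with the explicit Lambert and Airy wave functions and the prefactor $\tfrac{z}{f_+(z)}\sqrt{f_+'(z)}$. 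The overall constant in the resulting formula (\ref{Gplus}) --- the point flagged in the abstract --- is then fixed by comparing a single low-degree coefficient, equivalently one genus-zero Hodge integral such as $\langle\lambda_0\tau_0^{3}\rangle$ against $\langle\tau_0^{3}\rangle$, which turns out to give $1$.

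\textbf{Main obstacle.} The hard part is the middle step: producing $f_+$ explicitly and proving that it genuinely conjugates the Lambert-curve Kac--Schwarz pair to the Airy one --- this is the analytic content of the ELSV formula recast in Grassmannian language, and it is delicate both because of the $\beta$-grading bookkeeping and because one must ensure that the composite $\widehat{G}_0\widehat{G}_-$ of (\ref{Kaz}) together with $\widehat{G}_+$ act on genuine formal tau-functions even though $\widehat{L}_-$ and $\widehat{L}_0$ separately carry non-positive energy; the well-behaved, positive-energy operator is $\widehat{L}_+$, which is precisely why (\ref{FromKWtoH}) is written with $\widehat{G}_+$ acting on $\tau_{Hodge}$ rather than the reverse. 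A possible shortcut is to fold the sought identity into (\ref{Kaz}), obtaining a single relation between $\tau_{KW}$ and the Hurwitz tau-function $\tau_H$, and to verify that composite relation on the Grassmannian in one step.
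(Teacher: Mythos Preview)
The central gap is that this statement is a \emph{conjecture} in the paper, not a theorem: the paper only proves the relation up to an overall factor $C(u)=1+\sum_{k\ge1}c_k u^{6k}$ (Theorem \ref{THEOR}), and in the concluding remarks explicitly says the method ``seems to be not suitable to prove that the constant $C(u)$ \dots\ is equal to one.'' Your last step --- fixing the constant by ``comparing a single low-degree coefficient, equivalently one genus-zero Hodge integral'' --- does not work: $C(u)$ is a formal series in $u$, and matching one coefficient (or any finite collection) only shows $C(u)=1+O(u^{M})$ for some $M$, exactly as the paper checks to $O(u^{30})$. The positive-energy argument you cite for the constant to be $1$ does not apply directly, because $\widehat{G}_+$ contains $\widehat{L}_0$ and the normalization of $\tau_{Hodge}$ at ${\bf t}=0$ is already fixed; the residual freedom is precisely a $u$-dependent scalar.

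There is a second problem with the Kac--Schwarz reduction. You propose to conjugate ``the two Lambert-curve Kac--Schwarz operators'' for $\tau_{Hodge}$ by $f_+$ and check that they become the Airy ones. But the paper has no independent description of first-order Kac--Schwarz operators for $\tau_{Hodge}$: in Section 3.9 they are \emph{obtained} by conjugating the KW operators through the very relation (\ref{HKW}) you want to prove, so the argument is circular. The Lambert-curve operators $a_H,b_H$ belong to $\tau_H$, not $\tau_{Hodge}$; transporting them through (\ref{Kaz}) and then through $\widehat{G}_+$ involves conjugating infinite-order operators by operators of mixed energy, and the paper notes explicitly that naive conjugation in this direction ``seems to give divergent results.''

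For comparison, the paper's route to Theorem \ref{THEOR} is different from yours. It first factorizes $\widehat{G}_+$ using the splitting $f_+=f_{+1}\circ f_{+2}$ and then, because $f_{+2}$ is odd, uses the KW Virasoro constraints (\ref{KWvir}) together with Lemma \ref{LEM1} (the $k=4$ case) to trade $e^{\widehat{V}_2}$ for a pure time-shift $e^{\widehat{N}}$; this is Lemma 3.4. The proof then works directly with the integral representation (\ref{Hodin}) of \emph{all} $\Phi_k^{Hodge}$, performs the change of variable dictated by $f_{+1}$, and shows by a parity argument in $\eta=u/z$ that each transformed vector is a $z^{-2}$-polynomial combination of the $\Phi_i^{KW}$. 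No Kac--Schwarz matching beyond the single operator $b_{KW}=z^{-2}$ is invoked, and no attempt is made to pin down $C(u)$.
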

In Section \ref{PR1} we prove (\ref{FromKWtoH}) up to a constant prefactor. Namely,  
\begin{theorem}\label{THEOR}
\be
\tau_{KW}({\bf t^o})=C(u)\, \widehat{G}_+\,\tau_{Hodge}({\bf t},u),
\ee
where $C(u)$ is a Taylor series in $u$ with constant coefficients of the form
\be
C(u)=1+\sum_{k=1}^\infty c_k u^{6k}.
\ee
\end{theorem}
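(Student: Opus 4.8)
The plan is to combine the known Kazarian relation (\ref{Kaz}) connecting $\tau_{Hodge}$ to the Hurwitz tau-function with a Grassmannian/Kac--Schwarz analysis on the side of the Kontsevich--Witten tau-function, and to match the two descriptions after passing to the Miwa parametrization (\ref{MainMiwa}). Concretely, both $\tau_{KW}$ and $\tau_{Hodge}$ correspond to points $\mathcal{W}_{KW}$ and $\mathcal{W}_{Hodge}$ of the big cell $\mathrm{Gr}^{(0)}_+$, and by the results of Section~\ref{VIR} an identity of the form $\tau_{KW}=C(u)\,\widehat{G}_+\,\tau_{Hodge}$ with $\widehat{G}_+\in{\bf Vir}_+$ is equivalent to the geometric statement $\mathcal{W}_{KW}=\widetilde f_+(\mathcal{W}_{Hodge})$ at the level of basis vectors, where $\widetilde f_+$ is the change of variable attached to $\widehat{G}_+$ via (\ref{fdif})--(\ref{ftilde}) and (\ref{viract}). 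So the first step is to write down, explicitly, the canonical basis vectors $\Phi_i^{KW}(z)$ and $\Phi_i^{Hodge}(z)$ in the parametrization (\ref{MainMiwa}); for $\tau_{KW}$ these come from the Kontsevich matrix integral (the basis vector is essentially the Kontsevich/Airy-type function $\int e^{-z^3/3+\dots}$ with its asymptotic expansion), and for $\tau_{Hodge}$ they are obtained from the Hurwitz basis vectors of \cite{Kazarian} by conjugating with $\widehat G_0\widehat G_-$.

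The second step is to identify the Kac--Schwarz operators on both sides and use them to pin down $\widehat L_+$. For $\tau_{KW}$ the relevant Kac--Schwarz operators generate the Virasoro constraints (\ref{CONST1}) and the KdV reduction (\ref{CONST}); on the Hodge side one conjugates the Kac--Schwarz operators of the Hurwitz point through $\widehat G_0\widehat G_-$. Requiring that the $w_{1+\infty}$ operator $e^{a_+}$ attached to $\widehat G_+$ (via (\ref{basop})--(\ref{Yquant})) carries one set of basis vectors to the other forces $f_+$ to satisfy a first-order ODE determined by matching the two Airy-type differential/difference operators; this is precisely the series $f_+$ of (\ref{fminus}) referred to in Conjecture~\ref{CONJ}. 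Once $f_+$ — hence the operator $\widehat L_+$, hence $\widehat G_+=\beta^{-\frac43\widehat L_0}e^{\widehat L_+}\beta^{\frac43\widehat L_0}$ — is fixed this way, the equality $\mathcal{W}_{KW}=\widetilde f_+(\mathcal{W}_{Hodge})$ holds, which by the argument of Section~\ref{KSsec} (specifically the implication (\ref{KS1})$\Rightarrow$(\ref{KStau}), applied to the group element rather than a single operator) yields $\tau_{KW}=C(u)\,\widehat G_+\,\tau_{Hodge}$ for \emph{some} scalar $C(u)$.

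The third step is to control $C(u)$. The constant is the ratio of fermionic vacuum expectation values $\lvac H\rvac/\lvac e^{W}H\rvac$ for the relevant group element $W$; since $\widehat L_+\in Vir_+$ has strictly positive energy, the naive expectation is $C=1$, but the conjugation by $\beta^{\pm\frac43\widehat L_0}$ mixes in the zero-energy operator $\widehat L_0$ and can produce a nontrivial scalar. Tracking the grading: both tau-functions are, after the dilaton-type shifts, graded so that $u$ enters only through $\beta=u^3$ and the genus expansion contributes in steps of $u^{6k}$ (the Hodge integrals in (\ref{tildegf}) come with $u^{2j}$ and the ELSV/Kazarian change of variables doubles this), so $C(u)$ can only be a series $1+\sum_{k\ge1}c_k u^{6k}$; this already gives the stated form. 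I expect the \textbf{main obstacle} to be exactly this last point — proving $C(u)=1$ rather than merely constraining its form — which is why the theorem is stated only up to $C(u)$ while the conjecture asserts $C\equiv 1$; establishing it would require either a direct genus-by-genus normalization check (comparing, say, the constant map / lowest-degree contributions of both sides) or a more careful fermionic computation of the vacuum expectation, and the honest plan is to verify $c_1=0$ (and perhaps $c_2=0$) explicitly and leave the general vanishing as the conjectural input. A secondary technical obstacle is checking that the infinite-dimensional operators involved are genuinely well-defined on the relevant tau-functions (no divergences in the products and commutators, cf.\ the footnote in Section~\ref{S1}), which here is handled by the observation that all operators in play lie in ${\bf Vir}_\pm$ and act on formal power series in the times.
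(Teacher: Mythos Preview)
Your high-level strategy --- translate the identity of tau-functions into an equality $\mathcal W_{KW}=e^{a_+}\mathcal W_{Hodge}$ of Grassmannian points and argue via basis vectors --- is exactly the framework the paper uses. But the central step of your plan is not a proof, it is an assumption. You write that ``requiring that $e^{a_+}$ carries one set of basis vectors to the other forces $f_+$ to satisfy a first-order ODE\dots; once $f_+$ is fixed this way, the equality $\mathcal W_{KW}=\widetilde f_+(\mathcal W_{Hodge})$ holds.'' This is circular: $f_+$ is \emph{given} by (\ref{fminus}), and the entire content of the theorem is to \emph{verify} that this specific $f_+$ does the job. Matching Kac--Schwarz operators alone will not do this for you either: even if you check that the conjugate of $a_{KW},b_{KW}$ stabilizes $\mathcal W_{Hodge}$, you have only shown that they are Kac--Schwarz operators for the Hodge point, not that the two points coincide --- and checking that conjugation directly already presupposes the result.

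What the paper actually does is quite different and more concrete. First (Section~\ref{Sim}) it factorizes $f_+=f_{+1}\circ f_{+2}$ with $f_{+2}$ odd, so that the corresponding operator $\widehat V_2$ involves only even Virasoro generators; then it uses the KW Virasoro constraints (\ref{KWvir}) and Lemma~\ref{LEM1} to replace $e^{-\widehat V_2}$ by a simple time-shift $e^{-\widehat N}$ built from Bernoulli numbers. This reduction is the key simplification you are missing. Second (Section~\ref{PR1}) it takes the integral representation (\ref{Hodin}) of $\Phi_k^{Hodge}$, applies $e^{\tilde n}e^{v_1}$ explicitly, and after a change of integration variable shows by a \emph{parity argument in $\eta=u/z$} that the transformed potential contains only even powers of $\eta$. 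Hence the result is a sum $\sum_{i,j}\tilde\kappa_{ij}(u)\,z^{-2j}\Phi_i^{KW}(z)$, which lies in $\mathcal W_{KW}$ because $b_{KW}=z^{-2}$ is Kac--Schwarz. None of this machinery --- the factorization of $f_+$, the use of the Virasoro constraints to trade $\widehat V_2$ for a shift, the integral manipulation, or the parity argument --- appears in your plan; without it you have no mechanism to establish $\Omega_k\in\mathcal W_{KW}$. Your discussion of the form of $C(u)$ is reasonable in spirit but also lacks a precise grading argument.
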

Explicit calculations show that, at least, $C(u)=1+O(u^{30})$.

From Conjecture \ref{CONJ} it follows that the KW and Hurwitz tau-functions are related by an operator from the Heisenberg--Virasoro group:
\be\label{master}
\tau_{KW}({\bf t^o})=\widehat{G}_+\,\widehat{G}_0\,\widehat{G}_- \tau_H({\bf t},\beta).
\ee
Let us describe the operators $\widehat{L}_\pm$ in more detail. Namely,
\be
\widehat{L}_\pm=\sum_{k>0}a_{\pm k} \beta^{\mp k} \widehat{L}_{\pm k}
\ee
belong to the algebras $Vir_{\pm}$.
Coefficients $a_{\pm k}$ can be described by two formal Laurent series (see Section \ref{VIR})
\be
f_\pm(z)\equiv \exp\left(\sum_{k>0} a_{\pm k} z^{1\mp k} \p_z \right) z.
\ee
These two series are of different complexity: while $f_-(z)$ is relatively simple
\be\label{fplus}
f_-(z)=\frac{z}{1+z}e^{-\frac{z}{1+z}}=z-2z^2+O(z^3),
\ee
with an inverse series
\be
f_-^{-1}(z)=\sum_{k=1}^\infty\frac{k^k}{k!}z^k,
\ee
the series $f_+(z)$ is given implicitly as a solution of the equation
\be\label{fminus}
\frac{f_+(z)}{1+f_+(z)}\exp\left({-\frac{f_+(z)}{1+f_+(z)}}\right)=E \exp(-E),
\ee
where
\be
E=1+\sqrt{\left(\frac{1}{1+f_+(z)}\right)^2+\frac{4}{3z^3}}.
\ee
The solution of the equation (\ref{fminus}) is uniquely specified by the asymptotics for large $|z|$:
\be
f_+(z)=z-\frac{2}{3}+O\left(z^{-1}\right).
\ee

Let us stress that both $f_+$ and $f_-$ can be represented as compositions of two intermediate series:
\be
f_\pm(z)=f_{\pm 1}\left(f_{\pm 2}(z)\right),
\ee
where
\be
f_{\pm i}(z)\equiv \exp\left(\sum_{k>0} a^{(i)}_{\pm k} z^{1\mp k} \p_z \right) z 
\ee
for $i=1,2$.
This factorization corresponds to the factorization of the Virasoro group operators:
\be
\exp\left(\widehat{L}_{\pm}\right)=\exp\left(\widehat{L}_{\pm}^{(2)}\right)\cdot\exp\left(\widehat{L}_{\pm}^{(1)}\right).
\ee
For (\ref{fplus}) the factorization is obvious:
\be\label{minfunc}
f_{-1}(z)=z\, e^{-z}\\
f_{-2}(z)=\frac{z}{1+z}.
\ee
Factorization of the series $f_+$ is less trivial. Namely, $f_+$ can be expressed as a composition of
\be\label{negop}
f_{+1}(z)=\frac{1}{z\exp\left(z^{-1}\right)\sinh\left(z^{-1}\right)-1}\\
=z-\frac{2}{3}+\frac{1}{9}\,{z}^{-1}+{\frac {2}{135}}\,{z}^{-2}-{\frac {1}{405}}\,{z}^{
-3}-{\frac {2}{1701}}\,{z}^{-4}-{\frac {2}{127575}}\,{z}^{-5}+{\frac {
4}{54675}}\,{z}^{-6}\\
+{\frac {13}{1148175}}\,{z}^{-7}-{\frac {614}{
189448875}}\,{z}^{-8}-{\frac {958}{795685275}}\,{z}^{-9}+{\frac {668}{
14105329875}}\,{z}^{-10}+O \left( {z}^{-11} \right) 
\ee
and $f_{+2}$, satisfying the equation
\be
\frac{1}{\left(f_{+2}(z)\right)^{2}}\coth\left(\frac{1}{f_{+2}(z)}\right)-\frac{1}{f_{+2}(z)}=\frac{1}{3z^3}.
\label{hmin}
\ee
The series
\be
f_{+2}(z)=z-\frac{1}{45}{z}^{-1}+{\frac {1}{1575}}\,{z}^{-3}+{\frac {1}{273375}}\,{z}^
{-5}-{\frac {1658}{1326142125}}\,{z}^{-7}+{\frac {251}{15962821875}}\,
{z}^{-9}\\
+{\frac {1952908}{523660371609375}}\,{z}^{-11}-{\frac {
10945903}{80120036856234375}}\,{z}^{-13}+O \left( {z}^{-15} \right)
\ee
contains only odd terms, so that the Virasoro constraints for the KW tau-function allow us to get rid of the corresponding operator (see Section \ref{Sim}).

\subsection{Kontsevich--Witten tau-function}
The Kontsevich--Witten tau-function \cite{Konts,Witten}  is one of the most important objects of modern mathematical physics. It is given by a formal series in times with rational coefficients:
\be
\tau_{KW}\left({\bf t}\right)=1+\frac{1}{6}\,{t_{{1}}}^{3}+\frac{1}{8}\,t_{{3}}
+{\frac {1}{72}}\,{t_{{1}}}^{6}+{\frac {25}{48}}\,t_{{3}}{t_{{1}}}^{3}+
{\frac {25}{128}}\,{t_{{3}}}^{2}+\frac{5}{8}\,t_{{5}}t_{{1}}
+{\frac {1}{1296}}\,{t_{{1}}}^{9}\\+{\frac {49}{576}}\,{t_{{1}}}^{6}t_{{3
}}+{\frac {1225}{768}}\,{t_{{1}}}^{3}{t_{{3}}}^{2}+{\frac {35}{48}}\,{
t_{{1}}}^{4}t_{{5}}+{\frac {1225}{3072}}\,{t_{{3}}}^{3}+{\frac {245}{
64}}\,t_{{5}}t_{{3}}t_{{1}}+{\frac {35}{16}}\,{t_{{1}}}^{2}t_{{7}}+{
\frac {105}{128}}\,t_{{9}}+\dots
\ee
In the Miwa parametrization it is equal to the asymptotic expansion of the Kontsevich matrix integral over the Hermitian matrix $\Phi$:
\be
\tau_{KW}\left(\left[\Lambda^{-1}\right]\right)=\frac{\displaystyle{\int_{\mathcal H} \left[d \Phi\right]\exp\left(-{\Tr\left(\frac{\Phi^3}{3!}+\frac{\Lambda \Phi^2}{2}\right)}\right)}}{\displaystyle{\int_{\mathcal H} \left[d \Phi\right]\exp\left(-{\Tr\frac{\Lambda \Phi^2}{2}}\right)}}.
\label{matint}
\ee
This integral depends on the external matrix $\Lambda$, which is assumed to be a positive defined diagonal matrix. 
The times $t_k$ are given by the Miwa transform of the matrix $\Lambda$:
\be
t_{k}=\frac{1}{k}\Tr{\Lambda^{-k}}.
\ee 

After the shift of the integration variable
\be
\Phi=X-\Lambda
\ee
the numerator of (\ref{matint}) can be represented as
\be
e^{-\frac{1}{3}\Tr \Lambda^3}\int_{\mathcal H} \left[d X\right]\exp\left(-{\Tr\left(\frac{X^3}{3!}-\frac{\Lambda^2 X}{2}\right)}\right).
\ee
The Itzykson--Zuber integral (\ref{IZ2}) allows us to reduce the r.h.s. of (\ref{matint}) to the ratio of determinants
\be
\tau_{KW}\left(\left[Z\right]\right)=\frac{\det_{i,j=1}^N{\Phi^{KW}_i(z_j)}}{\Delta\left(z^{-1}\right)},
\ee
where $Z\equiv \Lambda^{-1}$ and the basis vectors are given by integrals
\be\label{KWbint}
\Phi^{KW}_k(z)=\frac{e^{-\frac{1}{3z^3}}}{\sqrt{2\pi z}}\int_{-\infty}^\infty   d y\,y^{k-1} \exp\left(-\frac{y^3}{3!}+\frac{y}{2z^2}\right)\\
=\frac{1}{\sqrt{2\pi z}}\int_{-\infty}^\infty   d y\,(y+z^{-1})^{k-1} \exp\left(-\frac{y^3}{3!}-\frac{y^2}{2z}\right). 
\ee
The coefficients of the basis vectors can be found explicitly, in particular
\be
\Phi^{KW}_1(z)=\sum_{k=0}^\infty\frac{2^k\,\Gamma\left(3k+\frac{1}{2}\right)}{9^k\,(2k)!\,\Gamma\left(\frac{1}{2}\right)}z^{3k},\\
\Phi^{KW}_2(z)=-\sum_{k=0}^\infty\frac{6k+1}{6k-1}\frac{2^k\,\Gamma\left(3k+\frac{1}{2}\right)}{9^k\,(2k)!\,\Gamma\left(\frac{1}{2}\right)}z^{3k-1}.
\ee

The first line of (\ref{KWbint}) allows us to find the Kac--Schwarz operators of the KW tau-function \cite{Kac,KS2}. Indeed, we have:
\be\label{recre}
\Phi^{KW}_{k+1}(z)=\frac{e^{-\frac{1}{3z^3}}}{\sqrt{2\pi z}} \left(-z^3\frac{\p}{\p z}\right)\int_{-\infty}^\infty   d y\,y^{k-1} \exp\left(-\frac{y^3}{3!}+\frac{y}{2z^2}\right)=a_{KW}\,\Phi^{KW}_{k}(z), 
\ee
where
\be\label{CS1}
a_{KW}=\frac{1}{z}-z^3 \frac{\p}{\p z} -\frac{z^2}{2}.
\ee
Thus,
\be
a_{KW} \left\{\Phi^{KW}\right\} \subset \left\{\Phi^{KW}\right\}
\ee
and the operator $a_{KW}$ is the Kac--Schwarz operator.

To construct another Kac--Schwarz operator we use the identity\footnote{Corresponding operator generates the $\mathcal{D}$-module describing the tau-function \cite{Dijk}.}
\be\label{KWqc}
\left(a_{KW}^2-z^{-2}\right)\Phi^{KW}_{1}(z)=0.
\ee
From this identity and the recursion relation (\ref{recre}) it follows that
\be\label{bKW}
z^{-2}\Phi_k^{KW}=\Phi_{k+2}^{KW}-2(k-1)\Phi_{k-1}^{KW}.
\ee
Thus, 
\be\label{CS2}
b_{KW}=z^{-2}
\ee
is also the Kac--Schwarz operator. Kac--Schwarz operators (\ref{CS1}) and (\ref{CS2}) satisfy the canonical commutation relation
\be\label{CanKW}
\left[a_{KW},b_{KW}\right]_-=2
\ee
and generate an algebra of the Kac--Schwarz operators for the KW tau-function.

The Kac--Schwarz operators that have been constructed allow us to find two infinite series of operators, which annihilate the tau-function. Let us consider the operators
\be\label{CSsim}
\widehat{J}_{k}^{KW}\equiv\widehat{Y}_{(b_{KW})^k}=\widehat{Y}_{z^{-2k}}=\frac{\p}{\p t_{2k}}, \,\,\,\,\,\mathrm{for} \quad k\geq 1.
\ee
where we have used (\ref{Yquant}). From the general properties of the Kac--Schwarz operators it follows that the KW tau-function is an eigenfunction of the operators (\ref{CSsim}). The same is true for the Virasoro operators 
\be
\widehat{L}_k^{KW}\equiv\widehat{Y}_{l_k^{KW}}+\frac{1}{16}\delta_{k,0}=\frac{1}{2}\widehat{L}_{2k}-\frac{1}{2}\frac{\p}{\p t_{2k+3}}+\frac{1}{16}\delta_{k,0},
\ee
where
\be\label{VirKS}
l_k^{KW}\equiv-\frac{1}{4}\left[(b_{KW})^{k+1},a_{KW}\right]_+=\frac{1}{2}\left( l_{2k}-z^{-2k-3}\right)
\ee
for $k\geq -1$.

To find corresponding eigenvalues it is enough to check that these operators satisfy the commutation relations of the algebra $\mathcal{V}_+$:
\be
\left[\widehat{J}_k^{KW},\widehat{J}_m^{KW}\right]_-=0,\,\,\,\, \mathrm{for} \,\,\,\, k,m\geq 1,\\
\left[\widehat{L}_{k}^{KW},\widehat{J}_m^{KW}\right]_-=-m\widehat{J}_{k+m}^{KW},\,\,\,\, \mathrm{for} \,\,\,\, k\geq -1 \,\,\,\, \mathrm{and} \,\,\,\, m\geq 1,\\
\left[\widehat{L}_{k}^{KW},\widehat{L}_{m}^{KW}\right]_-=(k-m)\widehat{L}_{k+m}^{KW},\,\,\,\, \mathrm{for} \,\,\,\, k,m\geq -1.
\ee
Since all generators of the algebra can be obtained as commutators of some other generators, the eigenvalues of all of them are equal to zero:
\be\label{KWfirst}
\widehat{J}_m^{KW} \tau_{KW} =0,~~~m\geq1
\ee
and
\be\label{KWvir}
\widehat{L}_{m}^{KW}\tau_{KW}=0,~~~m\geq-1.
\ee
Obviously, the first identity is just another way to say that the KW tau-function does not depend on even times (and is a tau-function of the KdV hierarchy).

Then, for any function $Z({\bf t^o})$ depending only on the odd times $t_{2m+1}$, we have 
\be
\widehat{\mathcal{L}}_k Z({\bf t^o})=\left(\widehat{L}_{2k}+\frac{1}{8}\delta_{k,0}\right) Z({\bf t^o}),~~~k\geq-1,
\ee
where the operators
\be
\widehat{\mathcal{L}}_m=\sum_{k=1}^\infty \left(2k+1\right)t_{2k+1}\frac{\p}{\p t_{2k+2m+1}}+\frac{1}{2}\sum_{k=0}^{m-1}\frac{\p^2}{\p t_{2k+1} \p t_{2m-2k-1}}+\frac{t_1^2}{2}\delta_{m,-1}+\frac{1}{8}\delta_{m,0},~~~m\geq -1
\label{VirK}
\ee
constitute the same subalgebra of the Virasoro algebra:
\be
\left[\widehat{\mathcal{L}}_n,\widehat{\mathcal{L}}_m\right]_-=2(n-m)\widehat{\mathcal{L}}_{n+m},~~~n,m\geq-1.
\ee
Thus, the Virasoro constraints from (\ref{KWvir}) are equivalent to the standard Virasoro constraints for the KW tau-function
\be
\widehat{\mathcal{L}}_m \tau_{KW}=\frac{\p}{\p t_{2m+3}}\tau_{KW},~~~m\geq-1.
\label{vir}
\ee
The Virasoro operators $\widehat{\mathcal{L}}_m$ are combinations of the even part of the current (\ref{boscur}):
\be
\widehat{\mathcal{J}}(z)=\frac{\widehat{J}(z)-\widehat{J}(-z)}{2}=\sum_{k=1}^\infty \left((2k+1) t_{2k+1} z^{2k} +\frac{1}{z^{2k+2}}\frac{\p}{\p t_{2k+1}}\right),
\ee
namely
 \be
\normordboson\widehat{\mathcal{J}}(z)^2\normordboson=2\sum_{k=-\infty}^\infty\frac{\widehat{\mathcal{L}}_k}{z^{2k+2}}-\frac{1}{4z^2}.
\ee




\subsection{Hurwitz tau-function}\label{HURWITZ}

According to \cite{HHK,HHK1} the Hurwitz tau-function, given by (\ref{Hurwitzgf}) can be represented in terms of the cut-and-join operator (\ref{CAJ}):
\be\label{cajtau}
\tau_H({\bf t};\beta)=\exp\left(\frac{ \beta}{2}\widehat{M}_0\right)\exp\left({t_1}\right)\\
=1+{t_1}+\frac{e^{\beta}}{2} \left({t_1^2}+{t_2}\right)+\frac{e^{-\beta }}{2} \left({t_1^2}-{t_2}\right)+\cdots
\ee
The coefficients of the parameter $\beta$ expansion are polynomials in $t_k$ (up to an exponential prefactor):
\be
\tau_H({\bf t};\beta)={{\rm e}^{t_{{1}}}}\left(1+t_{{2}}{\beta}+\frac{1}{4}\, \left( {t_{{1}}}^{2}+2\,{t_{{2}}}^{2}+6\,t_{{3}} \right) {\beta}^
{2}\right.\\
\left.+\frac{1}{12}\,\left( 3\,{t_{{1}}}^{2}t_{{2}}+2\,{t_{{2
}}}^{3}+18\,t_{{3}}t_{{2}}+16\,t_{{2}}t_{{1}}+2\,t_{{2}}+32\,t_{{4}}
 \right) {{\beta}}^{3}
+\cdots
\right).
\ee

From (\ref{cajtau}) we can easily derive an expression for the basis vectors \cite{Kazarian}: using the notations of Section {\ref{S1}} we get
\be
\Phi_k^H(z)= (\gamma_k^k)^{-1} \exp{\left(\frac{\beta}{2}m_0\right)} \exp(j_{-1})\, z^{1-k},
\ee
where $\gamma_{k}^{k}= \mbox{res}_{z} \left(z^{-k} \exp{\left(\frac{\beta}{2}m_0\right)}\, z^k\right)=\exp\left(\frac{\beta}{24}+\frac{\beta}{2}\left(k-\frac{1}{2}\right)^2\right) $, so that 
\be
\Phi_k^H(z)=e^{\frac{\beta}{2}\left(\left(z\p_z-\frac{1}{2}\right)^2-\left(k-\frac{1}{2}\right)^2\right)} e^z z^{1-k}
=\sum_{i=0}^\infty e^{\frac{\beta}{2}\left(\left(i-k+\frac{1}{2}\right)^2-\left(k-\frac{1}{2}\right)^2\right)}\frac{z^{i-k+1}}{i!}.  
\ee
This basis is not a canonical one. It is convenient to rewrite the basis vectors $\Phi_k^H$ in terms of integrals in one variable \cite{MMRP}. Namely, for any operator $A$ we have
\be
e^{\frac{\beta}{2}A^2}=\frac{1}{\sqrt{2\pi\beta}}\int_{-\infty}^\infty   d y\, \exp\left(-\frac{y^2}{2\beta}+y A\right), 
\ee
so that, since $m_0=\left(z\p_z-\frac{1}{2}\right)^2+\frac{1}{12}$, we have
\be\label{Hurbv}
\Phi_k^{H} (z) = \frac{\exp\left(-\frac{\beta}{2}\left(k-\frac{1}{2}\right)^2\right)}{\sqrt{2\pi\beta}}\int_{-\infty}^\infty   d y\,(z e^y)^{1-k} \exp\left(-\frac{y^2}{2\beta}-\frac{y}{2}+z e^y\right), 
\ee
or, combining terms in a different way,
\be\label{Hurbv2}
\Phi_k^{H} (z)=\frac{z^{1-k}}{\sqrt{2\pi\beta}}\int_{-\infty}^\infty   d y\,\exp\left(-\frac{1}{2\beta}\left(y+\left(k-\frac{1}{2}\right)\beta\right)^2+z e^y\right). 
\ee
This integral can diverge (for example, it diverges for any real nonzero $z$) and should be only considered as a formal Laurent series. The tau-function in the Miwa parametrization is
\be\label{Hureig}
\tau_{H}\left(\left[Z\right];\beta\right)=\frac{\det_{i,j=1}^N{\Phi_i^{H}(z_j)}}{\Delta\left(z^{-1}\right)}\\
=\frac{e^{\beta\left(\frac{N}{24}-\frac{N^3}{6}\right)}}{(2\pi\beta)^\frac{N}{2}\Delta({z}^{-1})} \prod_{i=1}^N \int_{-\infty}^\infty
dy_i\,  \Delta\left({z}^{-1} e^{-y}\right)\exp\left(-\sum_{i=1}^NW_H(z_i,y_i)\right)
\ee
with the potential
\be
W_H(z,y)=\frac{y^2}{2\beta}+\frac{y}{2}-{z}e^{y}.
\ee


This eigenvalue integral representation allows us to construct several matrix model representations. First of all, one can introduce additional integration variables, given by the unitary matrix. As a result, the Hurwitz tau-function can be represented as a Hermitian matrix integral \cite{Morsh,MMRP}:
\be
\tau_H\left(\left[Z\right];\beta\right)={\mathcal{P}}_H^{-1}\int_{{\mathcal H}} \left [ d \mu(\Phi) \right] \exp\left(-\frac{1}{2\beta}\Tr \Phi^2 +\Tr \left(e^{\Phi-N\beta/2}Z\right)\right),
\ee
where the measure of integration $\left [ d \mu(\Phi) \right]$ is non-flat and can be represented in terms of the standard measure on the space of Hermitian matrices, given by (\ref{hermes}), as follows:
\be
\left[ d \mu(\Phi) \right]=\sqrt{\det\frac{\sinh\left(\frac{{\Phi}\otimes {1}-{1}\otimes{\Phi}}{2}\right)}{\left(\frac{{\Phi}\otimes {1}-{1}\otimes{\Phi}}{2}\right)}} \left[ d \Phi \right].
\ee
The coefficient ${\mathcal{P}}_H=\int_{\mathcal H} \left [ d \mu(\Phi) \right] \exp\left(-\frac{1}{2\beta}\Tr \Phi^2 \right)$ does not depend on the external matrix $Z$.

The same tau-function can be given by an integral over normal matrices \cite{MMRP}. Namely, the integral (\ref{Uni}) in the eigenvalue form gives
\be
\tau_H^{N}({\bf t};\beta)=\int_\mathcal{U} \left[d U\right] \exp\left(\sum_{k=1}^\infty t_k U^{\dagger k}\right ) \tau_H\left(\left[U\right];\beta\right)\\
=\frac{1}{N!} \prod_{i=1}^N \left(\frac{1}{2\pi i}\oint \frac{d z_i}{z_i} \right)\Delta(z)
\Delta(z^{-1})\exp\left(\sum_{k=1}^\infty \sum_{j=1}^N t_k z_j^{-k}\right) \tau_{H}\left(\left[Z\right];\beta\right).
\ee
On substitution of (\ref{Hureig}) into this relation we get an integral of the form 
\be\label{IntH}
\prod_{i=1}^N \left(\int_{-\infty}^\infty d\,y_i\oint \frac{d z_i}{2\pi iz_i} \right)\Delta(z)\Delta(z^{-1}e^{-y}) \exp\left({\sum_{i=1}^N{W}(z_i,y_i)}\right)
\ee
for some known function ${W}(z,y)$. This integral can be reduced to a normal matrix integral with the measure 
(\ref{NMMm}). Indeed, let us assume that $z_i$ integral is taken over the circle $|z_i|=\exp(-y_i/2)$, so that $|z_i^{-1}|=\exp(-y_i/2)$. Then
\be
(\ref{IntH})= \pi^{-N}\prod_{i=1}^N \int_{\raise-1pt\hbox{$\mbox{\Bbbb C}$}}\frac{d^2 z_i}{|z_i|^2} \,\Delta(z) \Delta(\bar{z}) \exp\left({\sum_{i=1}^N{W}(z_i,-\log|z_i|^2)}\right),
\ee
where $d^2z=d \left(\Re\, z\right)d \left(\Im\, z\right)$, so that the tau-function is given by a normal matrix integral
\be
\tau_H^N({\bf t};\beta)= {\mathcal P}^{-1}\int_{\mathcal{N}} \left[d Z\right]
\exp\left(- \Tr \widetilde{W}_H+\sum_{k=1}^\infty t_k  \Tr { Z}^{-k}\right),
\ee
where 
\be
\widetilde{W}_H={\frac{1}{2\beta}\left(\log {{Z^\dagger Z }}\right)^2}+\frac{1}{2} \log {{Z Z^\dagger}}- (Z^\dagger)^{-1}
\ee
and $ {\mathcal P}$, again, does not depend on times. 

\subsection{Kac--Schwarz description of the Hurwitz tau-function}\label{KSHH}
Let us find the Kac--Schwarz operators for the Hurwitz tau-function. First of all, from (\ref{Hurbv2}) we see that
\be\label{KSH1}
\exp\left(\pm \beta z \frac{\p}{\p z}\right) \Phi_k^H(z)\\
= \frac{\left(ze^{\pm \beta}\right)^{1-k}}{\sqrt{2\pi\beta}}\int_{-\infty}^\infty   d y\,\exp\left(-\frac{1}{2\beta}\left(y+\left(k-\frac{1}{2}\right)\beta\right)^2+z e^{y\pm \beta}\right)\\
= 
\frac{\left(ze^{\pm \beta}\right)^{1-k}}{\sqrt{2\pi\beta}}\int_{-\infty}^\infty   d y\,\exp\left(-\frac{1}{2\beta}\left(y+\left(k\mp 1-\frac{1}{2}\right)\beta\right)^2+z e^{y}\right),
\ee
where we have shifted the integration variable $y \mapsto y\mp\beta$. Therefore, operator
\be\label{bhurw}
b_H=z^{-1}\exp\left(-\beta z \frac{\p}{\p z}\right)
\ee
is the Kac--Schwarz operator
\be\label{bhact}
b_H\, \Phi_k^H=e^{\beta(k-1)}\Phi_{k+1}^H.
\ee
Combining relation (\ref{KSH1}) with an identity
\begin{multline}\label{OPh}
z \frac{\p}{\p z}\, \Phi_k^H(z)= \frac{\exp\left(-\frac{\beta}{2}\left(k-\frac{1}{2}\right)^2\right)}{\sqrt{2\pi\beta}}\int_{-\infty}^\infty   d y\,\left(1-k+z e^y\right) \\
\times(z e^y)^{1-k} \exp\left(-\frac{y^2}{2\beta}-\frac{y}{2}+z e^y\right)
\end{multline}
we get 
\be\label{KSHuw}
\left(z \exp\left(\beta z \frac{\p}{\p z}\right) - z \frac{\p}{\p z}\right) \Phi_k^H =(k-1)\,\Phi_k^H.
\ee
Thus, the operator
\be\label{Huraop}
a_H=z \exp\left(\beta z \frac{\p}{\p z}\right) - z \frac{\p}{\p z}\\
\ee
is the Kac--Schwarz operators for the Hurwitz tau-function.\footnote{The operators $a_H$ and $b_H$ can be obtained by a conjugation of the Kac--Schwarz operators $z-z\p_z$ and $z^{-1}$ for the tau-function $\exp(t_1)$
\be
a_H=e^{\frac{\beta}{2}m_0}\left(z-z\p_z\right)e^{-\frac{\beta}{2}m_0},\\
b_H=e^{-\beta} e^{\frac{\beta}{2}m_0}\,z^{-1}\,e^{-\frac{\beta}{2}m_0}.
\ee
 } 
 It is easy to check that these Kac--Schwarz operators satisfy the commutation relation
\be\label{comH}
\left[a_H,b_H\right]_-=b_H.
\ee

The operators $a_H$ and $b_H$ completely define a point of the Grassmannian. Indeed, on substitution of the first basis vector 
\be
\Phi_1^{H}= 1+ \sum_{m=1}\varphi_{m,1}^H z^m
\ee
into (\ref{KSHuw}) for $k=1$ we obtain the first coefficient, $\varphi_{1,1}^H=1$, and a recursive relation, which allows us to find all other $\varphi_{m,1}^H$:
\be
\varphi_{m+1,1}^H=\frac{e^{m\beta}}{m+1}\varphi_{m,1}^H.
\ee
This yields $\Phi_1^{H}$, while all higher basis vectors can be obtained with the help of
(\ref{bhact}).

The operators $a_H$ and $b_H$ can be quantized via (\ref{Yquant}), and for the obtained operators $\widehat{Y}_{a_H}$ and $\widehat{Y}_{b_{H}}$ the Hurwitz tau-function is an eigenfunction
\be\label{KSH}
\widehat{Y}_{a_H}=t_1-\widehat{L}_0+\beta \widehat{L}_{-1}+\frac{\beta^2}{2}\widehat{M}_{-1}+\ldots,\\
\widehat{Y}_{b_{H}}=\frac{\p}{\p t_1}-\beta\left(\widehat{L}_1+\frac{\p}{\p t_1}\right)+
\frac{\beta^2}{2}\left(\widehat{M}_{1}+2\widehat{L}_1+\frac{\p}{\p t_1}\right)+\ldots.
\ee
It is easy to find the corresponding eigenvalue for the operator $\widehat{Y}_{a_H}$.
Indeed, it is obvious that the operator $\hat{Y}_{z \exp\left(\beta z \p_z \right)}$ has positive energy. Thus, $\widehat{Y}_{z \exp\left(\beta z \p_z \right)} Z({\bf t})|_{{\bf t}={\bf 0}}=0$ for all functions $Z({\bf t})$, and we have
\be
\widehat{Y}_{a_H} \tau_H({\bf t})|_{{\bf t}={\bf 0}}=-\widehat{Y}_{z \p_z}\tau_H({\bf t})|_{{\bf t}={\bf 0}}=0,
\ee
so that
\be
\widehat{Y}_{a_H} \tau_H({\bf t})=0.
\ee

To find the eigenvalue of the operator $\widehat{Y}_{b_H}$ let us consider a commutator of the operators (\ref{KSH}). It is a deformation of the commutation relation (\ref{comH}): 
\be
\left[\widehat{Y}_{a_H},\widehat{Y}_{b_H}\right]_-=\widehat{Y}_{b_H}+C_H,
\ee
where $C_H$ is a constant. Since the Hurwitz tau-function is an eigenfunction for both operators, for the commutator we have
\be
\left[\widehat{Y}_{a_H},\widehat{Y}_{b_H}\right]_- \tau_H({\bf t})=0, 
\ee
so that
\be
\widehat{Y}_{b_H}\, \tau_H({\bf t})=-C_H\, \tau_H({\bf t}).
\ee
From the definition of operators ${Y}_{a}$ and ${R}_k$ in (\ref{Rdif}) and (\ref{Ydif}) we have
\be\label{Yphicor}
\widehat{Y}_{z^{-k}q^{z\p_z}}=-q^{k+1}\widehat{R}_k(q).
\ee
Thus,
\be\label{Yah}
\widehat{Y}_{a_H}=-\widehat{\Phi}_{-1}(e^{\beta})-\widehat{L}_0,\\
\widehat{Y}_{b_H}=-e^{-2\beta}\widehat{\Phi}_{1}(e^{-\beta}),
\ee
so that the commutation relation (\ref{comcoin}) yields
\be
\left[\widehat{Y}_{a_H},\widehat{Y}_{b_H}\right]_-=\left[\widehat{\Phi}_{-1}(e^{\beta})+\widehat{L}_0,e^{-2\beta}\widehat{\Phi}_{1}(e^{-\beta})\right]_-=\widehat{Y}_{b_H}-e^{-\beta}.
\ee
 Thus, $C_H=-e^{-\beta}$ and
 \be
 \widehat{Y}_{b_H}\, \tau_H({\bf t})=e^{-\beta}\, \tau_H({\bf t}).
 \ee

Let us consider two families of operators
\be
j_k^{H}=\left(b_H\right)^{k} \,\,\,\, \mathrm{for} \,\,\, k\geq 1,\\
l_k^{H}=-\left(b_H\right)^ka_H\,\,\,\,\,\mathrm{for} \,\,\, k\geq -1.
\ee
It is obvious that all these operators (maybe except for $l_{-1}^H$) belong to the Kac--Schwarz algebra. For $l_{-1}^H$ this statement can be checked explicitly:
\be
\l_{-1}^{H}\Phi_k^H=(1-k)e^{(2-k)\beta}\Phi_{k-1}.
\ee
Our notations reflect the fact that these operators are the deformations of the standard operators (\ref{obdif}):
\be\label{expH}
j_k^{H}=j_k+O(\beta),\\
l_k^{H}=l_k+\frac{k+1}{2}j_k-j_{k-1}+O(\beta).
\ee
where we assume that $j_0=1$. From the commutation relation (\ref{comH}) one can easily derive the commutation relations
\be
\left[j_k^{H},j_m^{H}\right]_-=0,\\
\left[l_k^{H},j_m^{H}\right]_-=-m\,j_{k+m}^H,\\
\left[l_k^{H},l_m^{H}\right]_-=(k-m)l_{k+m}^H,
\ee
which coincide with the commutation relations of the subalgebra $\mathcal{V}_+$ of the Heisenberg--Virasoro algebra. We have the following explicit expressions: 
\be
j_k^{H}=e^{\frac{k^2-k}{2}\beta}z^{-k}\exp\left(-k \beta z \frac{\p}{\p z}\right),\\
l_k^H=-e^{\frac{k^2-3k}{2}\beta}z^{1-k}\exp\left((1-k) \beta z \frac{\p}{\p z}\right)+e^{\frac{k^2-k}{2}\beta}z^{-k}\exp\left(-k \beta z \frac{\p}{\p z}\right)z\frac{\p}{\p z}.
\ee
From (\ref{Yphicor}) for $k>0$ we have
\be
\widehat{Y}_{z^{-k}e^{-k\beta z\p_z}}=-e^{-(k+1)k\beta}\,\widehat{\Phi}_k\left(e^{-k\beta}\right),\\
\widehat{Y}_{z^{-k}e^{-k\beta z\p_z}z\p_z}=\frac{1}{k}\frac{\p}{\p \beta}\left(e^{-(k+1)k\beta}\,\widehat{\Phi}_k\left(e^{-k\beta}\right)\right).
\ee
Thus,
\be
\widehat{Y}_{j^H_k}=-e^{-\frac{k^2+3k}{2}\beta}\widehat{\Phi}_k\left(e^{-k\beta}\right),
\ee
and for $\widehat{Y}_{l^H_k}$ we have
\be
\widehat{Y}_{l^H_0}=\widehat{L}_0+\widehat{\Phi}_{-1}(e^{\beta})=-\widehat{Y}_{a_H},\\
\widehat{Y}_{l^H_1}=\frac{\p}{\p \beta}e^{-2\beta}\widehat{\Phi}_{1}(e^{-\beta})=-\frac{\p}{\p \beta}\,\widehat{Y}_{b_H},\\
\widehat{Y}_{l^H_k}=e^{-\frac{k^2+k}{2}\beta}\widehat{\Phi}_{k-1}\left(e^{(1-k)\beta}\right)\\
+\frac{1}{k}e^{\frac{k^2-k}{2}\beta}\frac{\p}{\p \beta}\left(e^{-(k+1)k\beta}\,\widehat{\Phi}_k\left(e^{-k\beta}\right)\right),\,\,\,\, \mathrm{for}\,\,\,\,\, k=-1\,\,\,\,\, \mathrm{and} \,\,\,\,\,k>1.
\ee
An expansion of these operators follows for the expansion (\ref{expH}) 
\be
\widehat{Y}_{j^H_k}=\widehat{J}_k+O(\beta),\\
\widehat{Y}_{l^H_k}=\widehat{L}_k+\frac{k+1}{2}\widehat{J}_k-\widehat{J}_{k-1}+O(\beta).
\ee
The relations (\ref{Phicom}), (\ref{comcoin}) and the commutation relation
\be
\left[\widehat{L}_0,\widehat{\Phi}_{k}(q)\right]_-=-k\,\widehat{\Phi}_{k}(q)
\ee
allow us to find the commutation relations between the operators $\widehat{Y}_{j^H_k}$ and $\widehat{Y}_{l^H_k}$
\be
\left[\widehat{Y}_{j^H_k},\widehat{Y}_{j^H_m}\right]_-=0,\\
\left[\widehat{Y}_{l^H_k},\widehat{Y}_{j^H_m}\right]_-=-m\left(\widehat{Y}_{j^H_{k+m}}-\delta_{m+k,1}e^{-\beta}\right),\\
\left[\widehat{Y}_{l^H_k},\widehat{Y}_{l^H_m}\right]_-=(k-m)\left(\widehat{Y}_{l^H_{k+m}}-e^{-\beta}\delta_{k+m,1}+\frac{1}{2}e^{-2\beta}\delta_{k+m,2}\right).
\ee
Thus, the operators
\be
\widehat{J}_k^{H}=\widehat{Y}_{j^H_{k}}-\delta_{k,1}e^{-\beta},\,\,\,\,k\geq1\\
\widehat{L}_{k}^{H}=\widehat{Y}_{l^H_{k}}-e^{-\beta}\delta_{k,1}+\frac{1}{2}e^{-2\beta}\delta_{k,2},\,\,\,k\geq -1.
\ee
satisfy the commutation relations of the algebra $\mathcal{V}_+$:
\be
\left[\widehat{J}_k^{H},\widehat{J}_m^{H}\right]_-=0,\,\,\,\, \mathrm{for}\,\,\,\, k,m\geq 1,\\
\left[\widehat{L}_{k}^{H},\widehat{J}_m^{H}\right]_-=-m\widehat{J}_{k+m}^{H},\,\,\,\, \mathrm{for} \,\,\,\, k\geq -1 \,\,\,\, \mathrm{and} \,\,\,\, m\geq 1, \\
\left[\widehat{L}_{k}^{H},\widehat{L}_{m}^{H}\right]_-=(k-m)\widehat{L}_{k+m}^{H}, \,\,\,\, \mathrm{for}\,\,\,\, k,m\geq -1.
\ee
Since $\tau_H$ is an eigenfunction of the operators $\widehat{J}_k^{H}$ and $\widehat{L}_{k}^{H}$, from these commutation relations it follows that the Hurwitz tau-function satisfies the constraints
\be
\widehat{J}_m^{H} \tau_H =0, \,\,\,\, \mathrm{for}\,\,\,\, m\geq 1,\\
\widehat{L}_{m}^{H}\tau_H=0, \,\,\,\, \mathrm{for}\,\,\,\, m\geq -1.
\ee

\subsection{Hodge tau-function}

The Hodge tau-function (\ref{gf}) is a deformation of the KW tau-function:
\be\label{Hodgeex}
\tau_{Hodge}({\bf t};u)=\tau_{KW}({\bf t})\\
+\left(\frac{1}{6}\,t_{{2}}+  \frac{5}{3}\,t_{{4}}t_{{1}}+{\frac {25}{48}}\,t_{{
3}}t_{{2}}+{\frac {25}{36}}\,t_{{2}}{t_{{1}}}^{3} +
 {\frac {35}{8}}\,t_{{8}}+{\frac {1225}{768}}\,t_{{2}}{t_{{3}}}
^{2}+{\frac {35}{18}}\,{t_{{1}}}^{4}t_{{4}}+{\frac {35}{4}}\,{t_{{1}}}
^{2}t_{{6}}\right.\\
\left.+{\frac {245}{48}}\,t_{{5}}t_{{1}}t_{{2}}+{\frac {49}{432}}
\,{t_{{1}}}^{6}t_{{2}}+{\frac {1225}{288}}\,t_{{2}}{t_{{1}}}^{3}t_{{3}
}+{\frac {245}{24}}\,t_{{3}}t_{{1}}t_{{4}} +\cdots
\right)u\\
+\left(  {\frac {25}{72}}\,{t_{{2}}}^{2}+\frac{1}{6}\,{t_{{1}}}^{4}+\frac{3}{2}\,t_{{3}
}t_{{1}} +  {\frac {2737}{288}}\,t_{{7}}+{\frac {
245}{18}}\,t_{{4}}t_{{1}}t_{{2}}+{\frac {133}{48}}\,t_{{3}}{t_{{1}}}^{
4}+{\frac {175}{12}}\,t_{{5}}{t_{{1}}}^{2}+{\frac {1225}{432}}\,{t_{{2
}}}^{2}{t_{{1}}}^{3}\right.\\
\left.+\frac{1}{36}\,{t_{{1}}}^{7}+{\frac {147}{16}}\,{t_{{3}}}^
{2}t_{{1}}+{\frac {1225}{576}}\,t_{{3}}{t_{{2}}}^{2} +\cdots
\right)u^2\\
+\left(\frac{1}{2}\,t_{{2}}t_{{1}}+ {\frac {1225}{1296}}\,{t_{{2}}}^{3}
+{\frac {343}{32}}\,t_{{6}}+{\frac {245}{16}}\,t_{{3}}t_{{1}}t_{{2}}+{
\frac {77}{6}}\,{t_{{1}}}^{2}t_{{4}}+{\frac {35}{18}}\,{t_{{1}}}^{4}t_
{{2}}+\cdots 
\right)u^3+\cdots.
\ee
It is related to the Hurwitz tau-function by (\ref{Kaz}), or, in other words, by a linear change of variables accomplished by cancelation of proper linear and quadratic terms in free energy. Let us describe this relation in more detail.

The prefactor  $\exp\left({-\sum_{k>0}\frac{k^{k-1}\beta^{k-1}t_k}{k!}}\right)$ in (\ref{HtoH}) is responsible for genus zero one-point contributions from the non-stable maps.
On the level of basis vectors it corresponds to the multiplication:
\be
\Phi^{(0)}_k(z)=\exp\left( -\sum_{k=1}^\infty\frac{k^{k-2}}{k!} \beta^{k-1} z^k\right)\Phi_k^H(z).
\ee

The next step in the chain of transformations (\ref{Kaz}) is given by the operator $\widehat{L}_-^{(1)}$, which, according to (\ref{viract}), on the level of basis vectors yields 
\begin{equation}\label{Phi1}
\begin{split}
\Phi_k^{(1)}(z)&
=\frac{\beta z}{f_{-1}(\beta z)}\sqrt{\frac{1}{\beta}\frac{\p f_{-1}(\beta z)}{\p z}}\, \Phi_k^{(0)}\left(\frac{f_{-1}(\beta z)}{\beta}\right)\\
&=\sqrt{1-\beta z}\,e^{\frac{\beta z(1+z)}{2}-z}\sum_{i=0}^\infty e^{\frac{\beta}{2}\left(\left(i-k+\frac{1}{2}\right)^2-\left(k-\frac{1}{2}\right)^2\right)}\frac{\left(z e^{-\beta z}\right)^{i-k+1}}{i!} \\  
&=\sqrt{\frac{{1-\beta z}}{{2\pi\beta}}}e^{-\frac{\beta}{2}\left(k-\frac{1}{2}\right)^2+\frac{\beta z(1+z)}{2}-z}\int_{-\infty}^\infty   d y\,(z e^{y-\beta z})^{1-k} e^{-\frac{y^2}{2\beta}-\frac{y}{2}+z e^{y-\beta z}},
\end{split}
\end{equation}
where $f_{-1}(z)$ is the series (\ref{minfunc}). After a change of integration variable $y \mapsto y+\beta z$ these basis vectors can be represented as
\be\label{Int1}
\Phi_k^{(1)}(z)=\sqrt{\frac{{1-\beta z}}{{2\pi\beta}}}e^{-\frac{\beta}{2}\left(k-\frac{1}{2}\right)^2}\int_{-\infty}^\infty   d y\,(z e^{y})^{1-k} \exp\left(-\frac{y^2}{2\beta}-\frac{y}{2}+z \left(e^{y}-1-y\right)\right). 
\ee
Corresponding tau-function in the Miwa parametrization is
\be
\tau_{(1)}\left(\left[Z\right]\right)=\frac{\det_{i,j=1}^N{\Phi_i^{(1)}(z_j)}}{\Delta\left(z^{-1}\right)}\\
=\frac{e^{\beta\left(\frac{N}{24}-\frac{N^3}{6}\right)}}{(2\pi\beta)^\frac{N}{2}\Delta({z}^{-1})} \prod_{i=1}^N \sqrt{1+\beta {z}_i}\int_{-\infty}^\infty
dy_i\,  \Delta\left({z}^{-1} e^{-y}\right)\exp\left(-\sum_{i=1}^N W_{(1)}(z_i,y_i)\right),
\ee
where the potential is
\be
W_{(1)}(z,y)=\frac{y^2}{2\beta}+\frac{y}{2}-{z} \left(e^{y}-y-1\right).
\ee
Independent time variables can be introduced with the help of (\ref{Uni}):
\begin{multline}\label{minto}
\tau^N_{(1)}({\bf t})= \frac{1}{N!} \prod_{i=1}^N \left(\frac{1}{2\pi i}\oint \frac{d z_i}{z_i} \right)\Delta(z)
\Delta(z^{-1})\exp\left(\sum_{k=1}^\infty \sum_{j=1}^N t_k z_j^{-k}\right) \tau_{(1)}\left(\left[Z\right]\right)
\\
=\frac{e^{\beta\left(\frac{N}{24}-\frac{N^3}{6}\right)}}{(2\pi\beta)^\frac{N}{2}N! } \prod_{i=1}^N \left(\frac{1}{2\pi i}\oint \frac{d z_i}{z_i} \int_{-\infty}^\infty d y_i\right)\Delta(z) \Delta\left({z}^{-1} e^{-y}\right)\\
\times\det(1-\beta Z)^{\frac{1}{2}}\prod_{i=1}^N e^{-W_{(1)}(z_i,y_i)+\sum_{k=1}^\infty  t_k z_i^{-k}}.
\end{multline}
Using the change of variables from the previous section we reduce the tau-function to the normal matrix integral
\be
\tau^N_{(1)}({\bf t})={\mathcal P}^{-1}_{(1)}\int_{\mathcal N} \left[d Z\right]
\exp\left( -\Tr \widetilde{W}_{(1)}+\sum_{k=1}^\infty t_k \Tr { Z}^{-k}\right),
\ee
where the normalization factor ${\mathcal P}_{(1)}=\int_{\mathcal N} \left[d Z\right]
\exp\left( -\Tr \widetilde{W}_{1}\right)$ does not depend on the variables $t_k$
and
\be\label{1pot}
\widetilde{W}_{(1)}=\frac{1}{2\beta}\left(\log {Z^\dagger Z }\right)^2+\frac{1}{2} \log {{Z^\dagger Z }}-(Z^\dagger)^{-1}+Z -Z \log {{Z^\dagger Z }}-\frac{1}{2}\log\left(1-\beta Z\right).
\ee
Let us check this matrix integral formula for $N=1$ and the Miwa parametrization of times $t_k=\frac{1}{k} y^k$. From (\ref{minto}) we have
\be
\tau_{(1)}(\left[y\right])=\frac{1}{\pi}\frac{e^{-\frac{\beta}{8}}}{\sqrt{2\pi\beta}} \int_{\raise-1pt\hbox{$\mbox{\Bbbb C}$}}d^2 z
\frac{\sqrt{1-\beta z}}{1-y/z}e^{-\frac{1}{2\beta}\left(\log |z|^2\right)^2-\frac{1}{2}\log|z|^2+\bar{z}^{-1}-z+z\log|z|^2}\\
=1+\beta\,y+ \frac{1}{2}\left({
{\rm e}^{\beta}} -1-\beta+\frac{1}{2}\,{\beta}^{2} \right) {y}^{2}+ \frac{1}{2}\left( \frac{1}{3}\,
{{\rm e}^{3\,\beta}}-{{\rm e}^{\beta}}+\frac{1}{6}\,{\beta}^{3}-{{\rm e}^{\beta}}\beta+\beta+{\frac {2}{3}} \right) {y}^{3}\\
+ \left( {\frac {5}{24}}\,{\beta
}^{3}-{\frac {1}{8}}+\frac{3}{4}\,{{\rm e}^{\beta}}\beta-\frac{1}{3}\,{{\rm e}^{3\,
\beta}}\beta-\frac{1}{6}\,{{\rm e}^{3\,\beta}}+\frac{1}{8}\,{{\rm e}^{\beta}}{\beta}^{
2}+\frac{1}{24}\,{{\rm e}^{6\,\beta}}-{\frac {1}{96}}\,{\beta}^{4}+\frac{1}{4}\,{
{\rm e}^{\beta}}-{\frac {5}{12}}\,\beta \right) {y}^{4}\\
+O(y^5).
\ee
It is easy to check that this expression coincides with an expansion of the first basis vector $\Phi_1^{(1)}$ from (\ref{Phi1}) as it should be according to (\ref{rest2}). Other basis vectors can also be obtained in the form of integrals over $\mathbb{C}$.

The next step in the transformation from $\tau_{H}$ to $\tau_{Hodge}$ is given by the operator  $\widehat{L}_-^{(2)}$, corresponding to the series $f_{-2}(z)$ from (\ref{minfunc}). The operator $\widehat{L}_-^{(2)}$ is very simple:
\be
\widehat{L}_-^{(2)}=-\beta \widehat{L}_{-1}.
\ee
Indeed, consider an operator $l_{-1}=z^2 \frac{\p}{\p z}$ which shifts the variable $z^{-1}$. Then
\be
\exp\left(-\beta z^2 \frac{\p}{\p z}\right) \left[z\right]=\exp\left(\beta \frac{\p}{\p z^{-1}}\right) \left[\frac{1}{z^{-1}}\right]
=\frac{1}{z^{-1}+\beta}=\frac{z}{1+\beta z}=f_{-2}(z).
\ee
Since for a constant $a$ the conjugation with the operator $e^{a \widehat{L}_{-1}}$ yields
\be
e^{a \widehat{L}_{-1}}\, \left(\sum_{k>0} t_k z^k\right) \,e^{-a \widehat{L}_{-1}} = \sum_{k>0} t_k z^k+ a \sum_{k>0} k\, t_k z^{k-1}+\frac{a^2}{2}\sum_{k>0} k(k-1)\, t_k z^{k-2}+\dots\\
=\sum_{k>0} t_k \left((z+a)^k-a^k\right),
\ee
we can construct a matrix model, which describes the Hodge tau-function, when $N$ tends to infinity:\footnote{This can also be considered as a consequence of (\ref{dtr4}).}
\be\label{MMHodge1}
\widetilde{\tau}_{Hodge}^N\equiv\beta^{-\frac{4}{3}\widehat{L}_0} e^{\widehat{L}_{-}^{(2)}}\tau^N_{(1)}({\bf t})\\
={\mathcal P}^{-1}\int_{N\times N} \left[d Z\right]
\exp\left( -\Tr \widetilde{W}_{(1)}+\sum_{k=1}^\infty \beta^{-\frac{4}{3}k}t_k\, \Tr \left(\left({ Z}^{-1}-\beta\right)^k-(-\beta)^k\right)\right),
\ee
where the potential $\widetilde{W}_{(1)}$ is given by (\ref{1pot}).

One can take another route and obtain the basis vectors for the Hodge tau-function (they were obtained in \cite{Kazarian}) directly from the Hurwitz basis vectors $\Phi_k^H$:

\be\label{bvhodge}
\Phi^{Hodge}_k(z)=u^{4(1-k)} \frac{u^{-1} z}{f_-(u^{-1} z)}\sqrt{f_-'(u^{-1} z)}\,e^{-\frac{2zu+z^2}{2(u+z)^2u^3}} \,\Phi_k^{H} \left(u^{-3}f_-(u^{-1} z)\right) \\
=\frac{u^{4(1-k)}}{\sqrt{1+u^{-1} z}}e^{\frac{ z}{2(u+ z)}-\frac{2zu+z^2}{2(u+ z)^2u^3}}\,\sum_{i=0}^\infty e^{\frac{u^3}{2}\left(\left(i-k+\frac{1}{2}\right)^2-\left(k-\frac{1}{2}\right)^2\right)}\frac{\left(\frac{z}{(u+z)u^3}e^{-\frac{z}{u+z}}\right)^{i-k+1}}{i!}, 
\ee
or as a transformation of the vectors $\Phi_k^{(1)}$ and their integral representation (\ref{Int1}):
\be\label{Hodin}
\Phi^{Hodge}_k(z)=\beta^{-\frac{4}{3}(z\frac{\p}{\p z}+k-1)} e^{-\beta z^2 \frac{\p}{\p z}}\, \Phi_k^{(1)}(z)\\
=\frac{\left(z^{-1}+u^{-1}\right)^{k-1}}{\sqrt{2\pi u^3(1+zu^{-1})}}
\int_{-\infty}^\infty   d y\, \exp{\left(-\frac{1}{2u^3}\left(y+\left(k-\frac{1}{2}\right)u^3\right)^2+\frac{z}{(u+z)u^3} \left(e^{y}-1-y\right)\right)}.
\ee

The change of integration variable $y \mapsto y+ \log(1+u^{-1}z)$ yields
\be
\Phi^{Hodge}_k(z)=\frac{e^{-\frac{u^3}{2}\left(k-\frac{1}{2}\right)^2}}{\sqrt{2\pi u^3}(1+zu^{-1})}
\int_{-\infty}^\infty   d y\, \left(z\,e^{y}\right)^{1-k}\exp (-w_{Hodge}),
\ee
where
\be
w_{Hodge}=\frac{1}{2u^3}\left(y+\log(1+u^{-1}z)\right)^2-\frac{z}{u^4}e^y+\frac{z}{(u+z)u^3} \left(1+y+\log(1+u^{-1}z)\right).
\ee
Thus we can again introduce times with the help of the unitary matrix integral
\be\label{MMHodge2}
\tau_{Hodge}^N({\bf t},u)={\mathcal P}_{Hodge}^{-1}\int_{\mathcal N} \left[d Z\right]
\exp\left( -\Tr W_{Hodge}+\sum_{k=1}^\infty t_k \Tr { Z}^{-k}\right),
\ee
where
\be
W_{Hodge}=\frac{1}{2u^3} \left(\log\left(\frac{Z^{\dagger} Z}{1+u^{-1}Z}\right)\right)^2+\frac{1}{2}\log Z^\dagger Z+\log\left(1+u^{-1}Z\right)-\frac{1}{u^4 Z^\dagger}\\
+\frac{Z}{(Z+u)u^3}\left(1-\log\left(\frac{Z^{\dagger} Z}{1+u^{-1}Z}\right)\right).
\ee
In particular, for the Miwa parametrization with $N=1$ we get 
\be
\tau_{Hodge}(\left[y\right],u)=\frac{1}{\pi}\frac{e^{-\frac{u^3}{8}}}{\sqrt{2\pi u^3}} \int_{\raise-1pt\hbox{$\mbox{\Bbbb C}$}}d^2 z \frac{e^{-\frac{1}{2u^3} \left(\log\left(\frac{u\,|z|^2}{u+z}\right)\right)^2+\frac{1}{2}\log|z|^2+\frac{1}{u^4\bar{z}}-\frac{z}{(u+z)u^3}\left(1-\log\left(\frac{u\,|z|^2}{u+z}\right)\right)}}{(1-z^{-1}y)(1+u^{-1}z)}\\
=1+{\frac {2\,{{\rm e}^{{u}^{3}}}-2-2\,{u}^{3}-{u}^{6}}{4\,{u}^{8}}
}{y}^{2}+{\frac {2+{{\rm e}^{3\,{u}^{3}}}+12\,{u}^{3}+9\,{u}^{6}-
3\,{{\rm e}^{{u}^{3}}}+2\,{u}^{9}-12\,{{\rm e}^{{u}^{3}}}{u}^{3}}{6\,{u}^
{12}}}{y}^{3}+O \left( {y}^{4} \right), 
\ee
which coincides with the first basis vector $\Phi_1^{Hodge}$ from (\ref{bvhodge}). Let us stress that it is not obvious that the matrix models (\ref{MMHodge1}) and (\ref{MMHodge2}) (which do not coincide with each other for finite $N$) have finite limits, when $u$ tends to zero, while the corresponding tau-function in this limit reduces to the KW tau-function.

In what follows we need the properties of the series $\Phi^{Hodge}_k(z)$.
Namely, let us show that
\begin{lemma}\label{SING}
\be\label{sing}
\Phi_k^{Hodge}(z)=\left(\frac{1}{z}+\frac{1}{u}\right)^{k-1} \times (\mbox{formal Taylor series both in $z$ and $u$}).
\ee
\end{lemma}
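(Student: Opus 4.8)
The plan is to read the statement off the one-variable integral representation (\ref{Hodin}). Set $\Psi(y)=-\frac{1}{2u^{3}}\bigl(y+(k-\tfrac12)u^{3}\bigr)^{2}+\frac{z}{u^{3}(u+z)}(e^{y}-1-y)$ and $\sigma^{2}:=u^{2}(u+z)$. Since $\sqrt{2\pi u^{3}(1+zu^{-1})}=u\sqrt{2\pi(u+z)}$, formula (\ref{Hodin}) reads
\[
\Phi_{k}^{Hodge}(z)=\Bigl(\tfrac1z+\tfrac1u\Bigr)^{k-1}J_{k}(z,u),\qquad
J_{k}(z,u):=\frac{1}{u\sqrt{2\pi(u+z)}}\int_{-\infty}^{\infty}dy\;e^{\Psi(y)},
\]
the integral being interpreted, as elsewhere in Section \ref{HURWITZ}, as the formal series produced by Gaussian term-by-term integration. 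It therefore suffices to show that $J_{k}(z,u)$ is a formal Taylor series in $z$ and $u$.

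The first step is the rescaling $y=\sigma s$, which cancels the prefactor against the Jacobian, $J_{k}(z,u)=\frac{1}{\sqrt{2\pi}}\int_{-\infty}^{\infty}ds\;e^{\Psi(\sigma s)}$. Expanding $e^{y}-1-y=\sum_{j\ge2}y^{j}/j!$ and using $\sigma^{2}/u^{3}=(u+z)/u$, one checks that the two $s^{2}$ contributions combine into exactly $-\tfrac12 s^{2}$ (this is the single point where the $u^{-3}$-singular pieces of $\Psi$ cancel), and that every remaining term is an honest polynomial in $z,u,s$ with no surviving negative powers of $u$; explicitly,
\[
\Psi(\sigma s)+\tfrac12 s^{2}=A(s,z,u)+\sqrt{u+z}\;B(s,z,u),
\]
where $A$ collects the even powers of $s$ and $B$ the odd powers, $A,B$ are polynomials in $(s,z,u)$, every monomial of $A$ has total $(z,u)$-degree $\ge3$ and every monomial of $B$ has total $(z,u)$-degree $\ge1$. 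The mechanism is that the odd terms $y^{2\ell+1}=(\sigma s)^{2\ell+1}$ and the linear term $-(k-\tfrac12)y$ each produce $\sqrt{u+z}$ times an integer power of $u+z$, so all the half-integer powers of $u+z$ are absorbed into the single overall factor $\sqrt{u+z}$ in front of $B$.

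Now expand $e^{\sqrt{u+z}\,B}=\sum_{n\ge0}\frac{(u+z)^{n/2}}{n!}B^{n}$ and integrate $\frac{1}{\sqrt{2\pi}}\int ds\;e^{-s^{2}/2}e^{A}B^{n}$. Since $e^{A}$ is even in $s$ and $B^{n}$ has the parity of $n$, the Gaussian integral vanishes for odd $n$, while for even $n$ the factor $(u+z)^{n/2}$ is an ordinary polynomial in $z,u$; hence every surviving contribution is a polynomial in $z$ and $u$. This is a legitimate manipulation of formal power series: for any prescribed total $(z,u)$-degree $d$ only finitely many Taylor terms $A^{m}/m!$ and $(u+z)^{n/2}B^{n}/n!$ contribute (they have $(z,u)$-degree $\ge 3m$ and $\ge 3n/2$ respectively), and a routine degree count shows that the $s$-degree of a monomial of prescribed $(z,u)$-degree stays bounded, so each coefficient of $z^{a}u^{b}$ in $J_{k}$ is a finite combination of Gaussian moments $\frac{1}{\sqrt{2\pi}}\int e^{-s^{2}/2}s^{2n}ds=(2n-1)!!$. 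Thus $J_{k}(z,u)$ is a formal Taylor series in $z$ and $u$ (with $J_{k}(0,0)=1$), which is exactly (\ref{sing}).

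The only real content is the structural claim of the second step: that after $y=\sigma s$ the exponent takes the form $-\tfrac12 s^{2}+A+\sqrt{u+z}\,B$ with $A$ even and $B$ odd in $s$ and both jointly polynomial in $(z,u)$. Once this is in hand the parity argument and the bookkeeping are routine, and verifying this decomposition (tracking where the $u^{-3}$ terms and the half-integer powers of $u+z$ go) is the step I expect to need the most care. As a remark, the same argument can instead be run on the closed-form series (\ref{bvhodge}): there $\Phi_{k}^{Hodge}(z)\bigl(\tfrac1z+\tfrac1u\bigr)^{1-k}=\sqrt{1-v}\,e^{v/2}\,e^{-v(2-v)/(2u^{3})}\,e^{(k-1)v}\sum_{i\ge0}\frac{1}{i!}e^{\frac{u^{3}}{2}i(i+1-2k)}\bigl(\tfrac{v}{u^{3}}e^{-v}\bigr)^{i}$ with $v=z/(u+z)$, and feeding the Gaussian identity $e^{\frac{u^{3}}{2}i^{2}}=\frac{1}{\sqrt{2\pi u^{3}}}\int e^{-y^{2}/(2u^{3})+iy}\,dy$ into the $i$-sum reproduces $J_{k}$; the integral form just makes the parity bookkeeping cleanest.
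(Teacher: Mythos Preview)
Your proof is correct and follows essentially the same route as the paper: start from the integral representation (\ref{Hodin}), rescale $y\mapsto u\sqrt{u+z}\,s$, observe that the exponent becomes $-\tfrac12 s^2$ plus terms which are polynomial in $z,u$ and whose odd-$s$ part carries exactly one factor of $\sqrt{u+z}$, and conclude by the Gaussian parity argument. Your treatment is in fact more careful than the paper's two-line version --- you spell out the $A$/$B$ decomposition and the degree bounds ensuring that each $z^au^b$-coefficient receives only finitely many contributions --- though note that $A$ and $B$ are not literally polynomials in $(s,z,u)$ but rather elements of $\mathbb{C}[s][[z,u]]$, which is what your degree-counting actually establishes and uses.
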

\begin{proof}
The statement of the lemma follows from the integral representation (\ref{Hodin}). Indeed, let us change the variable of integration $y \mapsto y\, u\, \sqrt{u+z}$. Then
\be
\Phi_k^{Hodge}(z)=\frac{\left(z^{-1}+u^{-1}\right)^{k-1}}{\sqrt{2\pi }e^{\left(k-\frac{1}{2}\right)^2\frac{u^3}{2}}
}
\int_{-\infty}^\infty   d y\, e^{-\frac{1}{2}y^2-\left(k-\frac{1}{2}\right)\, \sqrt{u+z}\,u\, y
+z \sum_{k=3}^\infty(u+z)^{\frac{k}{2}-1} u^{k-3} \frac{y^k}{k!} }
\ee
In the expansion of the integral about the Gaussian potential only even powers of $y$ survive.
The coefficient in front of each of them is a polynomial in $u$ and $z$. Thus, the only source of negative powers of $z$ and $u$ is the prefactor, which completes the proof.
\end{proof}

Thus, among all basis vectors $\Phi^{Hodge}_k$ only the first one has a finite limit at $u=0$:
 \be
\Phi^{Hodge}_1(z)\Big|_{u=0}=\frac{1}{\sqrt{2\pi z}} \int_{-\infty}^\infty   d y \exp\left(-\frac{y^2}{2z}+\frac{y^3}{3!}\right)=\Phi^{KW}_1(z).
\ee
However, it is easy to construct another admissible basis, regular at $u=0$:
\be
\widetilde{\Phi}_1^{Hodge}=\Phi_1^{Hodge},\\
\widetilde{\Phi}_2^{Hodge}=\Phi_2^{Hodge}-u^{-1}\Phi_1^{Hodge},\\
\widetilde{\Phi}_3^{Hodge}=\Phi_3^{Hodge}-2u^{-1}\Phi_2^{Hodge}+u^{-2}\Phi_1^{Hodge},\\
\widetilde{\Phi}_4^{Hodge}=\Phi_3^{Hodge}-3u^{-1}\Phi_3^{Hodge}+3u^{-2}\Phi_2^{Hodge}-u^{-3}\Phi_1^{Hodge},\\
\dots\\
\ee

\subsection{Simplified relation}\label{Sim}
In this section, we simplify the relation (\ref{FromKWtoH}).  Let us denote
\be
\widehat{V}_{i}=\beta^{-\frac{4}{3}\widehat{L}_0}\widehat{L}_+^{(i)}\beta^{\frac{4}{3}\widehat{L}_0}=\sum_{k>0}a_{k}^{(i)} u^{k} \widehat{L}_{k}
\ee
for $i=1,2$. Then, the relation (\ref{FromKWtoH}) connecting two tau-functions can be represented as 
\be
\tau_{Hodge}({\bf t};u)=e^{-\widehat{V}_{1}}e^{-\widehat{V}_{2}}\tau_{KW}({\bf t^o}).
\ee

This relation can be simplified with the help of the Virasoro constraints for the KW tau-function (\ref{KWvir}). Indeed, from (\ref{hmin}) it follows that the series $f_{+2}(z)$ includes only odd powers of $z$, so that the operator $\widehat{L}^{(2)}_{+}$ and, as a consequence, $\widehat{V}_{2}$, contain only positive even Virasoro operators:
\be\label{opop}
\widehat{V}_{2}=\sum_{k=1}^\infty a_{2k}^{(2)} u^{2k} \widehat{L}_{2k}.
\ee 

Let us forget for a while that the coefficients $a_{2k}^{(2)}$ are defined by (\ref{hmin}), and consider the operator $\widehat{V}_{2}$ for arbitrary $a_{2k}^{(2)}$. 
Then, from the Baker--Campbell--Hausdorff formula it follows that for any such operator there exists a unique first order operator
\be
\widehat{N}=\sum_{k=1}^\infty b_{k} u^{{2k}}\frac{\p}{\p t_{2k+3}}
\ee
such that
\be\label{Simp}
\exp\left(\widehat{N}\right)\cdot\exp\left(-\widehat{V}_{2}\right) =\exp\left(-2\sum_{k=1}^\infty a_{2k}^{(2)} u^{{2k}} \widehat{L}_{k}^{KW} \right).
\ee
Virasoro operators $\widehat{L}_{k}^{KW}$ annihilate the KW tau-function, thus from (\ref{Simp}) it follows that 
\be
\exp\left(-\widehat{V}_{2}\right)\tau_{KW}({\bf t})=\exp\left(-\widehat{N}\right)\tau_{KW}({\bf t}).
\ee
To find $\widehat{N}$ we switch to the ordinary differential operators (\ref{obdif}).
Then identity (\ref{Simp}) is equivalent to
\be\label{Simp1}
\exp(n(z))\cdot\exp\left(-\sum_{k=1}^\infty a_{2k}^{(2)} l_{2k} \right)=\exp\left(-\sum_{k=1}^\infty a_{2k}^{(2)} \left(l_{2k} -z^{-2k-3}\right)\right),
\ee
where $n(z)$ is a formal Taylor series in $z^{-1}$ (it is obvious that there are no central terms which would modify the relation).
Lemma \ref{LEM1} allows us to restore $n(z)$. Indeed, if $g(z)=-\sum_{k=1}^\infty a_{2k}^{(2)}z^{1-2k}$, then (\ref{Simp1}) is equivalent to (\ref{lem2}) for $k=4$ and $\alpha=1$, that is to (\ref{k4case}):
\be
n(z)=\frac{1}{3 (f(z))^3}-\frac{1}{3z^3}.
\ee

Finally, we come back to the particular coefficients $a_{2k}^{(2)}$. If the coefficients $a_{2k}^{(2)}$ are as in Section \ref{Con}, then $f(z)$ coincides with $f^{-1}_{+2}(z)$, and
from (\ref{hmin}) we have 
\be
n(z)=\frac{1}{3 (f_{+2}^{-1}(z))^3}-\frac{1}{3z^3}=\frac{1}{z^2}\coth\left(\frac{1}{z}\right)-\frac{1}{z}-\frac{1}{3z^3}=\sum_{k=2}^\infty \frac{2^{2k}B_{2k}}{(2k)!}\frac{1}{z^{2k+1}} \\
=-\frac{1}{45}\,{z}^{-5}+{\frac {2}{945}}\,{z}^{-7}-{\frac {1}{4725}}\,{z}^{-9}
+{\frac {2}{93555}}\,{z}^{-11}-{\frac {1382}{638512875}}\,{z}^{-13}+O \left( {z}^{-15} \right) 
\ee
where $B_k$ are the Bernoulli numbers. Coming back to the operators from $W_{1+\infty}$ we have  
\be
\widehat{N}=\sum_{k=2}^\infty\frac{2^{2k}B_{2k}}{(2k)!}u^{{2}(k-1)} \frac{\p}{\p t_{2k+1}}. 
\ee
This operator shifts all KdV times.\footnote{An operator of this type in the context of the Kontsevich--Witten tau-function appears in the description of the Weil--Petersson volume of the moduli spaces of bordered Riemann
surfaces \cite{MIRZAKHANI}.}
Thus we have proved 
\begin{lemma}
The relation (\ref{FromKWtoH}) holds if and only if 
\be\label{HKW}
\tau_{Hodge}({\bf t},u)=e^{-\widehat{V}_{1}}e^{-\widehat{N}}\tau_{KW}({\bf t}).
\ee
\end{lemma}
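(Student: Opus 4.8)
The plan is to reduce the asserted equivalence to the single operator identity $e^{-\widehat{V}_2}\tau_{KW} = e^{-\widehat{N}}\tau_{KW}$. Once this is known, both directions of the ``if and only if'' are immediate: using the rewriting of (\ref{FromKWtoH}) as $\tau_{Hodge} = e^{-\widehat{V}_1}e^{-\widehat{V}_2}\tau_{KW}$ (obtained by factoring $e^{\widehat{L}_+} = e^{\widehat{L}_+^{(2)}}e^{\widehat{L}_+^{(1)}}$ and conjugating by $\beta^{\pm\frac43\widehat{L}_0}$), both the relation (\ref{FromKWtoH}) and the relation (\ref{HKW}) read $\tau_{Hodge} = e^{-\widehat{V}_1}\cdot(\text{operator})\cdot\tau_{KW}$ with the two operators agreeing when applied to $\tau_{KW}$.

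To establish $e^{-\widehat{V}_2}\tau_{KW} = e^{-\widehat{N}}\tau_{KW}$ I would combine two facts. First, the structure of $\widehat{V}_2$: equation (\ref{hmin}) forces $f_{+2}(z)$, hence $\widehat{L}_+^{(2)}$, to contain only odd powers of $z$, so after conjugation $\widehat{V}_2 = \sum_{k\geq1} a_{2k}^{(2)}u^{2k}\widehat{L}_{2k}$ involves only the even positive Virasoro modes, as in (\ref{opop}). Second, the KW Virasoro constraints $\widehat{L}_k^{KW}\tau_{KW}=0$ for $k\geq -1$ from (\ref{KWvir}), with $\widehat{L}_k^{KW} = \tfrac12\widehat{L}_{2k} - \tfrac12\,\partial/\partial t_{2k+3} + \tfrac1{16}\delta_{k,0}$. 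Now I run Baker--Campbell--Hausdorff inside the Lie algebra spanned by $\{\widehat{L}_{2k}:k\geq1\}$ and $\{\widehat{J}_{2m+3}:m\geq0\}$. This algebra is a semidirect sum: the $\widehat{L}_{2k}$ close under $[\cdot,\cdot]$, the $\widehat{J}_{2m+3}$ form an abelian ideal, and $[\widehat{L}_{2k},\widehat{J}_{2m+3}] = -(2m+3)\widehat{J}_{2(k+m)+3}$ stays inside the ideal by (\ref{Vircur}). Since $-2\sum a_{2k}^{(2)}u^{2k}\widehat{L}_k^{KW} = -\widehat{V}_2 + \sum a_{2k}^{(2)}u^{2k}\widehat{J}_{2k+3}$ lies in this algebra, the splitting of the semidirect product gives a unique first-order operator $\widehat{N} = \sum_{k\geq1} b_k u^{2k}\,\partial/\partial t_{2k+3}$ with $\exp(\widehat{N})\exp(-\widehat{V}_2) = \exp(-2\sum a_{2k}^{(2)}u^{2k}\widehat{L}_k^{KW})$; evaluating both sides on $\tau_{KW}$ and using the constraints annihilates the operator on the right, which is exactly the desired identity. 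To make $\widehat{N}$ explicit I pass to the differential-operator realization (\ref{obdif}), where the displayed group identity becomes (\ref{Simp1}); this is precisely the $k=4$, $\alpha=1$ instance of Lemma \ref{LEM1}, i.e. (\ref{k4case}), giving $n(z) = \tfrac1{3f(z)^3} - \tfrac1{3z^3}$ with $f = f_{+2}^{-1}$, and (\ref{hmin}) rewrites this as $\tfrac1{z^2}\coth\tfrac1z - \tfrac1z - \tfrac1{3z^3}$, whose Bernoulli-number expansion yields $\widehat{N} = \sum_{k\geq2}\frac{2^{2k}B_{2k}}{(2k)!}u^{2(k-1)}\,\partial/\partial t_{2k+1}$.

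The step I expect to be the main obstacle is the bookkeeping in the Baker--Campbell--Hausdorff manipulation. One has to be certain (i) that no central (anomalous) term from (\ref{Vircom}) or (\ref{Heis}) enters to spoil the reorganization --- it does not, since only strictly positive modes $\widehat{L}_{2k}$ ($k\geq1$) and $\widehat{J}_{2m+3}$ ($m\geq0$) ever appear, where those central extensions vanish --- and (ii) that the solution $\widehat{N}$ really lies in the abelian ideal (no $t_j\,\partial/\partial t_l$ or purely multiplicative pieces), which is what the semidirect-product structure guarantees. Checking that the $u$-grading is consistent is routine: each $\widehat{L}_{2k}$ and each $\widehat{J}_{2k+3}$ carries $u^{2k}$, so $\widehat{N}$ comes out a power series in $u^2$ with no constant term, matching the claimed form.
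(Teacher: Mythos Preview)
Your proposal is correct and follows essentially the same route as the paper: rewrite (\ref{FromKWtoH}) as $\tau_{Hodge}=e^{-\widehat V_1}e^{-\widehat V_2}\tau_{KW}$, use BCH to factor $e^{\widehat N}e^{-\widehat V_2}=\exp(-2\sum a_{2k}^{(2)}u^{2k}\widehat L_k^{KW})$, invoke the KW Virasoro constraints, and then identify $\widehat N$ via the $k=4$ case of Lemma~\ref{LEM1} together with (\ref{hmin}). Your semidirect-product justification of the BCH step (and the check that no central terms intrude because only strictly positive modes occur) is a welcome elaboration of what the paper leaves implicit.
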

In the next section we will show that this relation is satisfied, at least up to a constant factor.

\subsection{Proof of Theorem \ref{THEOR}}\label{PR1}
As it was shown in the previous section, the statement of Theorem \ref{THEOR} is equivalent to the relation
\be\label{HKW1}
\tau_{KW}({\bf t})=C(u)e^{\widehat{N}}e^{\widehat{V}_{1}}\tau_{Hodge}({\bf t},u).
\ee
Let us prove it using the description of the Grassmannian considered in Section 2. Namely, we will show that the vectors
\be
\Omega_k= u^{k-1}e^{\tilde{n}} e^{v_{1}} \Phi_k^{Hodge},\,\,\,\,\, k>0
\ee
belong to the KW space of the Grassmannian
\be\label{relphi}
\Omega_k\in {\mathcal W}_{KW}.
\ee
Here
\be
v_{1}=\sum_{k>0}a_{k}^{(1)} u^{{k}} {l}_{k}
\ee
and
\be
\tilde{n}=u^{-3}\, n(u^{-{1}}z).
\ee
are the counterparts of the operators $\widehat{V}_1$ and $\widehat{N}$.
According to Section \ref{MiwaGrass}, the relation (\ref{relphi}) guarantees that two tau-functions coincide up to a constant factor. 

For small $k$ the relation (\ref{relphi}) can be checked perturbatively. For example, 
\be
\Omega_1=
\left(1+\frac{4}{15z^2}u^2+\left(\frac{47}{1620}-\frac{2}{135z^6}\right)u^3+\left(\frac{1}{4050z^{10}}+\frac{1}{63z^4}\right)u^4+\cdots\right)\Phi_1^{KW}(z)\\+\left(\frac{2}{3}u-\frac{1}{45z^4}u^2+\frac{4}{45z^2} u^3-\left(\frac{23}{5670z^6}+\frac{673}{68040}\right)u^4+\cdots\right)\Phi_2^{KW}(z)\\ 
\Omega_2=\left(1+\frac{4}{3z^2}u^2+\left(\frac{1361}{1620}-\frac{1}{27z^6}\right)u^3+\left(\frac{1}{4050z^{10}}+\frac{1097}{4725z^4}\right)u^4+\cdots\right)\Phi_1^{KW}(z)\\+\left(\frac{5}{3}u-\frac{1}{45z^4}u^2+\frac{32}{45z^2} u^3+\left(-\frac{787}{28350z^6}+\frac{10877}{68040}\right)u^4+\cdots\right)\Phi_2^{KW}(z) 
\ee
and, since $b_{KW}=z^{-2}$ is the Kac--Schwarz operator for the KW tau-function, the relation (\ref{relphi}) for $k=1$ and $k=2$ holds at least up to $u^4$. The main idea of the proof is to show that all $\Omega_k$ can be represented in a similar way to all orders in $u$.

To prove (\ref{relphi}) we use Lemma \ref{SING}. It shows that the vector $u^{k-1}\Phi_k^{Hodge}$ can be considered as a formal series in variable $u$. Let us introduce a new variable $\eta=u/z$. Then, in the variables $u$ and $\eta$ the basis vector $\Phi_k^{Hodge}$ from (\ref{bvhodge}) is given by 
\be
u^{k-1}\Phi_k^{Hodge}=\frac{e^{G}}{\sqrt{2 \pi  u^3 \eta^{-1}}}\int_{-\infty}^\infty   d y\, \exp\left({-\frac{y^2}{2 u^3}+\left(\frac{1}{2}-k\right)(y-T)+\frac{e^{y-T}}{u^3}}\right),
\ee
where
\be
T=\frac{1}{1+\eta}+\log(1+\eta)
\ee
and
\be
G=-\frac{u^3}{2}\left(k-\frac{1}{2}\right)^2+\frac{1}{2u^3}\left(\frac{\eta^2}{(1+\eta)^2}-1\right)-\log(1+\eta).
\ee
Let us change the variable of integration $y \mapsto u\,y+T$:
\be
u^{k-1}\Phi_k^{Hodge}=\frac{e^{{G-\frac{T^2}{2 u^3}}}}{\sqrt{2 \pi u \eta^{-1} }}\int_{-\infty}^\infty   d y\, \exp\left({-\frac{y^2}{2 u}-y\,u\left(k-\frac{1}{2}+\frac{T}{u^3}\right)+\frac{e^{yu}}{u^3}}\right).
\ee
Action of the operator $\exp(v_{1})$ on the basis vectors is defined by (\ref{viract}).  It maps the variable $\eta$ to
\be
\eta \mapsto \frac{1}{f_{+1}(\eta^{-1})}=\frac{e^\eta \sinh(\eta)}{\eta}-1,
\ee
so that $T$ maps to
\be
\widetilde{T}=\exp \left({v_{1}}\right)\,T\,\exp\left({-v_{1}}\right)=\frac{2\eta}{e^{2\eta}-1}+\log \left(\frac{e^{2\eta}-1}{2\eta}\right)= 
1+\frac{1}{2}\eta^2+O(\eta^4).
\ee
Combining it with expression 
\be
\tilde{n}=\frac{1}{u^3}\left(\eta^2 \coth \eta-\eta-\frac{\eta^3}{3}\right)
\ee
 after some calculations we obtain 
\be\label{newint}
\Omega_k=\frac{e^{\widetilde{G}-\frac{\eta^3}{3u^3}}}{\sqrt{2 \pi u \eta^{-1} }}\int_{-\infty}^\infty   d y\, \exp\left(-\widetilde{W}\right),
\ee
where
\be
\widetilde{G}=-\frac{u^3}{2}\left(k-\frac{1}{2}\right)^2+\frac{1}{u^3}\left(\eta^2\frac{e^{4\eta}+1}{\left(e^{2\eta}-1\right)^2} -\eta\frac{e^{2\eta}+1}{e^{2\eta}-1}-\frac{\widetilde{T}^2}{2}\right)\\
+\log\left(\left(\frac{\eta^{-1}}{f_{+1}(\eta^{-1})}\right)^{\frac{3}{2}}\sqrt{f_{+1}'(\eta^{-1})}\,\frac{2\eta}{e^{2\eta}-1}\right)
\ee
and
\be
\widetilde{W}={-\frac{y^2}{2 u}-y\,u\left(k-\frac{1}{2}+\frac{\widetilde{T}}{u^3}\right)+\frac{e^{yu}}{u^3}}.
\ee
Since both $v_1$ and $\tilde{n}$ are series in $u$, the vectors (\ref{newint}) are still the formal Taylor series in the variable $\eta$ with coefficients depending on both positive and negative powers of $z$.

The series $\widetilde{T}$ is invariant under a sign change of $\eta$:
\be
\widetilde{T}\stackrel{\eta \to -\eta}{\longmapsto} - \frac{2\eta}{e^{-2\eta}-1}+\log \left(-\frac{e^{-2\eta}-1}{2\eta}\right)=\frac{2\eta\, e^{2\eta}}{e^{2\eta}-1}+\log \left(\frac{e^{2\eta}-1}{2\eta}\right)-2 \eta=\widetilde{T}.
\ee
Thus, it contains only even powers of $\eta$. It is easy to show that $\widetilde{G}$ is also invariant under the change $\eta\mapsto -\eta$ so that its expansion is
\be
\widetilde{G}=-\frac{u^3}{2}\left(k-\frac{1}{2}\right)^2-\frac{1}{u^3}\left(1+\frac{1}{24}\eta^4-\frac{13}{3240}\eta^6+\dots\right)-\frac{1}{18}\eta^2+\frac{7}{1620}\eta^4-\frac{137}{382725}\eta^6+\dots
\ee
Since
\be
\widetilde{W}=\frac{1}{u^3}+\frac{y^3}{3!}-\frac{y\eta^2}{2u^2}+\sum_{m=4}^\infty\frac{y^m}{m!}u^{m-3}-yu\left(k-\frac{1}{2}-\frac{1}{u^3}\left( \frac{1}{36}\eta^4-\frac{1}{405}\eta^6+\dots\right)\right),
\ee
also contains only even powers of $\eta$, coming back to variables $u$ and $z$, we have
\be
\widetilde{G}+\widetilde{W}=\frac{y^3}{3!}-\frac{y}{2z^2}+u \sum_{i,j=0}^\infty \kappa_{ij}(u)y^i z^{-2j},  
\ee
where $\kappa_{ij}(u)$ are some polynomials in $u$.\footnote{More specifically
\be
\kappa_{00}=-\frac{u^2}{2}\left(k-\frac{1}{2}\right)^2,\\
\kappa_{10}=-\left(k-\frac{1}{2}\right),\\
\kappa_{20}=\kappa_{30}=0,\\
\kappa_{i0}=\frac{u^{i-4}}{4!},\,\,\,\,\,\mathrm{for} \quad i>3,\\
\kappa_{i1}=-\frac{1}{18}u\,\delta_{i,0},\\
\kappa_{i2}=\left(\frac{7}{1620}u^3-\frac{1}{24}\right)\delta_{i,0}+\frac{u}{36}\delta_{i,1},
\ee
and, for general $j>2$ we have $\kappa_{ij}=\left(\left(a_ju^3+b_j\right)\delta_{i,0}+c_j\,u\,\delta_{i,1}\right) u^{2(j-2)}$ for some rational $a_j$, $b_j$, and $c_j$.
}

Thus,
\be
\Omega_k= \frac{e^{-\frac{1}{3z^3}}}{\sqrt{2 \pi z }}\int_{-\infty}^\infty   d y\, \left(\sum_{i,j=0}^\infty \tilde{\kappa}_{ij} (u) y^i  z^{-2j} \right)
\exp\left(\frac{y^3}{3!}-\frac{y}{2z^2}\right), 
\ee
where $\tilde{\kappa}_{ij} (u)$ are some other polynomials in $u$; or, using expression (\ref{KWbint}) of the basic vectors for the KW tau-function, we have
\be
\Omega_k= \sum_{i,j=0}^\infty \tilde{\kappa}_{ij} (u)  z^{-2j} (-1)^i \Phi^{KW}_i(z).
\ee
Because $z^{-2}$ is the Kac--Schwarz operator, this is a combination of the basis vectors for the KW tau-function, which completes the proof of Theorem \ref{THEOR}.


\subsection{Kac--Schwarz description of the Hodge tau-function}

The relation (\ref{HKW}) (or, equivalently, relation (\ref{HKW1})) allows us to obtain the Kac--Schwarz operators for the Hodge tau-function. Indeed, the Kac--Schwarz operators for two tau-functions are related by the conjugation:
\be\label{KSH3}
a_{Hodge}=e^{-v_{1}}\,e^{-\tilde{n}}\,a_{KW}\,e^{\tilde{n}}\,e^{v_{1}},\\
b_{Hodge}=e^{-v_{1}}e^{-\tilde{n}}\,b_{KW}\,e^{-\tilde{n}}e^{v_{1}}.
\ee
Since both $n(z)$ and $b_{KW}$ are series in $z$ with constant coefficients, they commute. Thus, we have
\be
b_{Hodge}=\left(\frac{q}{u}\right)^2,
\ee
where 
\be\label{Asq}
q\equiv u\,e^{-v_{1}}\,z^{-1}\,e^{v_{1}}=\widetilde{f}_{+1}(\eta)=\eta-\frac{2}{3}\eta^2+\cdots
\ee
with $\eta\equiv u/z$ and the series $\widetilde{f}_{+1}$ is related to $f_{+1}$ by (\ref{ftilde}). Thus, the series $q$ is defined by an equation, which follows from (\ref{negop}):
\be\label{qeq}
\eta=\frac{e^{q}\sinh\left(q\right)}{q}-1.
\ee
To solve it we use the ansatz 
\be
q=\frac{S_--\frac{1}{1+\eta}}{2}.
\ee
On substitution of this ansatz into (\ref{qeq}) we get an equation
\be
S_-e^{-S_-}=\frac{1}{1+\eta}e^{-\frac{1}{1+\eta}},
\ee
where $S_-$ is the solution, which corresponds to the asymptotic (\ref{Asq}) for small $\eta$.
Thus,
\be
q=\frac{S_--S_+}{2},
\ee
where $S_\pm$ are two solutions of the equation 
\be
Se^{-S}=\frac{1}{1+\eta}e^{-\frac{1}{1+\eta}}
\ee
in the vicinity of the the ramification point $\eta=0$, $S=1$. They are
\be
S_+=\frac{z}{u+z}=\sum_{k=0}^\infty (-1)^k \eta^{k},\\
S_{-}=1+\eta-\frac{1}{3}\,{\eta}^{2}+\frac{1}{9}\,{\eta}^{3}-{\frac {1}{135}}\,{\eta}^{4}-{
\frac {19}{405}}\,{\eta}^{5}+{\frac {43}{567}}\,{\eta}^{6}-{\frac {3827}{
42525}}\,{\eta}^{7}+O(\eta^{8}),
\ee
so that 
\be
q(z)=\eta-\frac{2}{3}\,{\eta}^{2}+\frac{5}{9}\,{\eta}^{3}-{\frac {68}{135}}\,{\eta}^{4}+{\frac {193}{
405}}\,{\eta}^{5}-{\frac {262}{567}}\,{\eta}^{6}+{\frac {19349}{42525}}\,{\eta}
^{7}+O(\eta^8).
\ee

Therefore, the Kac--Schwarz operator
\be
b_{Hodge}={z}^{-2}-\frac{4}{3}\,u {z}^{-3}+{\frac {14}{9}}\,u^2 {z}^{-4}-{\frac {236}{135}}\,u^3{
z}^{-5}+{\frac {29}{15}}\,u^4{z}^{-6}-{\frac {6008}{2835}}\,u^5 {z}^{-7}+O
 \left( {z}^{-8} \right)
\ee
define an infinite family of the Kac--Schwarz operators $j_k^{Hodge}= (b_{Hodge})^k$. This family corresponds to the family of operators $\widehat{J}_{k}^{Hodge}$:
\be
\widehat{J}_{k}^{Hodge}=\widehat{Y}_{j_k^{Hodge}}=e^{-\widehat{V}_{1}}e^{-\widehat{N}}\,\frac{\p}{\p t_{2k}}\,e^{\widehat{N}}e^{\widehat{V}_{1}}
=\mbox{res}_z \left(Q^{2k} \widehat{J}\left(z\right)\right)
\ee
where $Q=u^{-1}\widetilde{f}_{+1}(uz)$. They are
$u$-deformations of the operators $\widehat{J}_{k}^{KW}$, which define the KdV reduction of the KW tau-function:
\be
\widehat{J}_{k}^{Hodge}=\frac{\p}{\p t_{2k}}+O(u),
\ee
for example, the coefficients of the first operator from this family coincide with the coefficients of $b_{Hodge}$
\be
\widehat{J}_1^{Hodge}
=\frac{\p}{\p t_2}-\frac{4}{3}\,u \frac{\p}{\p t_3}+{\frac {14}{9}}\,u^2\frac{\p}{\p t_4}-{\frac {236}{135}}\,u^3\frac{\p}{\p t_5}+{\frac {29}{15}}\,u^4\frac{\p}{\p t_6}-{\frac {6008}{2835}}\,u^5 \frac{\p}{\p t_7}+O(u^6).
\ee
The operators $\widehat{J}_{k}^{Hodge}$, by construction, annihilate $\tau_{Hodge}$:
\be
\widehat{J}_{k}^{Hodge}\, \tau_{Hodge}=0, \,\,\,\,\,\,k>0.
\ee

Since $b^{Hodge}$ is not a polynomial, one can consider other combinations of $b_{Hodge}$ and its powers. For example, another combination will appear in case of conjugation described by the relation (\ref{FromKWtoH}) instead of (\ref{HKW}).
It seems that there is no canonical choice. This corresponds to the fact that according to (\ref{KSH3}), both $a_{Hodge}$ and $b_{Hodge}$ are ``gap-infinity''  operators in the notations of \cite{AMSvM}.

Let us describe the Kac--Schwarz operator $a_{Hodge}$. Namely, the conjugation with $e^{\tilde{n}}$ in (\ref{KSH3}) yields
\be
e^{-\tilde{n}}\,a_{KW}\,e^{\tilde{n}}=a_{KW}-z^{3}\left(\frac{\p}{\p z}\tilde{n}\right)
=a_{KW}+\frac{z}{u^2}\left(2\,\eta\coth\left(\eta\right)-\eta^{2}\coth\left(\eta\right)^2-1\right).
\ee
From the conjugation relation (\ref{virsm}) it follows that conjugation with $e^{v_1}$ yields 
\be
a_{Hodge}=g_{Hodge}\,\p_z-\frac{1}{2}g_{Hodge}'-z^{-1}g_{Hodge}-h_{Hodge},
\ee
where
\be
h_{Hodge}=\frac{2(1-S_-)(1-S_+)}{u(S_--S_+)}=\frac{1}{qu}\frac{u}{u+z}\left(\frac{u}{u+z}-2q\right)
\ee
and
\be
g_{Hodge}=-\frac{2z^2u}{q}\left(1+\frac{u}{z}\left(1-\frac{1}{2q}\right)\right).
\ee
By construction, the Kac--Schwarz operators $a_{Hodge}$ and $b_{Hodge}$ satisfy the same canonical commutation relation (\ref{Can}) as operators for the KW tau-function
\be\label{Can}
\left[a_{Hodge},b_{Hodge}\right]_-=2.
\ee

Explicit expressions for operators $l_k^{Hodge}=-\frac{1}{4}\left[(b_{Hodge})^{k+1},a_{Hodge}\right]_+$ allow us to construct the Virasoro constraints for the Hodge tau-function. Namely, the operators
\be
\widehat{L}_{k}^{Hodge}=\widehat{Y}_{l_k^{Hodge}}-\frac{u^2}{36}\delta_{k,-1}+\frac{1}{16}\delta_{k,0}
\ee
annihilate the Hodge tau-function:
\be
\widehat{L}_{k}^{Hodge}\tau_{Hodge}=0,~~~k\geq-1.
\ee
Here the operators $\widehat{Y}_{l_k^{Hodge}}$ can be represented as residues
\be
\widehat{Y}_{l_k^{Hodge}}=\frac{1}{2} \mbox{res}_{z=0}\left(Q^{2k-1} \left(A\,\widehat{W}^{(2)}(z)+B\,\widehat{J}(z)\right)\right),
\ee
where
\be
A=2+zu\left(2-\frac{1}{uQ}\right),\\
B=\left(\frac{z}{1+uz}\right)^2-2\,Q\frac{z}{1+uz}.
\ee

Operators $\widehat{L}_{k}^{Hodge}$ are deformations of the Virasoro operators for the KW tau-function
\be
\widehat{L}_{k}^{Hodge}=e^{-\widehat{V}_{1}}e^{-\widehat{N}}\,\widehat{L}_k^{KW}\,e^{\widehat{N}}e^{\widehat{V}_{1}}=\widehat{L}_{k}^{KW}+O(u),
\ee
in particular 
\be
2\widehat{L}_{-1}^{Hodge}=\widehat{L}_{-2}-\frac{\p}{\p t_1}+u\left(2\widehat{L}_{-1}-\frac{2}{3}\frac{\p}{\p t_2}\right)+u^2\left(\frac{8}{9}\widehat{L}_{0}-\frac{4}{9}\frac{\p}{\p t_3}-\frac{1}{18}\right)\\
-u^3\left(\frac{2}{27}\widehat{L}_{1}+\frac{38}{135}\frac{\p}{\p t_4}\right)+u^4\left(\frac{11}{405}\widehat{L}_{2}-\frac{64}{405}\frac{\p}{\p t_5}\right)+O(u^5),\\
2\widehat{L}_{0}^{Hodge}=\widehat{L}_{0}-\frac{\p}{\p t_3}+\frac{1}{8}+u\left(\frac{2}{3}\widehat{L}_{1}+2\frac{\p}{\p t_4}\right)-u^2\left(\frac{2}{9}\widehat{L}_{2}+\frac{26}{9}\frac{\p}{\p t_5}\right)\\
+u^3\left(\frac{14}{135}\widehat{L}_{3}+\frac{494}{135}\frac{\p}{\p t_5}\right)-u^4\left(\frac{22}{405}\widehat{L}_{4}+\frac{1751}{405}\frac{\p}{\p t_6}\right)+O(u^5).
\ee

By construction, the operators $\widehat{L}_k^{Hodge}$ and $\widehat{J}_k^{Hodge}$ belong to $W_{1+\infty}$ and satisfy the commutation relations of the Heisenberg--Virasoro algebra
\be
\left[\widehat{J}_{k}^{Hodge},\widehat{J}_{m}^{Hodge}\right]_-=0,\\
\left[\widehat{L}_{k}^{Hodge},\widehat{J}_{m}^{Hodge}\right]_-=-m\,\widehat{J}_{k+m}^{Hodge},\\
\left[\widehat{L}_{k}^{Hodge},\widehat{L}_{m}^{Hodge}\right]_-=(k-m)\widehat{L}_{k+m}^{Hodge}.
\ee

The Virasoro constraints for the Hodge tau-function should be equivalent to the polynomial recursion formula for the Hodge integrals, which can be considered as a Laplace transform of the cut-and-join equation \cite{Mul}.

There is another possibility to construct the Kac--Schwarz operators for the Hodge tau-function. Namely, they can be obtained by a conjugation of the operators $a_H$ and $b_H$, which were obtained in Section \ref{HURWITZ}. While operators $a_{Hodge}$ and $b_{Hodge}$ obtained here are of zeroth and first order, conjugation of $a_H$ and $b_H$ yields operators of infinite order (which, nevertheless belong to the same algebra of the Kac--Schwarz operators). 





\subsection{Quantum spectral curve}

In this section we discuss quantum (spectral) curves for the three tau-functions, considered in this paper. In the literature, there are different notations for quantum spectral curve, see e.g., \cite{SchwarzQC,AMMqc,Chervov,Eynqc,Gukov,Dijk,Sha}. Here we take the point of view similar to the one adopted in \cite{qc1,qc2}. If 
\be\label{SC}
A(x,y)=0,\,\,\,\, x,y\in {\mathbb C}\,\, (\mbox{or}\,\, {\mathbb C}^*)
\ee
is the spectral curve, which describes the partition function $Z({\bf t})$ of the model, then an operator such that:\footnote{We put $\hbar=1$.}
\be
{A}^*\left(z,\frac{\p}{\p z}\right) Z\left(\left[z\right]\right)=0
\ee
which is a deformation (quantization) of the spectral curve (\ref{SC}), defines quantum spectral curve. The partition function in the Miwa parametrization $t_k=\frac{1}{k} z^k$, which is annihilated by the operator ${A}^*$, is known as the principal specialization of this partition function. 

If the partition function is a tau-function of an integrable hierarchy, the principal specialization coincides with the first basis vector
\be
\tau\left(\left[z\right]\right)=\Phi_1(z).
\ee
Thus, quantum curves for tau-functions are closely related to the Kac--Schwarz operators. If the operator ${A}^*\left(z,\frac{\p}{\p z}\right)$ describing a quantum curve is the Kac--Schwarz operator, one can ``quantize'' it once again using (\ref{Ydif}) and the boson-fermion correspondence. As a result one obtains an operator $\widehat{D}\equiv\widehat{Y}_{A^*}$, for which the full tau-function is an eigenfunction:
\be\label{QQ}
\widehat{D} \,\tau({\bf t})= C\, \tau({\bf t})
\ee
for some eigenvalue $C$.

For the Hurwitz tau-function the quantum curve was constructed in \cite{qc1}. Namely, the quantum curve operator is a quantization of the Lambert curve, and coincides with the Kac--Schwarz operator $a_H$ (\ref{Huraop}): 
\be
A^*_H=a_H=z \exp\left(\beta z \frac{\p}{\p z}\right) - z \frac{\p}{\p z}
\ee
so that, as follows from (\ref{KSHuw}),
\be
A^*_H\,\Phi_1^H(z)=0.
\ee
Then, quantization (\ref{Ydif}) and the boson-fermion correspondence yields an operator $\widehat{D}_H=\widehat{Y}_{A^*_H}$ such that
\be
\widehat{D}_H \,\tau_H({\bf t})=0.
\ee
Here 
\be
\widehat{D}_H=t_1-\widehat{L}_0+\beta \widehat{L}_{-1}+\frac{\beta^2}{2}\widehat{M}_{-1}+\cdots.
\ee

Equation (\ref{KWqc}) defines the quantum curve equation for the KW tau-functions:
\be
A^*_{KW}=a_{KW}^2-b_{KW}=\frac{e^{-\frac{1}{3z^3}}}{\sqrt{2\pi z}} \left(\left(z^3\frac{\p}{\p z}\right)^2- z^{-2}\right)\frac{\sqrt{2\pi z}} {e^{-\frac{1}{3z^3}}},
\ee
where, after the change of variables $x=z^{-2}$,  the operator in the brackets is given by a quantization of the Airy spectral curve $4\p_x^2-x$. The corresponding operator 
\be
\widehat{D}_{KW}=\widehat{M}_{-4}-2\widehat{L}_{-1}+t_4
\ee
annihilates the KW tau-function
\be
\widehat{D}_{KW} \,\tau_{KW}({\bf t})=0.
\ee
This equation is a combination of the constraints (\ref{KWfirst}) and (\ref{KWvir}).

The action of $W_{1+\infty}$ operators with positive energy can be translated to the action of operators from $w_{1+\infty}$ on the quantum curves. Thus, we claim that the quantum curve for the Hodge tau-function can be obtained by a conjugation from the quantum curve for the Hurwitz tau-function:
\be
A^*_{Hodge}= \beta^{-\frac{4}{3}l_0}e^{v_-}e^{ -\sum_{k=1}^\infty\frac{k^{k-2}}{k!} \beta^{k-1} z^k}\,A^*_{H}\, e^{ \sum_{k=1}^\infty\frac{k^{k-2}}{k!} \beta^{k-1} z^k}e^{-v_-} \beta^{\frac{4}{3}l_0},
\ee
where
\be
v_-=\sum_{k=1}^\infty a_{-k} \beta^{k} l_{-k}.
\ee

\newpage
\section{Concluding remarks}\label{CONC}

In this paper we have proved a relation, which connects three tau-functions of enumerative geometry by operators from $GL(\infty)$. However, there are still many open questions. First of all, Conjecture \ref{CONJ} remains unproven. An approach we adopt in this paper seems to be not suitable to prove that the constant $C(u)$ from Theorem \ref{THEOR} is equal to one.

It is natural to consider some relatively small subgroups of the $GL(\infty)$ and to investigate the corresponding families of tau-functions. The simplest subgroup corresponds to the Heisenberg--Virasoro algebra. However, all tau-functions, that can be obtained by the action of the group elements of the Heisenberg--Virasoro group on the trivial tau-function are too simple to describe partition functions that are interesting for applications. Namely, they are of the form
$$
\tau({\bf t})=\exp \left(\sum_{i,j} A_{ij} t_i t_j+ \sum_{j} B_j t_j\right)
$$
for some constant $A_{ij}$ and $B_j$. We claim that to obtain a family of tau-functions that is important for applications, it is enough to consider the group elements, which include also an operator $\widehat{M}_k$ from the $W^{(3)}$ algebra. Examples of such tau-functions are given by the three tau-functions considered in this paper. The method used for construction of the normal matrix model, developed here, can be generalized to other tau-functions, which are described by the group elements of the Heisenberg-$W^{(3)}$ algebra. It would also be interesting to find a relation between the operator representation for the KW tau-function obtained here and the similar (but essentially different) representation, obtained in \cite{AKW}.

Another question, which remains beyond the scope of this work, is the relation between obtained Virasoro constraints and corresponding matrix models. Usually the Virasoro constraints are related to simple symmetries of matrix integrals \cite{MMVir}, but we are unable to derive the Virasoro constraints for the obtained matrix integrals in this simple way. 

It would be natural to try to continue the chain of conjugations (\ref{KSH3}) and to construct the Kac--Schwarz operators of the zeroth and first orders for the Hurwitz tau-function. Unfortunately, the naive conjugation seems to give divergent results. 

The developed approach can be applied to the so-called $r$-spin Hurwitz numbers. To describe them, one should consider the generalized Kontsevich models instead of the KW tau-function \cite{ToA}.

\section*{Acknowledgments}
This work was supported in part by RFBR grant 12-01-00482, by the ERC Starting Independent Researcher Grant StG No. 204757-TQFT and
by Federal Agency for Science and Innovations of Russian Federation. The author is grateful to all participants of the MPG ITEP seminar for useful comments.

\newpage


\begin{thebibliography}{37}

  
\bibitem{Kazarian}  
M.~Kazarian,
``KP hierarchy for Hodge integrals,''
Adv.\ Math.\ {\bf 221} (2009) 1-21,
arXiv:0809.3263 [math.AG].

  \bibitem{MMKH}
 A.~Mironov and A.~Morozov,
  ``Virasoro constraints for Kontsevich-Hurwitz partition function,''
  JHEP {\bf 0902} (2009) 024,
  arXiv:0807.2843 [hep-th].
  
  
\bibitem{BM}
  V.~Bouchard and M.~Marino,
  ``Hurwitz numbers, matrix models and enumerative geometry,''
  Proc.\ Symp.\ Pure Math.\  {\bf 78} (2008) 263,
  arXiv:0709.1458 [math.AG].


\bibitem{AZ}
A.~Alexandrov and A.~Zabrodin,
  ``Free fermions and tau-functions,''
  J.\ Geom.\ Phys.\  {\bf 67} (2013) 37,
  arXiv:1212.6049 [math-ph].
  
 \bibitem{BBT} O. Babelon, D. Bernard and M. Talon,
{\it Introduction to classical integrable systems},
Cambridge University Press, 2003. 

\bibitem{MorUFN}
  A.~Morozov,
  ``Integrability and matrix models,''
  Phys.\ Usp.\  {\bf 37} (1994) 1,
  arXiv:hep-th/9303139.  
  
  
\bibitem{Kyoto} E. Date, M. Jimbo, M. Kashiwara and T. Miwa,
``Transformation groups for soliton equations,''
in {\it Nonlinear integrable systems -- classical and quantum'},
eds. M. Jimbo and T. Miwa, World Scientific, pp. 39-120 (1983);
 M. Jimbo and T. Miwa, ``Solitons and
infinite dimensional Lie algebras,'' Publ. RIMS, Kyoto Univ.
{\bf 19} (1983) 943-1001.

\bibitem{JMbook} T. Miwa, M. Jimbo and E. Date, {\it 
Solitons: Differential Equations, Symmetries and Infinite
Dimensional Algebras}, Cambridge University Press, 2000.

\bibitem{Fukuma}
  M.~Fukuma, H.~Kawai and R.~Nakayama,
  ``Infinite dimensional Grassmannian structure of two-dimensional quantum gravity,''
  Commun.\ Math.\ Phys.\  {\bf 143} (1992) 371.
  
\bibitem{Bombay}
  V.~G.~Kac and A.~K.~Raina,
  ``Bombay Lectures on Highest Weight Representations of Infinite Dimensionsal Lie Algebras,''
  Adv.\ Ser.\ Math.\ Phys.\  {\bf 2} (1987) 1.

\bibitem{OP}  
 V.~Kac and A.~Radul,
  ``Quasifinite highest weight modules over the Lie algebra of differential operators on the circle,''
  Commun.\ Math.\ Phys.\  {\bf 157} (1993) 429
  arXiv:hep-th/9308153.

  
\bibitem{W3}
 A.~B.~Zamolodchikov,
  ``Infinite Additional Symmetries in Two-Dimensional Conformal Quantum Field Theory,''
  Theor.\ Math.\ Phys.\  {\bf 65} (1985) 1205
   [Teor.\ Mat.\ Fiz.\  {\bf 65} (1985) 347].

  \bibitem{HHK}
I.~P.~Goulden and D.~M.~Jackson,
``Transitive factorisations into transpositions and holomorphic mappings on the sphere'',
Proc. A.M.S.,
{\bf125} (1997), 51--60.

\bibitem{HHK1}
R.~Vakil,
 ``Enumerative geometry of curves via degeneration methods,''
Harvard Ph.D. thesis, 1997.
  
  \bibitem{Sato}
M. Sato, ``Soliton equations as dynamical systems on inÞnite dimensional Grassmann manifolds," RIMS Kokyuroku {\bf439} (1981) 30Ð40.

  
\bibitem{Segal}
G.~Segal, G.~Wilson, ``Loop groups and equations of KdV type," Publications Math\'ematiques de l'IH\'ES {\bf 61} (1985): 5-65.  

\bibitem{Kac}
  V.~Kac and A.~S.~Schwarz,
  ``Geometric interpretation of the partition function of 2-D gravity,''
  Phys.\ Lett.\ B {\bf 257} (1991) 329. 
  
  \bibitem{AMSvM}
M.~Adler, A.~Morozov, T.~Shiota and P.~van Moerbeke,
  ``New matrix model solutions to the Kac-Schwarz problem,''
  Nucl.\ Phys.\ Proc.\ Suppl.\  {\bf 49} (1996) 201
  arXiv:hep-th/9603066.
  
\bibitem{PM}
  F.~J.~Plaza Martin,
  ``Algebro-geometric solutions of the string equation,''
  arXiv:1110.0729 [math.AG].   
  
  

\bibitem{Kharchev}
  S.~Kharchev,
  ``Kadomtsev-Petviashvili hierarchy and generalized Kontsevich model,''
  arXiv:hep-th/9810091.

\bibitem{Shiota}
T.~Shiota,
``Characterization of Jacobian varieties in terms of soliton equations,''
Invent.Math., {\bf 83}, (1986), 333-382.

\bibitem{EqH}
  S.~Kharchev, A.~Marshakov, A.~Mironov and A.~Morozov,
  ``Landau-Ginzburg topological theories in the framework of GKM and equivalent hierarchies,''
  Mod.\ Phys.\ Lett.\ A {\bf 8} (1993) 1047;
   [Theor.\ Math.\ Phys.\  {\bf 95} (1993) 571]
   [Teor.\ Mat.\ Fiz.\  {\bf 95} (1993) 280]
  arXiv:hep-th/9208046.
  

  
  
  \bibitem{ELSV} T.~Ekedahl, S.~Lando, M.~Shapiro and A.~Vainshtein,
``Hurwitz numbers and intersections on moduli spaces of
curves,'' Invent.\ Math.\  {\bf 146} (2001) 297-327. arXiv:math/0004096 [math.AG].
    
  \bibitem{Mul}  M.~Mulase and N.~Zhang, ``Polynomial recursion formula for linear Hodge integrals,'' Communications in Number Theory and Physics {\bf 4}, (2010), 267-294,
arXiv:0908.2267.

\bibitem{GJV}
I.~Goulden, D.~Jackson, R.~Vakil,
``The Gromov-Witten potential of a point, Hurwitz numbers, and Hodge integrals,''
Proc.\ London\ Math.\ Soc.\ {\bf 83} (2001) 563-581,
 [math/9910004 [math.AG]].
 

\bibitem{Buryak}
A.~Buryak, ``Dubrovin-Zhang hierarchy for the Hodge integrals" arXiv:1308.5716.

\bibitem{Mumford}  
D.~Mumford,
``Towards enumerative geometry on the moduli space of
curves'', In: {\sl Arithmetics and Geometry} (M. Artin, J. Tate eds.), {\bf v.2},
Birkhauser, 1983, 271-328.

  \bibitem{FP}
  C.~Faber, R.~Pandharipande,
  ``Hodge integrals and Gromov - Witten theory,''
  Invent.\ Math.\ {\bf 139} (2000) 173-199, arXiv:math/9810173[math.AG]. 


\bibitem{TodaOk}
A.~Okounkov, ``Toda equations for Hurwitz numbers," 
Math. Res. Lett. {\bf 7} (2000) 447-453,
arXiv:math/0004128[math.AG]. 


\bibitem{Fromto}
  A.~Alexandrov,
  ``From Hurwitz numbers to Kontsevich-Witten tau-function: a connection by Virasoro operators,''
  Lett.\ Math.\ Phys.\  {\bf 104} (2014) 75
  arXiv:1111.5349 [hep-th].

\bibitem{Konts}
  M.~Kontsevich,
  ``Intersection theory on the moduli space of curves and the matrix Airy function,''
  Commun.\ Math.\ Phys.\  {\bf 147} (1992) 1.  
  
\bibitem{Witten}
 E.~Witten,
  ``Two-dimensional gravity and intersection theory on moduli space,''
  Surveys Diff.\ Geom.\  {\bf 1} (1991) 243.
  
 

  
  \bibitem{KS2}
  A.~S.~Schwarz,
  ``On some mathematical problems of 2-D gravity and W(h) gravity,''
  Mod.\ Phys.\ Lett.\ A {\bf 6} (1991) 611.
  
    
\bibitem{Dijk}
  R.~Dijkgraaf, L.~Hollands and P.~Sulkowski,
  ``Quantum Curves and D-Modules,''
  JHEP {\bf 0911} (2009) 047
  arXiv:0810.4157 [hep-th].  

  
  \bibitem{MMRP}
 A.~Alexandrov,
  ``Matrix Models for Random Partitions,''
  Nucl.\ Phys.\  {\bf B851 } (2011)  620-650.
  arXiv:1005.5715 [hep-th].

\bibitem{Morsh}
  A.~Morozov and S.~Shakirov,
  ``Generation of Matrix Models by W-operators,''
  JHEP {\bf 0904} (2009) 064
  arXiv:0902.2627 [hep-th].

\bibitem{MIRZAKHANI}
M. Mulase and B.Safnuk,  ``Mirzakhani's recursion relations, Virasoro constraints and the KdV hierarchy," math/0601194 (2006).



\bibitem{SchwarzQC}
  A.~Schwarz,
  ``Quantum curves,''
  arXiv:1401.1574 [math-ph].
  
\bibitem{AMMqc}
  A.~S.~Alexandrov, A.~Mironov and A.~Morozov,
  ``Unified description of correlators in non-Gaussian phases of Hermitean matrix model,''
  Int.\ J.\ Mod.\ Phys.\ A {\bf 21} (2006) 2481,
  arXiv:hep-th/0412099.
  
\bibitem{Chervov}
 A.~Chervov and D.~Talalaev,
  ``Quantum spectral curves, quantum integrable systems and the geometric Langlands correspondence,''
  arXiv:hep-th/0604128.
  
\bibitem{Eynqc}
  L.~Chekhov, B.~Eynard and O.~Marchal,
  ``Topological expansion of the Bethe ansatz, and quantum algebraic geometry,''
  arXiv:0911.1664 [math-ph].
 
\bibitem{Gukov}
 S.~Gukov and P.~Sulkowski,
  ``A-polynomial, B-model, and Quantization,''
  JHEP {\bf 1202} (2012) 070
  arXiv:1108.0002 [hep-th].  

\bibitem{Sha}
 M.~Mulase, S.~Shadrin and L.~Spitz,
  ``The spectral curve and the Schroedinger equation of double Hurwitz numbers and higher spin structures,''
  arXiv:1301.5580 [math.AG].

 \bibitem{qc1} 
  J.~Zhou,
  ``Quantum Mirror Curves for ${\mathbb C}^3$ and the Resolved Confiold,''
  arXiv:1207.0598 [math.AG].
  
 \bibitem{qc2}
   M.~Mulase and P.~Sulkowski,
  ``Spectral curves and the Schroedinger equations for the Eynard-Orantin recursion,''
  arXiv:1210.3006 [math-ph]. 
  
  
 \bibitem{AKW}
  A.~Alexandrov,
  ``Cut-and-Join operator representation for Kontsewich-Witten tau-function,''
  Mod.\ Phys.\ Lett.\ A {\bf 26} (2011) 2193,
  arXiv:1009.4887 [hep-th].
  
\bibitem{MMVir} 
  A.~Mironov and A.~Morozov,
  ``On the origin of Virasoro constraints in matrix models: Lagrangian approach,''
  Phys.\ Lett.\ B {\bf 252} (1990) 47.
  


\bibitem{ToA}
A.~Alexandrov, to appear.   




\end{thebibliography}
\end{document}


\bibitem{Mulase}
M.~Mulase, ``Algebraic theory of the KP equations," Perspectives in mathematical physics 3 (1994): 151-217.


\bibitem{Mth}
A.~Alexandrov, A.~Mironov and A.~Morozov,
  ``Instantons and merons in matrix models,''
  Physica D {\bf 235} (2007) 126
  [hep-th/0608228];
  ``M-theory of matrix models,''
  Theor.\ Math.\ Phys.\  {\bf 150} (2007) 153
   [Teor.\ Mat.\ Fiz.\  {\bf 150} (2007) 179]
  [hep-th/0605171].

\bibitem{Orlov} 
  A.~Y.~Orlov,
  ``Tau functions and matrix integrals,''
  math-ph/0210012.


\bibitem{Zograf}
P.~Zograf, ``Enumeration of Grothendieck's dessins and KP hierarchy,'' arXiv:1312.2538.

\bibitem{Orlovvir}  
  P.~G.~Grinevich and A.~Y.~Orlov,
  ``Flag Spaces in KP Theory and Virasoro Action on det $D_{j}$ and Segal-Wilson $\tau$-function,''
  math-ph/9804019.

  \bibitem{Towards}
  S.~Kharchev, A.~Marshakov, A.~Mironov, A.~Morozov and A.~Zabrodin,
  ``Towards unified theory of 2-d gravity,''
  Nucl.\ Phys.\ B {\bf 380} (1992) 181
  [hep-th/9201013].